\newtheorem{theorem}{Theorem}[section]
\newtheorem{lemma}[theorem]{Lemma}
\newtheorem{claim}[theorem]{Claim}
\newtheorem{proposition}[theorem]{Proposition}
\newtheorem{corollary}[theorem]{Corollary}
\newtheorem{definition}{Definition}
\newcommand{\comment}[1]{}
\newcommand{\tildeO}{\widetilde{O}}
\newcommand{\Z}{\mathbb{Z}}
\newcommand{\N}{\mathbb{N}}
\newcommand{\ed}{\Delta_\textrm{edit}}
\newcommand{\cost}{\textrm{cost}}
\newcommand{\AED}{\textrm{\bf {ED-UB}}}
\newcommand{\UB}{\textrm{\bf {GAP-UB}}}
\newcommand{\ncost}{\textrm{ncost}}
\newcommand{\editd}{d_\textrm{edit}}
\newcommand{\ulamd}{d_\textrm{Ulam}}
\newcommand{\approxfactorB}{840}
\newcommand{\approxfactorA}{1680}
\newcommand{\Ht}{{\widetilde{H}}}
\newcommand{\TE}{{\textrm{\bf SMALL-ED}}}
\newcommand{\Gt}{{\widetilde{G}}}
\newcommand{\DSR}{{\emph{DenseStripRemoval}}}
\newcommand{\SSES}{{\emph{SparseStripExtensionSampling}}}
\newcommand{\CA}{{\emph{CoveringAlgorithm}}}
\begin{document}


\setcounter{page}{0}

\title{Approximating Edit Distance Within Constant Factor in Truly Sub-Quadratic Time}

\author[1]{Diptarka Chakraborty\thanks{diptarka@comp.nus.edu.sg}}
\author[2]{Debarati Das\thanks{debaratix710@gmail.com}}
\author[3]{Elazar Goldenberg\thanks{elazargo@mta.ac.il}}
\author[4]{Michal Kouck{\'{y}}\thanks{koucky@iuuk.mff.cuni.cz}}
\author[5]{Michael Saks\thanks{msaks30@gmail.com}}
\affil[1]{School of Computing, National University of Singapore, Singapore}
\affil[2]{University of Copenhagen, Denmark}
\affil[3]{School of Computer Science, The Academic College of Tel Aviv-Yaffo, Israel}
\affil[4]{Computer Science Institute of Charles University, Prague, Czech Republic}
\affil[5]{Department of Mathematics, Rutgers University, Piscataway, NJ, USA}

\date{}

\maketitle

\begin{abstract}
	
		Edit distance is a measure of similarity of two strings based on
the minimum number of character insertions, deletions, and substitutions required to transform
one string into the other. 
The edit distance can be computed exactly using a dynamic 
programming algorithm that runs in quadratic time. Andoni, Krauthgamer and Onak (2010) gave
a nearly linear time algorithm that approximates edit distance within approximation factor 
$\text{poly}(\log n)$. 

In this paper, we provide an algorithm with running time 
$\tildeO(n^{2-2/7})$ that approximates the edit distance within a constant factor. 
	
\end{abstract}



\thispagestyle{empty}
\newpage

\section{Introduction}
\label{sec:intro}
\noindent
{\bf Exact computation of edit distance.}
The \emph{edit distance} (aka \emph{Levenshtein distance})~\cite{Lev65} between strings $x,y$,
denoted by $\editd(x,y)$, is the minimum number of character insertions, deletions, and substitutions needed to convert $x$ into $y$. It is a widely used distance measure between strings that finds applications in fields such as computational biology, pattern recognition, text processing,  and information retrieval. 
 The problems of efficiently computing $\editd(x,y)$, and of constructing an optimal {\em  alignment} (sequence of operations that converts $x$ to $y$), are of significant interest.

Edit distance can be evaluated exactly in quadratic time  via
dynamic programming (Wagner and Fischer~\cite{WF74}). Landau {\em et al.}~\cite{LMS98}
gave an algorithm that finds an optimal alignment in time $O(n+\editd(x,y)^2)$, improving on a previous $O(n \cdot \editd(x,y))$ algorithm of
Ukkonen~\cite{UKK85}.
Masek and Paterson\cite{MP80} obtained the first (slightly) sub-quadratic  $O(n^2 /\log n)$ time algorithm, and the current asymptotically fastest algorithm (Grabowski~\cite{G16}) runs in time
 $O(n^2\log \log n /\log^2 n)$.    
Backurs and Indyk~\cite{BI15} showed that a {\em truly sub-quadratic algorithm} ($O(n^{2-\delta})$ for some $\delta>0$) would imply a $2^{(1-\gamma)n}$ time algorithm
for CNF-satisfiability, contradicting the Strong Exponential Time Hypothesis (SETH). Abboud et al.~\cite{AHWW16} showed that even shaving an arbitrarily large polylog factor from $n^2$ would have the plausible, but apparently 
hard-to-prove, consequence that NEXP does not have non-uniform ${NC}^1$ circuits. For further ``barrier'' results, see ~\cite{ABW15,BK15}. 

\noindent
{\bf Approximation algorithms.}
There is a long line of work on {\em approximating} edit distance.
The exact $O(n+\editd(x,y)^2)$ time  algorithm of Landau {\em et al.}~\cite{LMS98,Saha14,CGK16} yields a  linear time $\sqrt{n}$-factor approximation.
This approximation factor was improved, first to $n^{3/7}$~\cite{BJKK04}, then to $n^{1/3+o(1)}$~\cite{BES06} and later to $2^{\widetilde{O}(\sqrt{\log n})}$~\cite{AO09}, all with slightly superlinear running time. 
Batu {\em et al.}~\cite{BEKMRRS03} provided an $O(n^{1-\alpha})$-approximation algorithm with running time $O(n^{\max\{\frac{\alpha}{2}, 2\alpha-1\}})$. The strongest result of this type is the $(\log n)^{O(1/\epsilon)}$-factor
approximation (for every $\epsilon>0$) with running time  $n^{1+\epsilon}$ of  Andoni {\em et al.}~\cite{AKO10}.
Abboud and Backurs~\cite{AB17} showed that a truly sub-quadratic deterministic time $1+o(1)$-factor approximation algorithm for  edit distance  would imply new circuit lower bounds.

Independent of our work,
Boroujeni {\em et al.}~\cite{BEGHS18} obtained a truly sub-quadratic {\em quantum} algorithm that provides a constant factor
approximation.
Their latest results ~\cite{BEGHS18a} are a $(3+\epsilon)$-factor with running time  $\tildeO(n^{2-4/21}/\epsilon^{O(1)})$ and
a faster $\tildeO(n^{1.708})$-time  with a larger constant factor approximation.

Andoni and Nguyen ~\cite{AN10} found a randomized algorithm that approximates Ulam distance of two permutations of $\{1,\ldots,n\}$ (edit distance with only insertions and deletions) within a (large) constant factor in time $\tildeO(\sqrt{n}+n/\ulamd(x,y))$, where $\ulamd(x,y)$ is the Ulam distance of the input; this was improved by Naumovitz {\em et al.}~\cite{NSS17} to a $(1+\varepsilon)$-factor approximation (for any  $\varepsilon>0$) with similar running time.

\noindent
{\bf Our results.} We present the first truly sub-quadratic time {\em classical} algorithm  that approximates edit distance within a constant factor.

\begin{theorem}\label{thm:main}
	%
	%
	There is a randomized algorithm \AED{} that on input strings $x,y$ of length $n$ over any alphabet $\Sigma$
outputs an upper bound on $\editd(x,y)$ in time $\tildeO(n^{12/7})$ that, with probability at least 
$1-n^{-5}$, is at most a fixed constant multiple of $\editd(x,y)$.  
\end{theorem}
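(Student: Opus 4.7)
\textbf{The plan} is to reduce $\AED$ to a gap decision procedure and attack the gap procedure via block decomposition with random sampling. Concretely, I would first design $\UB(x,y,k)$ which, given a target $k$, in time $\tildeO(n^{12/7})$ either returns an alignment of cost $O(k)$ or certifies $\editd(x,y) > k$ (with high probability). Running $\UB$ for $k = 2^0, 2^1, \ldots, 2^{\lceil \log n \rceil}$ and returning the smallest successful upper bound then proves Theorem~\ref{thm:main} at the cost of an $O(\log n)$ factor and a union bound over failure events. When $k \le n^{5/7}$, one can simply invoke the Landau-style exact $O(n+k^2)$ algorithm (which is essentially what $\TE$ provides), running in $\tildeO(n^{10/7})$; so the hard case is large $k$.

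For large $k$, partition $x$ into $m = n/b$ consecutive blocks $x_1,\ldots,x_m$ of length $b$ for a parameter $b$ to be tuned. In any alignment of cost at most $k$, each $x_i$ is matched against some $y[\ell_i : r_i]$; Markov's inequality implies that a large fraction of blocks satisfy $\editd(x_i, y[\ell_i : r_i]) = O(k/m)$, and moreover $(\ell_i, r_i)$ lies within an additive $O(k)$ slack around the diagonal positions $((i-1)b, ib)$. This pruning bounds the number of candidate target substrings per block, but computing exact distances from every $x_i$ to every candidate is still too expensive. The key idea is to sample a sparse set of ``anchor'' substrings of $y$, compute exact distances $\editd(x_i, \text{anchor})$ via $\TE$ with threshold $O(k/m)$, and then estimate $\editd(x_i, y[p:q])$ for any other candidate through the nearest anchor by triangle inequality. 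Anchor-to-candidate distances are self-edit distances inside $y$ alone and can be precomputed once.

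With the approximate weights $w_{i,p,q} \approx \editd(x_i, y[p:q])$ in hand, I would build a DAG with nodes $(i,p)$ and edges $(i-1,p) \to (i,q)$ of weight $w_{i,p,q}$ for allowed $(p,q)$, and compute a shortest path from $(0,0)$ to $(m,n)$ as the output upper bound. The block size $b$, anchor density, and diagonal slack are tuned so that the preprocessing cost (sum of $\TE$ calls), the triangle-inequality estimation cost, and the shortest-path computation each fit into $\tildeO(n^{12/7})$; with $k$ at its worst the balance dictates $b \approx n^{5/7}$ and correspondingly $\tildeO(n^{2/7})$ anchors per block's relevant window.

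\textbf{The main obstacle} is the analysis: showing that the DAG shortest path has cost $O(\editd(x,y))$ with high probability. This requires a structural lemma asserting that for every ``good'' block of the optimal alignment, with high probability over the anchor sample there is an anchor close enough that the triangle-inequality estimate loses only a constant factor; a careful accounting that the ``bad'' blocks (with cost $\gg k/m$) together contribute at most a constant fraction of $k$; and a guarantee that the additive approximation errors do not compound multiplicatively across the $m$ blocks. Calibrating the Markov bound, the anchor sampling probability, and the diagonal slack so that all three slack parameters are absorbed into a single universal constant-factor approximation is the technical heart of the argument.
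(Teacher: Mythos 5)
Your outer reduction (doubling over the gap threshold plus the exact $O(n+k^2)$ algorithm for small $k$) and your final DAG/shortest-path phase both match the paper's structure, but the covering step in the middle has a genuine gap: as described it is not subquadratic in the hard regime. First, the anchors buy nothing beyond grid discretization: two positionally overlapping substrings of $y$ are within edit distance of their positional shift (Proposition~\ref{prop:sym diff}), so ``estimating through the nearest anchor'' is the same as restricting the candidates to an aligned grid of spacing $\Delta$ and paying an additive $\Delta$ per block. To keep the total additive loss at $O(k)$ you need $\Delta = O(k/m)$ with $m=n/b$ blocks, hence about $m$ anchor/candidate positions per block window, and the \TE{} thresholds must be $\Omega(k/m)$ (both the good-block costs and the snapping error are of that order). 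The exact calls then cost about $m \cdot m \cdot t\bigl(b, k/(mb)\bigr) = \Theta(nk)$, which is $\Theta(n^2)$ when $k=\Theta(n)$ --- precisely the exhaustive approach the paper dismisses. Your own parameter balance confirms this: with $b\approx n^{5/7}$ and $\tildeO(n^{2/7})$ anchors per window you make $\approx n^{4/7}$ calls at cost $\approx (k/m)\cdot b$ each, i.e.\ $\approx nk$. Since $k\le n^{5/7}$ is already handled exactly, the regime where your scheme is subquadratic is exactly the regime where it is not needed.

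What is missing is a mechanism that avoids a separate thresholded edit-distance computation for essentially every block--candidate pair. The paper uses two complementary devices and proves that one of them always works. In the dense case the triangle inequality is applied through a pivot $x$-block (not through $y$-anchors): one pivot's distance computations certify all pairs in $\mathcal{X}\times\mathcal{Y}$ simultaneously, and the number of pivots is bounded because distinct pivots have disjoint sets of $\epsilon$-close candidates. For the remaining (sparse) blocks a second scale $w_2 \gg w_1$ is introduced: $O(\log^2 n)$ sparse $w_1$-blocks are sampled per $w_2$-interval, each has at most $d$ close candidates, and each resulting small box is extended along its diagonal to certify a single $w_2$-box that covers the optimal path across the whole $w_2$-interval. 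The correctness dichotomy (Claim~\ref{claim:main}: either the dense certificates $(45,15\theta)$-approximate the path, or there are enough sparse ``winners'' that the sampling finds one) is the technical heart of the result, and nothing in your proposal plays this role; the structural lemma you flag as the main obstacle (existence of an anchor close to each good block's match) is trivially true from the anchor spacing, and therefore cannot be the source of the needed time savings.
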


If the output is $U$, then  the algorithm has implicitly found an alignment of cost at most $U$. The algorithm can be  modified to explicitly output such an alignment. 

The approximation factor proved in this preliminary version is \approxfactorA, 
can be greatly improved by tweaking parameters.  
We believe, but have not proved, that with sufficient care
the algorithm can be modified (with no significant increase in the running time) 
to get $(3+\epsilon)$-approximation.

Theorem~\ref{thm:main} follows from:

\begin{theorem}
\label{thm:UB}
For every $\theta \in [n^{-1/5},1]$,
there is a randomized algorithm $\UB_{\theta}$ that on input strings $x,y$ of length $n$ outputs $u=\UB_{\theta}(x,y)$ such that:
(1) $\editd(x,y) \leq u$ and (2) on any input with $\editd(x,y) \leq \theta n$, 
$u \leq \approxfactorB \theta n$ with probability at least
$1-n^{-7}$.  The running time of $\UB_{\theta}$ is $\tilde{O}(n^{2-2/7}\theta^{4/7})$.
\end{theorem}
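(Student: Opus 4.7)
The plan is to output the cost of an explicitly constructed global alignment of $x$ with $y$, which makes property (1) automatic. To build this alignment, partition $x$ into $n/w$ consecutive windows of a common length $w$ (to be tuned; heuristically $w$ is a power of $n$ and $\theta$ chosen so the two runtime terms below balance). When $\editd(x,y)\le \theta n$, an averaging argument over any optimal alignment guarantees that a $(1-O(\theta))$-fraction of the $x$-windows have a \emph{candidate match} in $y$: a substring of length $\Theta(w)$ whose starting position lies in a diagonal band of width $O(\theta n)$ around the window, and whose edit distance from the window is $O(\theta w)$. It therefore suffices to discover, for most windows, at least one candidate match, use it as an anchor to align the surrounding region, and patch unaligned pieces with trivial insertions/deletions, incurring total extra cost $O(\theta n)$.

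The algorithm then samples $s$ windows of $x$ uniformly at random and, for each sample, enumerates all candidate matches inside the corresponding band of $y$ using bounded-error edit-distance computations. The crucial dichotomy drives the analysis: if a sampled window has \emph{few} candidate matches, each match serves as an anchor from which a long aligned run of $x$ and $y$ can be extended at manageable cost; if it has \emph{many} candidate matches, a periodicity argument in the spirit of \cite{CGK16,CGK18} forces the relevant strip of $y$ to have approximately-periodic structure of small period, which directly yields an alignment of that strip to the corresponding region of $x$ at cost $O(\theta \cdot |\text{strip}|)$. After processing a sampled window the matched region is subtracted from both $x$ and $y$ and the algorithm recurses on the residual instance. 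With sampling rate chosen so that each good window is hit with probability $1-n^{-\Omega(1)}$, a Chernoff plus union bound over the $n/w$ windows gives total uncovered mass $O(\theta n)$, yielding property (2) with the stated constant $\approxfactorB$ and failure probability at most $n^{-7}$.

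The main obstacle is the quantitative interaction between the two branches of the dichotomy under the subtraction/recursion: one must verify that removing approximately periodic strips from $y$ preserves the small relative edit-distance structure of the residual subinstance, and that the cost charged on each strip is proportional to $\theta$ times its length, so that the telescoping total remains $O(\theta n)$ rather than blowing up by a multiplicative factor from recursion depth. Balancing $w$ and $s$ against the per-sample enumeration cost and the anchor-extension cost so that the overall runtime is $\tildeO(n^{12/7}\theta^{4/7})$ is a delicate parameter optimization, and the lower bound $\theta \ge n^{-1/5}$ is precisely the range where this sampling approach beats both direct dynamic programming and the $O(n+k^2)$ exact algorithm of Landau \emph{et al.}~\cite{LMS98}, which is why the theorem is stated over that range.
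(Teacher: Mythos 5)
There is a genuine gap, and it sits exactly at the two places your plan leans hardest. First, the ``many candidate matches $\Rightarrow$ approximate periodicity $\Rightarrow$ align the strip at cost $O(\theta\cdot|\text{strip}|)$'' step is not justified and is false as stated: the candidate matches live anywhere in a diagonal band of width $O(\theta n)$, so they need not overlap one another, and without overlap the standard periodicity argument gives nothing about the structure of the strip. Worse, even if the $y$-strip were genuinely near-periodic, that is a statement about $y$ alone; it does not bound the cost of aligning that strip to the corresponding region of $x$. The budget $\theta n$ is only a global average — under an optimal alignment individual strips can have normalized cost far exceeding $\theta$ — so you cannot charge $O(\theta)$ per symbol on a strip merely because the window had many matches; the algorithm must \emph{measure} the local cost, not presume it. Second, the subtraction-and-recursion scheme (remove the matched/periodic regions and recurse on the residual) is precisely the step you flag as ``the main obstacle,'' and it is left unresolved; controlling how removal distorts the residual instance's alignment structure, and preventing the patched costs from accumulating across recursion levels, is where such an argument typically breaks, and nothing in the proposal addresses it.

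For contrast, the paper never uses periodicity or removal. Its dichotomy is resolved differently: when a width-$w_1$ window is \emph{dense} (many $y$-candidates within $\epsilon_i$), it is used as a pivot, all edit distances from the pivot to $x$- and $y$-candidates are computed once, and the triangle inequality certifies \emph{every} pair in $\mathcal{H}(x_I,2\epsilon_i)\times\mathcal{V}(x_I,3\epsilon_i)$ with bound $5\epsilon_i$ — density is exploited to lower the \emph{amortized} cost per certified box, not to infer structure of $y$. When a window is \emph{sparse}, sampled sparse windows inside each $w_2$-interval have their few matches diagonally extended to certified $w_2$-boxes (Proposition~\ref{prop:extension}). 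Correctness is a per-$w_2$-strip either/or (Claim~\ref{claim:main}): if sampling never catches a ``winner,'' the counting inequality forces the dense-case boxes to form an adequately bounded cover. Finally, nothing is ever subtracted: \emph{all} certified boxes are fed to a global min-cost path computation on a shortcut graph (Lemma~\ref{thm:reduction}), which stitches the cover into an alignment of cost $O(\editd(x,y)+\theta n)$ and makes property (1) automatic, with no telescoping loss to control. Your anchor-extend-remove-recurse outline would need the periodicity claim repaired (or replaced by something like the paper's pivot/triangle-inequality certification) and a full analysis of the residual recursion before it could yield the stated constant $\approxfactorB$, the $1-n^{-7}$ bound, or the $\tildeO(n^{2-2/7}\theta^{4/7})$ runtime, which in the paper comes from balancing three parameters $w_1,w_2,d$ against three distinct cost terms rather than a single window length and sampling rate.
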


The name $\UB_{\theta}$ reflects that this is a "gap algorithm", which distinguishes inputs with
$\editd(x,y) \leq \theta n$ (where the output is at most $\approxfactorB \theta n$), and those with $\editd(x,y)>\approxfactorB \theta n$ (where the 
output is greater than  $\approxfactorB \theta n$).

Theorem~\ref{thm:main} follows via a routine construction of {\AED}  from $\UB_{\theta}$, presented in Section~\ref{sec:sum up}.
The rest of the paper is devoted to proving
Theorem~\ref{thm:UB}.

\noindent
{\bf Further developments.}
Since the initial circulation of our paper, improved algorithms for approximating edit distance have been obtained building on our techniques.
For every $\epsilon>0$, Andoni \cite{And18} presents an algorithms that computes a constant factor approximation 
to edit distance in time $O(n^{\frac{3}{2}+\epsilon})$. Brakensiek and Rubinstein \cite{BR20} and independently Kouck\'y and Saks \cite{KS20}
give for each $\epsilon>0$, an algorithm running in time $O(n^{1+\epsilon})$ that computes a constant factor approximation to edit distance
with additive error $n^{1-\beta}$, for some $\beta>0$. Eventually, Andoni and Nosatzki \cite{AN20} give for each $\epsilon>0$, an algorithm running in 
time $O(n^{1+\epsilon})$ that computes a constant factor approximation to edit distance of any two strings.

\noindent
{\bf The framework of the algorithm.}
We use a standard two-dimensional representation of edit distance.
Visualize $x$ as lying on a horizontal axis and $y$ as lying
on a vertical axis, with horizontal coordinate $i \in \{1,\ldots,n\}$ corresponding to $x_i$ and vertical component
$j$ corresponding to $y_j$.  The width $\mu(I)$ of interval $I \subseteq \{0,1,\ldots,n\}$ is  
$\max(I)-\min(I) = |I|-1$. Also,  $x_I$ denotes the substring
of $x$ indexed by $I\setminus\{\min(I)\}$.  (Note:  $x_{\min(I)}$ is not part of $x_I$, e.g., $x_{\{0,\ldots,n\}}=x$.  
This convention is motivated by Proposition~\ref{prop:def}.)
We refer to $I$ as an {\em $x$-interval} to indicate that it  indexes a substring of $x$, 
and $J$ as a {\em $y$-interval} to indicate that it indexes a substring of $y$. 
A {\em box} is a set $I\times J$ where $I$ is an $x$-interval and  $J$ is a $y$-interval; $I\times J$ corresponds to the substring pair $(x_I,y_J)$. $I \times J$ is a $w$-box if $\mu(I)=\mu(J)=w$. We often abbreviate $\editd(x_I,y_J)$
by $\editd(I,J)$. A {\em decomposition} of an $x$-interval $I$
is a sequence $I_1,\ldots,I_{\ell}$ of subintervals with
$\min(I_1)=\min(I)$, $\max(I_{\ell})=\max(I)$ and for $j \in [\ell-1]$,
$\max(I_j)=\min(I_{j+1})$. If all the subintervals $I_1,\ldots,I_{\ell}$ are of the same width $w$ then it is the $w$-decomposition of $I$
denoted by $\mathcal{I}_{w}(I)$. ($\mathcal{I}_{w}(\{0,\dots,n\})$ is denoted by $\mathcal{I}_{w}$.)

Associated to $x,y$ is a directed graph $G_{x,y}$ with edge costs called
a {\em grid graph} with vertex set $\{0,\ldots,n\} \times \{0,\ldots,n\}$   
and all  edges
of the form  $(i-1,j)\to (i,j)$ ($H$-steps), $(i,j-1) \to (i,j)$ ($V$-steps)
and $(i-1,j-1) \to (i,j)$ ($D$-steps).  Every H-step  or V-step costs 1, and D-steps
cost 1 if $x_i \neq y_j$ and 0 otherwise.  There is a 1-1 correspondence that maps a path  from $(0,0)$ to $(n,n)$ ({\em source-sink path}) to an {\em alignment} from $x$
to $y$, i.e.
a sequence of character deletions, insertions and substitutions that changes $x$ to $y$, where
an H-step  $(i-1,j)\to (i,j)$ means "delete $x_i$", a V-step  $(i,j-1)\to (i,j)$
 means "insert  $y_j$ between $x_i$ and $x_{i+1}$" and a D-step $(i-1,j-1)\to (i,j)$  means
replace $x_i$ by $y_j$, unless they are already equal.  We have:

\begin{proposition}
\label{prop:def}
The cost of an alignment is the sum of edge costs of its associated path $\tau$, $\cost(\tau)$, and  
$\editd(x,y)$ is equal to $\cost(G_{x,y})$, the minimal cost of a path from $(0,0)$ to $(n,n)$.
\end{proposition}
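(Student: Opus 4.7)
The plan is to verify the claimed bijection between alignments of $x$ into $y$ and directed paths in $G_{x,y}$ from $(0,0)$ to $(n,n)$, and then to check that this bijection is cost-preserving. Since the proposition is essentially the standard dynamic-programming formulation of edit distance, the work is mostly bookkeeping around the correspondence described in the paragraph preceding the statement.

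First I would fix a path $\tau = v_0 \to v_1 \to \cdots \to v_m$ with $v_0 = (0,0)$ and $v_m = (n,n)$, and read off the associated sequence of edit operations using the rules stated in the excerpt: an H-step $(i-1,j)\to(i,j)$ deletes $x_i$, a V-step $(i,j-1)\to(i,j)$ inserts $y_j$, and a D-step $(i-1,j-1)\to(i,j)$ substitutes $y_j$ for $x_i$ (a no-op when $x_i = y_j$). I would then prove by induction on $k$ that after executing the first $k$ operations, the current string equals $y_{1\ldots j_k}\, x_{i_k+1 \ldots n}$, where $v_k = (i_k, j_k)$. The base case $k=0$ is trivial, and the inductive step is immediate from the three cases above. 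In particular $v_m = (n,n)$ implies that the resulting sequence is indeed an alignment of $x$ into $y$, and each operation contributes exactly the cost of its edge, so $\cost(\tau)$ equals the number of non-trivial edits used.

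Conversely, given any alignment $\sigma$ of $x$ into $y$, I would construct a path by reading $\sigma$ left to right and appending an H-, V-, or D-step for each deletion, insertion, or substitution/match (in the natural order dictated by the positions of $x$ consumed and $y$ produced). A symmetric induction shows that after processing the first $k$ operations of $\sigma$ the current vertex is $(i_k, j_k)$, where $i_k$ counts $x$-characters consumed and $j_k$ counts $y$-characters produced; since $\sigma$ transforms $x$ into $y$, the path terminates at $(n,n)$, and its total edge cost is exactly the number of non-trivial operations, namely $\cost(\sigma)$. Composing the two constructions yields mutually inverse, cost-preserving maps between alignments and paths, so taking the minimum on both sides gives $\editd(x,y) = \cost(G_{x,y})$.

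There is no real obstacle; the only subtlety is to be consistent about the indexing convention ($x_I$ excludes $x_{\min(I)}$, so a step from column $i-1$ to column $i$ manipulates character $x_i$) and about the fact that a D-step across equal characters contributes $0$ to both the alignment cost and the edge cost. With those conventions fixed, the two inductions above together with the observation that each edge type maps to its corresponding edit operation yield the proposition directly.
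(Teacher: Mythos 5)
Your proof is correct: it simply makes explicit the standard path--alignment correspondence that the paper asserts without proof (Proposition~\ref{prop:def} is stated as a well-known fact, with the edge-to-operation dictionary given in the surrounding text). The two inductions you describe, tracking the invariant that after $k$ steps the current string is $y_{1\ldots j_k}x_{i_k+1\ldots n}$, are exactly the routine verification the paper leaves to the reader, so there is nothing to reconcile.
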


For $I,J \subseteq \{0,\ldots,n\}$, $G_{x,y}(I \times J) \cong G_{x_I,y_J}$ is the grid graph induced
on $I \times J$, and $\editd(I,J) = \cost(G_{x,y}(I \times J))$.
The natural high-level idea of $\UB_{\theta}$  appears (explicitly or implicitly) in previous work. The algorithm has
two phases. First, the {\em covering phase} identifies a set $\mathcal{R}$ of {\em certified boxes} which are 
pairs $(I\times J,\kappa)$, 
where $\kappa$ is
an upper bound on the {\em normalized edit distance} $\ed(x_I,y_J)=\editd(x_I,y_J)/\mu(I)$. ($\ed(I,J)$ is more convenient than $\editd(I,J)$ for
the covering phase.)
Second, the {\em min-cost path phase}, takes input $\mathcal{R}$ and uses a straightforward customized variant of dynamic programming   
to find an upper bound  $U(\mathcal{R})$ on $\editd(x,y)$ in time quasilinear in
 $|\mathcal{R}|$.
The central issue is to ensure that the covering phase outputs  $\mathcal{R}$ that is sufficiently informative so that $U(\mathcal{R}) \leq c \cdot \editd(x,y)$
for constant $c$, while running in sub-quadratic time.

\noindent
{\bf Simplifying assumptions.}
The input strings $x,y$ have equal length $n$. 
 (It is easy to reduce to this case: 
pad the shorter string to the length of the longer using a new symbol.  
The edit distance of the new pair  is between the original edit distance and 
twice the original edit distance. This factor 2 increase in approximation factor can be avoided by generalizing our algorithm to the case $|x| \neq |y|$, but we won't do this here.)  We assume  $n$ is a power of 2 (by padding both strings with a new symbol, which leaves edit distance unchanged).
We assume that $\theta$ is a (negative) integral power of 2.
The algorithm involves integer parameters $w_1,w_2, d$,  all of which are chosen to be powers of 2.

\noindent
{\bf Organization of the paper.} Section~\ref{sec:overview} is a detailed overview of the covering phase
algorithm and its analysis. Section~\ref{sec:covering}
presents the pseudo-code and analysis for the covering phase. Section~\ref{sec:short-path} presents the min-cost path phase algorithm. Section~\ref{sec:sum up} summarizes the full algorithm and 
discusses improvements in running time via recursion.

\section{Covering algorithm: Detailed overview}
\label{sec:overview}

We give a  detailed overview of the covering phase and its
time analysis and proof of correctness, ignoring minor technical details.
The pseudo-code in Section~\ref{sec:covering} corresponds to the
overview, with technical differences mainly to improve 
running time.  We will illustrate the sub-quadratic time analysis with threshold parameter $\theta=n^{-1/50}$ and algorithm parameters
$w_1=n^{1/10}$, $w_2=n^{3/10}$ and $d=n^{1/5}$.  

The covering phase outputs a set $\mathcal{R}$ of certified boxes. 
The goal is that $\mathcal{R}$ includes an  {\em  adequate approximating sequence} for some min-cost 
path $\tau$ in $G_{x,y}$, 
which is a sequence $\sigma$ of certified boxes  $(I_1 \times J_1,\kappa_1),\ldots,(I_\ell \times J_\ell,\kappa_\ell)$ that satisfies:
\begin{enumerate}
\item $I_1,\ldots,I_\ell$ is a decomposition of $\{0,\ldots,n\}$.
\item $I_i \times J_i$ is an {\em adequate cover of $\tau_{i}$}, where $\tau_i=\tau_{I_i}$ denotes
the minimal subpath of $\tau$ whose projection to the $x$-axis is $I_i$,  and adequate cover means that
the (vertical) distance from the start vertex (resp. final vertex) of $\tau_i$ and the lower left (resp. upper right) corner of $G_{x,y}(I_i \times J_i)$,
is at most 
a constant multiple of  $\cost(\tau_i)+\theta \mu(I_i)$.
\item  The sequence $\sigma$ is {\em adequately bounded}, i.e.,
$\sum_i \mu(I_i)\kappa_i \leq c(\cost(\tau)+\theta n)$, for a constant $c$.
\end{enumerate}
This is a slight oversimplification of Definition~\ref{def:approx}.
(See Figure ~\ref{fig:complete} for an illustration.)

The intuition for the second condition is that  $\tau_i$ is "almost" a path
between the lower left and  upper right corners of $I_i \times J_i$.
Now $\tau_i$ might have a vertical extent $J'$ that
is much larger than its horizontal extent $I_i$, in which case it is impossible to place a square $I_i \times J_i$ with
corners close to both endpoints of $\tau_i$.  But in that case, $\tau_i$ has a very high cost (at least $|\mu(J')-\mu(I_i)|$).  The closeness required is
adjusted based on $\cost(\tau_i)$, with relaxed requirements
if $\cost(\tau_i)$ is large.  

The output of the min-cost path phase should satisfy the  requirements of $\UB_{\theta}$. 
Roughly speaking, Lemma \ref{thm:reduction} shows that if the min-cost path phase receives $\mathcal{R}$ 
that contains an adequate approximating sequence to
some min-cost path $\tau$, then it will output an upper bound to $\editd(x,y)$ that is at most $c(\editd(x,y)+\theta n)$ for some $c$.  So that on
input $x,y$ with $\editd(x,y) \leq \theta n$, the output is
at most $2c\theta n$, satisfying the requirements of $\UB_{\theta}$. 
This formalizes the intuition that an adequate approximating sequence
captures enough information to deduce a good bound on $\cost(\tau)$. 

Once and for all, we fix a min-cost path $\tau$.
Our task for the covering phase is that, with high probability, $\mathcal{R}$ includes an adequate approximating
sequence for $\tau$. 

A {\em $\tau$-match} for an $x$-interval $I$ is a $y$-interval $J$ such that 
$I\times J$ is an adequate cover of $\tau_I$. It is easy to show
(Proposition~\ref{prop:cover}) that this implies $\editd(I,J) \leq  O(\cost(\tau_I)+\theta \mu(I))$. A box $I \times J$ is said to be {\em $\tau$-compatible} if $J$ is a $\tau$-match for $I$
and  a box sequence is {\em $\tau$-compatible} if every box  is $\tau$-compatible. 
A $\tau$-compatible certified box sequence  whose distance upper bounds
are (on average) within a constant factor of the actual cost, satisfies the requirements for an adequate
approximating sequence. 
Our cover algorithm will ensure that $\mathcal{R}$ contains such a sequence.

A natural decomposition of $\{0,1,\dots,n\}$ is $\mathcal{I}_{w_1}$, with all parts of width $w_1$
(think of $w_1$ as a power of 2 that is roughly $n^{1/10}$) so $\ell=n/w_1$ and  $I_j= \{(j-1)w_1, \cdots, j w_1\}$.
An interval of width $w_1$  is {\em $\theta$-aligned}
if its upper and lower endpoints are both multiples of $\theta w_1$ (which we require to be an integral power of 2). 
We restrict attention to
$x$-intervals in $\mathcal{I}_{w_1}$, called {\em $x$-candidates} and $\theta$-aligned $y$-intervals of width $w_1$
called {\em $y$-candidates}.
It can be shown (see Proposition~\ref{prop:align-box}) that an $x$-interval
$I$ always has a $\tau$-match $J$ that is $\theta$-aligned.  
For each $x$-candidate $I$, designate one such $\tau$-match as the {\em canonical $\tau$-match}, $J^{\tau}(I)$ for $I$, and $I \times J^{\tau}(I)$ is the {\em canonical $\tau$-compatible box for $I$}.

The na\"{i}ve approach to building $\mathcal{R}$ is to include certified boxes for all choices of $J$ to guarantee a $\tau$-match for each $I_j$, i.e., to consider each   
($x$-candidate, $y$-candidate)-pair $(I,J)$, compute its edit distance in
time $O(w_1^2)$,
and include the certified box $(I \times J,\ed(I,J))$ in $\mathcal{R}$.  
There are $\frac{n}{w_1}\cdot \frac{n}{\theta w_1}$ boxes, so the time for all edit distance computations is
$O(\frac{n^2}{\theta})$, which is worse than quadratic. (The factor $\frac{1}{\theta}$ can be avoided by standard techniques, but this is not significant to the quest for a sub-quadratic algorithm, so we defer this until the next section.)
Note  that $|\mathcal{R}|$ is $\frac{n^2}{\theta (w_1)^2}$ (which is $n^{1.82}$ for our parameters) 
so at least the min-cost path phase (which runs in time quasi-linear in $\mathcal{R}$) is truly sub-quadratic.

Two natural goals that will improve the running time are:  
(1) Reduce the amortized time per box needed to certify boxes significantly below $(w_1)^2$ and 
(2) Reduce the total number of certified boxes created significantly below $\frac{n^2}{\theta (w_1)^2 }$.
Neither goal is always achievable, and our covering algorithm combines them. 
In independent work  ~\cite{BEGHS18,BEGHS18a}, versions of these two goals are combined, where the second goal is accomplished via Grover search, thus yielding
a constant factor sub-quadratic time quantum approximation algorithm.
(1) builds on ideas from \cite{CGK18} exploiting periodicity of the strings.

{\bf Reducing amortized time for certifying boxes: the {\em dense case} algorithm.}
We aim to reduce the amortized time per certified box to be much smaller than $(w_1)^2$.  We divide our search for certified boxes into iterations $i \in \{0,\ldots,\log n\}$.
For iteration $i$, with $\epsilon_i=2^{-i}$, our goal is that for all candidate pairs $I,J$ with $\ed(I,J) \leq \epsilon_i$, we include the certified box $(I \times J,c \epsilon_i)$ for a  fixed constant $c$.  If we succeed,
then for each $I_j$ and its canonical $\tau$-match $J^{\tau}(I_j)$, and for
the largest index $i$ for which $\ed(I_j,J^{\tau}(I_j)) \leq \epsilon_i$,
iteration $i$ will certify $(I_j \times J^{\tau}(I_j),\kappa_j)$ with $\kappa_j \leq c \epsilon_ i\leq 2c \ed(I_j,J^{\tau}(I_j))$, as needed. 

For a string $z$ of size $w_1$, let $\mathcal{H}(z,\rho)$ be the set of $x$-candidates
$I$ with $\ed(z,x_I) \leq \rho$ and  $\mathcal{V}(z,\rho)$ be the
set of $y$-candidates $J$ with $\ed(z,y_J) \leq \rho$. In iteration $i$,
for each $x$-candidate $I$, we will specify a set $\mathcal{Q}_i(I)$ of $y$-candidates
that includes $\mathcal{V}(x_I,\epsilon_i)$ and is contained in $\mathcal{V}(x_I,5 \epsilon_i)$.
The set of certified boxes $(I \times J,5\epsilon_i)$ for all 
$x$-candidates $I$ 
and $J \in \mathcal{Q}_i(I)$ satisfies the goal of iteration $i$.

Iteration $i$ proceeds in rounds. 
At the beginning of the $i$-th iteration we mark all $x$-candidates as {\em unfulfilled}. 
In each round we select an $x$-candidate $I$, called the {\em pivot}, 
which is unfulfilled. 
Compute $\ed(x_I,y_J)$ for all $y$-candidates $J$ and
$\ed(x_{I},x_{I'})$ for all $x$-candidates $I'$; these determine
$\mathcal{H}(x_I,\rho)$ and $\mathcal{V}(x_I,\rho)$ for any $\rho$.
For all $I' \in \mathcal{H}(x_I,2\epsilon_i)$, set 
 $\mathcal{Q}_i(I')=\mathcal{V}(x_I,3\epsilon_i)$.
By the triangle inequality, for each $I' \in \mathcal{H}(x_I,2\epsilon_i)$, 
$\mathcal{V}(x_I,3\epsilon_i)$ includes $\mathcal{V}(x_{I'},\epsilon_i)$
and is contained in $\mathcal{V}(x_{I'},5\epsilon_i)$ so we can certify all the boxes
with upper bound $5\epsilon_i$.
Mark  intervals in $\mathcal{H}(x_I,2\epsilon_i)$ as {\em fulfilled} and proceed
to the next round, choosing a new pivot from among the unfulfilled $x$-candidates. 

The number of certified boxes produced  in a round is $|\mathcal{H}(x_I,2\epsilon_i)|\times
|\mathcal{V}(x_I,3\epsilon_i)|$.  If this is much larger than $O(\frac{n}{\theta w_1})$, the
number of edit
distance computations, then we have significantly reduced amortized
time  per certified box. (For example, in the trivial case $i=0$, every candidate box will be certified
in a single round.) But
in worst case, there are $\frac{n}{w_1}$ rounds each requiring $\Omega(\frac{nw_1}{\theta})$ time, 
for an unacceptable total time
$\Theta(n^2/\theta)$.

Here is a situation where the number of rounds is much less than $\frac{n}{w_1}$. 
Since any two pivots are necessarily greater than $2\epsilon_i$ apart,
the sets $\mathcal{V}(x_I,\epsilon_i)$ for distinct pivots are disjoint.
Now for some parameter $d$ (think of $d=n^{1/5}$)
an $x$-candidate is {\em $d$-dense} for $\epsilon_i$
if $|\mathcal{V}(x_I,\epsilon_i)| \geq d$, i.e., $x_I$ is $\epsilon_i$-close
in edit distance to at least $d$  $y$-candidates; it is $d$-sparse otherwise.
If we manage to select a $d$-dense pivot $I$ in each round, 
then the number of rounds is $O(\frac{n}{w_1d\theta})$ and the overall time will be 
$\Theta(\frac{n^2}{d\theta^2})$.
For our parameters this is $\Theta(n^{1.84})$.
But there's no reason to expect that we'll only choose dense pivots; indeed there need
not be any dense pivot.  

Let's modify the process a bit.  When choosing potential pivot $I$, 
first test whether or not it is (approximately) $d$-dense.  This can be done with high probability, by randomly
sampling $\tilde{\Theta}(\frac{n}{\theta w_1 d})$ $y$-candidates and finding the fraction of the sample
that are within $\epsilon_i$ of $x_I$.  If this fraction is less than $\frac{\theta w_1 d}{2n}$ then $I$ is {\em declared sparse}
and abandoned as a pivot; otherwise $I$ is {\em declared dense}, and used as a pivot.   With high probability, all $d$-dense intervals that are tested are declared dense, and all tested intervals that are not
$d/4$-dense are declared sparse, so we assume this is the case.  Then all
pivots are  processed (as above) in time $O(\frac{n^2}{d \theta^2})$ (under our sample parameters: $O(n^{1.84})$). 
We pay 
$\tilde{O}(\frac{n}{w_1d\theta}) (w_1)^2$ to test each potential pivot
(at most $\frac{n}{w_1}$ of them)  so the overall time to test potential
pivots is $\tilde{O}(\frac{n^2}{d\theta})$ (with our parameters: $\tilde{O}(n^{1.82})$).

Each iteration $i$  (with different $\epsilon_i$) splits $x$-candidates into
two sets, $\mathcal{S}_i$ of intervals that are declared sparse, and all of the rest
for which we have found the desired set $\mathcal{Q}_i(I)$.     
With high probability every interval
in $\mathcal{S}_i$
is indeed $d$-sparse, but a sparse interval need not belong to $\mathcal{S}_i$, since it may belong to $\mathcal{H}(x_I,2\epsilon_i)$ for some selected
pivot $I$.

For every $x$-candidate  $I \not\in \mathcal{S}_i$ we have met the goal for the iteration.
If $\mathcal{S}_i$ is very small for all iterations (with respect to $\epsilon_i$), then the  set of certified boxes
will suffice for the min-cost path algorithm to output a good approximation.
But if $\mathcal{S}_i$ is not small, another approach is needed.

\noindent
{\bf Reducing the number of candidates explored: the {\em diagonal extension algorithm}.}
For each $x$-candidate $I$, although it suffices to certify the single box $(I,J^{\tau}(I))$ with a good
upper bound, since $\tau$ is unknown,
the exhaustive and dense case approaches both include certified boxes for all $y$-candidates $J$.
The potential savings in the dense case approach comes from certifying many boxes simultaneously
using a relatively small number of edit distance computations.

Here's another approach: for each $x$-candidate $I$ try to quickly identify a relatively small subset $\mathcal{Y}(I)$ of
$y$-candidates that is guaranteed to include $J^{\tau}(I)$.  If we succeed, then the  number of boxes we certify  is significantly reduced, and even paying quadratic time per certified box, 
 we will have a sub-quadratic algorithm. 

We need the notion of {\em diagonal extension} of a box.  The {\em main diagonal} of box $I \times J$, 
is the segment joining the lower left and upper right corners. The square box $I' \times J'$ is a {\em diagonal
extension} of a square subbox $I \times J$ if the main diagonal of $I \times J$ is a subsegment
of the main diagonal of $I' \times J'$ (see Definition~\ref{def:DE}).
Given square box $I \times J$ and $I \subset I'$ the {\em diagonal
extension of $I \times J$ with respect to $I'$} is the unique diagonal extension of $I \times J$ having $x$-interval $I'$.
The key observation (Proposition 
~\ref{prop:extension}) is: if $I \times J$ is an adequate
cover of $\tau_{I}$ 
then any diagonal extension $I' \times J'$ is an adequate cover of $\tau_{I'}$.    

Now 
let $w_1, w_2$ be two numbers with $w_1|w_2$ and $w_2|n$.  (Think of $w_1=n^{1/10}$ and 
$w_2=n^{3/10}$.)   
We use the decomposition $\mathcal{I}_{w_2}$ of $\{0,\ldots,n\}$ into intervals of width $w_2$. The set of $y$-candidates consists of $\theta$-aligned vertical intervals of width $w_2$ and has size
$\frac{n}{\theta w_2}$ . 
To identify a small set of potential matches for $I' \in \mathcal{I}_{w_2}$, we will identify a  set (of size much smaller than $\frac{n}{w_2}$) of $w_1$-boxes
$\mathcal{B}(I')$ having $x$-interval in $\mathcal{I}_{w_1}(I')$ (the decomposition of $I'$ into width $w_1$ intervals).  For each box in $\mathcal{B}(I')$
we determine the diagonal extension $I' \times J'$ with respect to $I'$, compute $\kappa=\ed(I',J')$ and certify
$(I' \times J',\kappa)$.
Our hope is that $\mathcal{B}(I')$
includes a $\tau$-compatible $w_1$-box $I'' \times J^{\tau}(I'')$. If so then
the observation above implies that its 
diagonal extension provides 
an adequate cover for $\tau_{I'}$.

Here's how to build $\mathcal{B}(I')$:
Randomly select a polylog$(n)$ size set $\mathcal{H}(I')$ of width $w_1$ intervals
from $\mathcal{I}_{w_1}(I')$.   For each $I'' \in \mathcal{H}(I')$ 
compute $\ed(I'',J'')$ for each
$y$-candidate $J''$, and let $\mathcal{J}(I'')$ consist of the $d$  candidates $J''$ 
with smallest edit distance to $I''$. Here $d$ is a parameter; think of $d=n^{1/5}$ as before.  
$\mathcal{B}(I')$ consists of all
$I'' \times J''$ where $I'' \in \mathcal{H}(I')$ and $J'' \in \mathcal{J}(I'')$.

To bound running time: Each $I' \in \mathcal{I}_{w_2}$ requires
$\tilde{O}(\frac{n}{\theta w_1})$ width-$w_1$ $\ed()$ computations, taking time $\tilde{O}(\frac{nw_1}{\theta})$. Diagonal extension step requires $\tilde{O}(d)$ width-$w_2$ $\ed()$ computations, for time $\tilde{O}(dw_2^2)$.
Summing over $\frac{n}{w_2}$ choices for $I'$ gives time $\tilde{O}(n^2 \frac{w_1}{\theta w_2} + n dw_2)$ (with our parameters: $\tilde{O}(n^{1.82})$).

Why should  $\mathcal{B}(I')$ include
a box that is an adequate approximation to $\tau_{I'}$?  The intuition behind the 
choice of $\mathcal{B}(I')$
is that an adequate cover for $\tau_{I'}$ 
should typically be among the cheapest boxes of the form $I' \times J'$, and
if $I' \times J'$ is cheap then for a randomly chosen $w_1$-subinterval
$I''$, we should also have $I'' \times J^{\tau}(I'')$ is among the cheapest
boxes for $I''$.

Clearly this intuition is faulty: $I'$ may have many inexpensive matches $J'$
such that $I' \times J'$ is far from $\tau_{I'}$, which may all be 
much cheaper than the match we are looking for. In this bad situation, there are many  $y$-intervals $J'$
such that $\ed(I',J')$ is smaller than the match we are looking for, and this is  reminiscent of the {\em good} situation for
the dense case algorithm, where we hope that $I'$ has lots of close matches.  This suggests combining
the two approaches, and leads to our full covering algorithm. 

\noindent
{\bf The full covering algorithm.}
Given the dense case and diagonal extension algorithms, the full covering algorithm is easy to
describe.  
The parameters $w_1,w_2,d$ are as above.  
We iterate over $i \in \{0,\ldots,\log n\}$ with 
$\epsilon_i=2^{-i}$.  In iteration $i$, we first run the dense case algorithm, and let
$\mathcal{S}_i$ be the set of intervals declared sparse.
Then run the diagonal extension algorithm described earlier (with
small modifications): For each
width $w_2$ interval $I'$, select $\mathcal{H}(I')=\mathcal{H}_i(I')$ to consist of
$\theta(\log^2 n)$ independent random selections from $\mathcal{S}_i$. 
For each $I'' \in \mathcal{H}_i(I')$, find the set of vertical
candidates $J''$ for which $\ed(I'',J'') \leq \epsilon_i$.  Since $I''$
is (almost certainly) $d$-sparse, the number of such $J''$ is at most $d$.
Proceeding as in the diagonal extension algorithm, we produce
a set $\mathcal{P}_i(I')$  of $\tilde{O}(d)$ certified $w_2$-boxes with $x$-interval $I'$.  
Let $\mathcal{R}_D$ (resp. $\mathcal{R}_E$) be the set
of all certified boxes produced by the dense case iterations, resp. diagonal extension iterations.
The output is $\mathcal{R}=\mathcal{R}_D \cup \mathcal{R}_E$.
(See Figure ~\ref{fig:complete} for an illustration of the output $\mathcal{R}$.)

The running time is the sum of the running times of the dense case and diagonal extension algorithms, as analyzed above. Later, we will give a more precise analysis of the running time
for the pseudo-code.

To finish this extended overview, we sketch the argument that $\mathcal{R}$ satisfies the covering phase requirements.

\begin{claim}
\label{claim:main overview}
Let $I'$ be an interval in the $w_2$-decomposition. Either
(1) the output of the dense case algorithm includes
a sequence of certified $w_1$-boxes that adequately approximates the subpath $\tau_{I'}$, or (2) with high probability the output of the sparse case algorithm includes a single $w_2$-box that adequately approximates 
$\tau_{I'}$.
\end{claim}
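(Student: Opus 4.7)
The plan is to analyze $I'$ through its decomposition $I''_1, \ldots, I''_m$ into $m = w_2/w_1$ subintervals of width $w_1$, tracking the canonical $\tau$-matches $J^\tau_j = J^\tau(I''_j)$ and their normalized edit distances $\rho_j = \ed(I''_j, J^\tau_j)$. By the defining property of canonical matches and the bound on adequate covers, $\rho_j w_1 \leq O(\cost(\tau_{I''_j}) + \theta w_1)$. For each $j$, let $i_j$ be the smallest iteration index with $\epsilon_{i_j} \geq \rho_j + \theta$, so $\epsilon_{i_j} \leq 2(\rho_j+\theta)$ and $\sum_j \epsilon_{i_j} w_1 = O(\cost(\tau_{I'}) + \theta w_2)$. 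Define the ``bad'' set $U = \{j : I''_j \in \mathcal{S}_{i_j}\}$ of subintervals that the dense-case algorithm declares sparse at the relevant iteration, and split on the threshold $|U| \leq \alpha m$ vs.\ $|U| > \alpha m$ for a constant $\alpha = \Theta(\theta)$ chosen below.

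In the small-$U$ case I produce option (1). For each $j \notin U$, the dense case at iteration $i_j$ has fulfilled $I''_j$ via some pivot $I^*$, and by construction $\mathcal{Q}_{i_j}(I''_j) = \mathcal{V}(x_{I^*}, 3\epsilon_{i_j}) \supseteq \mathcal{V}(x_{I''_j}, \epsilon_{i_j})$; since $\rho_j \leq \epsilon_{i_j}$, this places $J^\tau_j$ in $\mathcal{Q}_{i_j}(I''_j)$, so $(I''_j \times J^\tau_j, 5\epsilon_{i_j}) \in \mathcal{R}_D$. For each $j \in U$, I splice in the fallback box $(I''_j \times J^\tau_j, O(1))$ available from the trivial $i = 0$ iteration. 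Every box is $\tau$-compatible by construction, and the weighted sum satisfies $\sum_j w_1 \kappa_j = O\!\left(\sum_{j\notin U} \epsilon_{i_j} w_1\right) + O(|U| w_1) = O(\cost(\tau_{I'}) + \theta w_2 + \alpha w_2)$, matching the adequate bound once $\alpha \leq O(\theta)$.

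In the large-$U$ case I produce option (2). Since $i_j$ ranges over only $O(\log n)$ distinct values, pigeonhole yields some $i^*$ with $U^* := \{j \in U : i_j = i^*\}$ of size at least $\alpha m / \log n$. The diagonal-extension step at iteration $i^*$ draws $\Theta(\log^2 n)$ independent samples from $\mathcal{S}_{i^*} \cap \mathcal{I}_{w_1}(I')$ (a set of size at most $m$), so the probability of missing every index in $U^*$ is at most $(1 - \alpha/(m\log n) \cdot m)^{\Theta(\log^2 n)} \leq n^{-7}$ for an appropriate hidden constant. Conditioned on sampling some $j \in U^*$: the $d$-sparsity of $I''_j$ at iteration $i^*$ gives $|\mathcal{V}(x_{I''_j}, \epsilon_{i^*})| < d$, and $\rho_j \leq \epsilon_{i^*}$ forces $J^\tau_j$ into that set and hence into the top-$d$ list $\mathcal{J}(I''_j)$. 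The algorithm then certifies the diagonal extension of $I''_j \times J^\tau_j$ with respect to $I'$, and Proposition~\ref{prop:extension} promotes the $w_1$-box adequate cover of $\tau_{I''_j}$ into a single $w_2$-box adequate cover of $\tau_{I'}$, whose cost $\ed(I',J') \cdot w_2$ is itself $O(\cost(\tau_{I'}) + \theta w_2)$ by Proposition~\ref{prop:cover}.

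The main obstacle is the constant bookkeeping in Case 1: the $O(1)$ fallback cost per trivial box must be absorbed into the $\theta w_2$ slack, pinning down $\alpha$ and forcing the constants from the $(k,\zeta)$-approximating-sequence definition to propagate consistently to the final $\approxfactorB$. A secondary subtlety is that the $\delta$-alignment lattice of $y$-candidates varies across iterations, so one must verify via Proposition~\ref{prop:align-box} that an appropriate $\theta$-aligned canonical $\tau$-match $J^\tau_j$ exists at the relevant iteration for each $j$.
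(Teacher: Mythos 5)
There is a genuine quantitative gap, and it sits exactly at the point you flag at the end: the threshold $\alpha$ cannot be chosen consistently. Your small-$U$ case charges every interval in $U$ a fallback box of normalized cost $O(1)$ from the $i=0$ iteration, so absorbing the total $O(|U|\,w_1)=O(\alpha w_2)$ into the allowed slack $O(\cost(\tau_{I'})+\theta w_2)$ forces $\alpha=O(\theta)$. But then in the large-$U$ case the pigeonholed set $U^*$ has size only $\Omega(\theta m/\log n)$, while the SSES sample is drawn uniformly from $\mathcal{S}_{i^*}\cap\mathcal{I}_{w_1}(I')$, a set that can have size $m$; the probability that $\Theta(\log^2 n)$ samples all miss $U^*$ is $(1-\Theta(\theta/\log n))^{\Theta(\log^2 n)}=e^{-\Theta(\theta\log n)}$, which is close to $1$ (not $n^{-7}$) when $\theta$ is polynomially small, e.g.\ $\theta=n^{-1/5}$, which is precisely the regime the algorithm must handle. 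Conversely, taking $\alpha$ constant rescues the sampling bound but makes the fallback cost $\Omega(w_2)$ additive per $w_2$-strip, i.e.\ $\Omega(n)$ over the whole instance, which destroys any constant-factor guarantee when $\editd(x,y)\le\theta n\ll n$. So as stated, neither branch of your dichotomy can be closed.

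The paper escapes this by changing both sides of the dichotomy. First, the fallback for an interval $I''$ that is declared sparse at its critical scale is not the trivial $i=0$ box: it is the box certified at $b(I'')=\epsilon_{t(I'')}$, where $t(I'')$ is the \emph{last} iteration with $\epsilon_i\ge u(I'')$ at which $I''$ was not declared sparse, so the overcharge is $b(I'')$ rather than $1$. Second, the sufficient condition for the sparse case is stated per iteration and relative to the declared-sparse population: if for some $i$ the winners satisfy $|\mathcal{W}_i(I')|\ge\frac{1}{32}|\mathcal{S}_i(I')-\mathcal{W}_i(I')|$, then the $\Theta(\log^2 n)$ samples from $\mathcal{S}_i(I')$ hit a winner with probability $1-n^{-\Omega(1)}$ \emph{independently of $\theta$}, because the required density is a constant fraction of the sampled set, not a fraction of all $m$ subintervals. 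If this fails at every iteration, the paper multiplies the failure inequalities by $\epsilon_i$, sums over $i$, and uses a geometric-series bound ($\sum_{i:\,I''\in\mathcal{S}_i(I')-\mathcal{W}_i(I')}\epsilon_i\le 2u(I'')$ together with $b(I'')/2$ being dominated by the winner sum when $b(I'')>2u(I'')$) to conclude $\sum_{I''}b(I'')\le 6\sum_{I''}u(I'')$, which is exactly the adequate-boundedness of the dense-case sequence. To repair your argument you would need to replace the single critical-iteration membership test and $O(1)$ fallback with this adaptive, $\epsilon_i$-weighted accounting; the fixed threshold $\alpha$ cannot do the job.
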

(This claim is formalized in Claim~\ref{claim:main}.)
Stitching together the subpaths for all $I'$ implies that $\mathcal{R}$ contains a sequence of certified boxes that adequately approximates $\tau$.

To prove the claim,  we establish a sufficient condition for each of the two
conclusions and show that if the sufficient condition for the second conclusion fails, then
the sufficient condition for the first holds.

\begin{center}
   \begin{figure}[ht]
\centering
\begin{tikzpicture}[scale=0.25,shorten >=1mm,>=latex]
 \tikzstyle gridlines=[color=black!20,very thin]
 
 

  \draw[color=black,fill=gray!30] (4,19.1) rectangle (10,25.1);
  \draw[color=black,fill=yellow!20] (7,22.1) rectangle (8.5,23.6);

  \draw[color=black,fill=gray!30] (4,10) rectangle (10,16);
  \draw[color=black,fill=yellow!20] (5.5,10.7) rectangle (7,12.2);

  \draw[color=black,fill=gray!30] (16,28) rectangle (22,34);
   \draw[color=black,fill=yellow!20] (16,28) rectangle (17.5,29.5);

   \draw[color=black,fill=gray!25] (10,25) rectangle (16,31);
  \draw[color=black,fill=yellow!20] (11.5,26.5) rectangle (13,28);

 \draw[color=black,fill=blue!25] (-2,10) rectangle (-.5,11.5);
 \draw[color=black,fill=blue!25] (-2,14.7) rectangle (-.5,16.2);
 \draw[color=black,fill=blue!25] (-2,24.55) rectangle (-.5,26.05);
 \draw[color=black,fill=blue!25] (-2,31) rectangle (-.5,32.5);
 \draw[color=black,fill=blue!25] (-2,11.5) rectangle (-.5,13);
 
 \draw[color=black,fill=blue!25] (-.5,11.5) rectangle (1,13);.
  \draw[color=black,fill=blue!25] (-.5,10) rectangle (1,11.5);
   \draw[color=black,fill=blue!25] (-.5,14.7) rectangle (1,16.2);
   \draw[color=black,fill=blue!25] (-.5,24.55) rectangle (1,26.05);
   \draw[color=black,fill=blue!25] (-.5,31) rectangle (1,32.5);
 
 \draw[color=black,fill=blue!25] (2.5,10) rectangle (4,11.5);
 \draw[color=black,fill=blue!25] (2.5,14.7) rectangle (4,16.2);
 \draw[color=black,fill=blue!25] (2.5,24.55) rectangle (4,26.05);
 \draw[color=black,fill=blue!25] (2.5,31) rectangle (4,32.5);
 \draw[color=black,fill=blue!25] (2.5,11.5) rectangle (4,13);
   
 \draw[color=black,fill=blue!25] (10,10) rectangle (11.5,11.5);
 \draw[color=black,fill=blue!25] (10,14.7) rectangle (11.5,16.2);
 \draw[color=black,fill=blue!25] (10,24.55) rectangle (11.5,26.05);
 \draw[color=black,fill=blue!25] (10,31) rectangle (11.5,32.5);
 \draw[color=black,fill=blue!25] (10,11.5) rectangle (11.5,13);
     
 \draw[color=black,fill=blue!25] (19,10) rectangle (20.5,11.5);
 \draw[color=black,fill=blue!25] (19,14.7) rectangle (20.5,16.2);
 \draw[color=black,fill=blue!25] (19,24.55) rectangle (20.5,26.05);
 \draw[color=black,fill=blue!25] (19,31) rectangle (20.5,32.5);
 \draw[color=black,fill=blue!25] (19,11.5) rectangle (20.5,13);

 \draw[color=black,fill=blue!25] (13,11.5) rectangle (14.5,13);
 \draw[color=black,fill=blue!25] (13,20) rectangle (14.5,21.5);
 \draw[color=black,fill=blue!25] (13,18.5) rectangle (14.5,20);
 \draw[color=black,fill=blue!25] (13,29.5) rectangle (14.5,31);
 
 \draw[color=black,fill=blue!25] (14.5,11.5) rectangle (16,13);
 \draw[color=black,fill=blue!25] (14.5,20) rectangle (16,21.5);
 \draw[color=black,fill=blue!25] (14.5,18.5) rectangle (16,20);
 \draw[color=black,fill=blue!25] (14.5,29.5) rectangle (16,31);

 \draw[color=black,fill=blue!25] (20.5,13) rectangle (22,14.5);
 \draw[color=black,fill=blue!25] (20.5,23) rectangle (22,24.5);
 \draw[color=black,fill=blue!25] (20.5,32.5) rectangle (22,34);
 \draw[color=black,fill=blue!25] (20.5,17) rectangle (22,18.5);

 \draw[color=black,fill=blue!25] (1,13) rectangle (2.5,14.5);
 \draw[color=black,fill=blue!25] (1,23) rectangle (2.5,24.5);
 \draw[color=black,fill=blue!25] (1,32.5) rectangle (2.5,34);
 \draw[color=black,fill=blue!25] (1,17) rectangle (2.5,18.5);

 \draw[color=black] (-2,10) rectangle (22,34);
  
  \draw[color=black!90,dashed] (4,8.5) -- (4,35.5);
  \draw[color=black!90,dashed] (10,8.5) -- (10,35.5);
  \draw[color=black!90,dashed] (16,8.5) -- (16,35.5);

  \draw[color=black!90,dotted] (-.5,10) -- (-.5,34.2);
    \draw[color=black!90,dotted] (1,10) -- (1,34.2);
    \draw[color=black!90,dotted] (2.5,10) -- (2.5,34.2);
   
   \draw[color=black!90,thick,dotted] (5.5,10) -- (5.5,34.2);
   \draw[color=black!90,thick,dotted] (7,10) -- (7,34.2);
   \draw[color=black!90,thick,dotted] (8.5,10) -- (8.5,34.2);
   
   \draw[color=black!90,thick,dotted] (11.5,10) -- (11.5,34.2);
   \draw[color=black!90,thick,dotted] (13,10) -- (13,34.2);
   \draw[color=black!90,thick,dotted] (14.5,10) -- (14.5,34.2);
   
   \draw[color=black!90,thick,dotted] (17.5,10) -- (17.5,34.2);
   \draw[color=black!90,thick,dotted] (19,10) -- (19,34.2);
   \draw[color=black!90,thick,dotted] (20.5,10) -- (20.5,34.2);

\draw[color=red,very thick] (-2,10) -- (2.2,14.2);
 \draw[color=red,very thick] (2,14) -- (2,15.2);
 \draw[color=red,very thick] (2,15) -- (3,15);
 \draw[color=red,very thick] (2.8,15) -- (5,17.2);
 \draw[color=red,very thick] (4.8,17) -- (4.8,19.5);
 \draw[color=red,very thick] (4.8,19.3) -- (6.7,21.2);
 \draw[color=red,very thick] (6.5,21) -- (6.5,21.9);
 \draw[color=red,very thick] (6.5,21.7) -- (9.6,24.8);
 \draw[color=red,very thick] (9.4,24.7) -- (10.3,24.7);
 \draw[color=red,very thick] (10.1,24.7) -- (11.7,26.2);
 \draw[color=red,very thick] (11.5,26) -- (11.5,26.7);
 \draw[color=red,very thick] (11.5,26.5) -- (13.2,28.2);
 \draw[color=red,very thick] (13,28) -- (16.2,28);
 \draw[color=red,very thick] (16,28) -- (22.2,34.2);

 \draw[color=black!90,dashed] (10,25) -- (16,31);
  \draw[color=black!90,dashed] (4,10) -- (10,16);
   \draw[color=black!90,dashed] (4,19.1) -- (10,25.1);
   \draw[color=black!90,dashed] (16,28) -- (22,34);

 \node at (-3,21) {$J$};
 \node at (8,35) {$I$};

 \draw [black,<->,thin] (9.8,34.5) -- (16.2,34.5);
 \node at (13.3,35.3) {$w_2$};
 \draw [black,<->,thin] (17.3,34.5) -- (19.2,34.5);
 \node at (18.25,35.3) {$w_1$};
 
 \end{tikzpicture}
 \caption{Illustration of the Covering Algorithm: Blue boxes are low cost boxes in dense $w_1$-strips, while the yellow ones are in sparse $w_1$-strips. The red line corresponds to the path $\tau$ that we are trying to cover. In each $w_2$-strip, $\tau$ is covered by either a collection of many $w_1$-boxes or it is covered by a diagonal extension of a low cost $w_1$-box. The various boxes might overlap vertically
which is not shown in the picture.}  
   \label{fig:complete}   
 \end{figure}

\end{center}

Let $\mathcal{I}'$ denote the $w_1$-decomposition $\mathcal{I}_{w_1}(I')$ of $I'$.
Every interval $I'' \in \mathcal{I}'$ has a $\theta$-aligned $\tau$-match $J^{\tau}(I'')$.
It will be shown (see Proposition~\ref{prop:align-box}), that $\ed(I'',J^{\tau}(I'')) \leq 2\frac{\cost(\tau_{I''})}{\mu(I'')}
+\theta$.  Let $u(I'')$ denote this upper bound.
Consider the first alternative in the claim. During the dense case iteration $i=0$,
every interval is declared dense, and 
$(I''\times J^{\tau}(I''),5)$ is in $\mathcal{R}_D$ for all $I''$.   To get an  adequate approximation, we
try to show that later iterations provide much better upper bounds on these boxes,
i.e., $(I''\times J^{\tau}(I''),\gamma(I'')) \in \mathcal{R}_D$ for a small enough value of $\gamma(I'')$.
By definition
of adequate approximation, it is enough that
$\sum_{I'' \in \mathcal{I}'} \gamma(I'') \leq c \sum_{I'' \in \mathcal{I'}}u(I'')$, for some $c$.
Let $t(I'')$ be the last (largest) iteration for which  $\epsilon_{t(I'')} \geq u(I'')$ and $I'' \not\in \mathcal{S}_{t(I'')}$
(which is well defined since $\mathcal{S}_0=\emptyset$).
Let $b(I'')=\epsilon_{t(I'')}$. Since $b(I'') \geq u(I'') \geq \ed(I'',J^{\tau}(I''))$,
the box
$(I'' \times J^{\tau}(I''),5b(I''))$ is certified.   The collection $\{(I'' \times J^{\tau}(I''),5b(I''))\}$ is a   sequence of certified
boxes that satisfies the first two conditions for
an adequate approximation of $\tau$.  The third condition will follow if:

\begin{equation}
\label{1st cond succeeds} 
\sum_{I'' \in \mathcal{I}'} 5 b(I'')  \leq c \sum_{I'' \in \mathcal{I'}} u(I'')
\end{equation}
so this is sufficient to imply the first condition of the claim.

Next consider what we need for the second alternative to hold.  Let $\mathcal{S}_i(I')$ be the set of intervals declared
sparse in iteration $i$.  An interval $I'' \in \mathcal{S}_i(I')$ is a {\em winner} (for iteration $i$) if $\ed(I'',J^{\tau}(I'')) \le \epsilon_i$, and $\mathcal{W}_i(I')$ is the set of winners. In iteration $i$ of the diagonal extension algorithm, we sample $\theta(\log^2 n)$ elements of $\mathcal{S}_i(I')$.
If for at least one iteration $i$ our sample includes a winner $I''$ then the second condition of the claim will hold:
$I'' \times J^{\tau}(I'')$ is extended diagonally to a $w_2$-box, and by the diagonal extension property, the extension
is an adequate cover of $\tau_{I'}$, which we will certify with its exact edit distance.  

Observe that if there exists $i$ such that $|\mathcal{W}_i(I')| \ge |\mathcal{S}_i(I')-\mathcal{W}_i(I')|$
then, except with negligible probability, 
during the $i$-th iteration our sample includes a winner $I''$ 
(as for each sample the probability of being a winner is at least $1/2$).
Thus for the second alternative to fail with nonnegligible probability:

\begin{equation}
\label{2nd cond fails}
\text{For all $i$}, |\mathcal{W}_i(I')| < |\mathcal{S}_i(I')-\mathcal{W}_i(I')|.
\end{equation}

We argue next that if the failure condition (\ref{2nd cond fails}) holds,
then the success condition (\ref{1st cond succeeds}) holds. 
Multiply (\ref{2nd cond fails}) by $\epsilon_i$ and sum on $i$ to get:

\begin{equation}
\label{2nd cond fails-2}
\sum_{I'' \in \mathcal{I}'} \sum_{i:I'' \in \mathcal{W}_i(I')} \epsilon_i < 
\sum_{I'' \in \mathcal{I}'} \sum_{i:I''\in \mathcal{S}_i(I')-\mathcal{W}_i(I')} \epsilon_i.
\end{equation}

For a given interval $I'' \in \mathcal{I}_{w_1}(I')$, consider the iterations $i$ for which
$I'' \in \mathcal{W}_i(I')$ and those for which $I'' \in \mathcal{S}_i(I')-\mathcal{W}_i(I')$.
First of all if $\epsilon_i \geq u(I'')$ and $I'' \in \mathcal{S}_i(I')$ then since $\ed(I'',J^{\tau}(I'')) \leq u(I'') \leq \epsilon_i$ we conclude $I'' \in \mathcal{W}_i(I')$.  So $I'' \in \mathcal{S}_i(I')-\mathcal{W}_i(I')$ implies that $\epsilon_i < u(I'')$, so
the inner sum of the right side of (\ref{2nd cond fails-2}) is at most $2u(I'')$ (by summing a geometric series).

Furthermore, for $i$ with $u(I'') \leq \epsilon_i < b(I'')$, $I'' \in \mathcal{S}_i$ by the choice of $t(I'')$.
Either $b(I'')/2 \leq u(I'')$ or 
$u(I'')<b(I'')/2$. The latter implies $I'' \in \mathcal{W}_{t(I'')+1}(I')$, and then $b(I'')/2$ is
upper bounded by the inner sum on the left of (\ref{2nd cond fails-2}).  Therefore:

\begin{eqnarray*}
\sum_{I''} b(I'') & \leq &  \sum_{I''} \left( 2u(I'')+\sum_{i:I'' \in \mathcal{W}_i(I')} 2 \epsilon_i\right) \\
&<&\sum_{I''} \left( 2u(I'')+2 \sum_{i:I'' \in \mathcal{S}_i(I') - \mathcal{W}_i(I') }\epsilon_i \right)\\
& \leq & 6 \sum_{I''} u(I''),
\end{eqnarray*}
as required for (\ref{1st cond succeeds}).

This completes the overview of the covering algorithm.

\section{Covering Algorithm: pseudo-code and analysis}
\label{sec:covering}

The pseudo-code  consists of \CA{} which
calls procedures \DSR{} (the dense case algorithm) and 
\SSES{} (the diagonal extension algorithm). The technical differences between the pseudo-code and the informal description, are mainly to improve analysis of the running time.

\subsection{Pseudo-code}
The parameters of \CA{} are as described in the overview: $x,y$ are  input
strings of length $n$, $\theta$ comes from $\UB_{\theta}${}, $w_1<w_2<n$ and $d<n$ are integral powers of 2, as are
the auxiliary input parameters. The output is a set $\mathcal{R}$ of certified boxes.
The algorithm uses  global constants $c_0\ge 0$ and $c_1\ge 640$, where the former one is needed for Proposition \ref{prop:sampling}.

We use a subroutine \TE{} which takes strings $z_1,z_2$ of length $w$ and parameter $\kappa$ and outputs $\infty$ if $\ed(z_1,z_2) > \kappa$ and otherwise outputs $\ed(z_1,z_2)$.   The  algorithm of \cite{UKK85} implements \TE{} in time $O(\kappa w^2)$.

One  technical difference from the overview, is that
the pseudo-code saves time by restricting the search for
certified boxes  to
a portion of the grid close to the main diagonal.  Recall that $\UB_{\theta}$ has two requirements,
that the output upper bounds $\editd(x,y)$ (which will be guaranteed by the
requirement that
$\mathcal{R}$ contains no falsely certified boxes), and that
if  $\editd(x,y) \leq \theta n$,
the output is at most $c \theta n$ for some constant $c$.  We therefore
design our algorithm
assuming $\editd(x,y) \leq \theta n$, in which case every min-cost
$G_{x,y}$-path $\tau$ consists entirely of points $(i,j)$
within $\frac{\theta}{2}n$ steps from 
the main diagonal, i.e. $|i-j| \leq \frac{\theta}{2} n$. So we restrict
our search for certified boxes as follows: set
$m=\frac{1}{4} \theta n$, and consider the $\frac{n}{m}$ overlapping
equally spaced
boxes of width $8m=2\theta n$ lying along the main diagonal. Together these boxes cover all points within $\theta n$ of the main diagonal.  

The algorithm of the overview is executed separately on each of these $n/m$ boxes.   Within each of these executions, we iterate over  $i \in \{0,\ldots,\log \frac{1}{\theta}\}$ (rather than
$\{0,\ldots, \log n\}$ as in the overview).  In each iteration we apply the
dense case algorithm and the diagonal extension algorithm as in the overview.
The output is the union over all $n/m$ boxes and all iterations, of the
boxes produced.

In the procedures \DSR{} and \SSES{}, the input $G$
is an induced grid graph corresponding to a box $I_G \times J_G$, as described in the
"framework" part of Section
~\ref{sec:intro}.
The procedure \DSR{} on input $G$, sets $\mathcal{T}$ to be
the $w_1$-decomposition of $I_G$ (the $x$-candidates) and 
$\mathcal{B}$ to be the set of $\frac{\epsilon_i}{8}$-aligned
$y$-candidates. As in the overview, the dense case algorithm
produces a set of certified boxes (called $\mathcal{R}_1$ in the pseudo-code) and a
set $\mathcal{S}$ of intervals declared sparse.
\SSES{} is invoked if  $\mathcal{S} \neq \emptyset$ and iterates
over all $x$-intervals $I'$ in the decomposition $\mathcal{I}_{w_2}(I_G)$.
The algorithm skips $I'$ if $\mathcal{S}$
contains no subset of $I'$, and otherwise selects a sample
$\mathcal{H}$ of $\theta(\log^2 n)$ subintervals of $I'$ from $\mathcal{S}$.
For each sample interval $I''$ it finds the vertical candidates $J''$ for which $\ed(I'',J'')\le \epsilon_i$, does a diagonal extension to $I'$ and certifies each box with an exact edit distance computation.

There are  a few parameter changes from the overview that provide some improvement in the
time analysis:
During each iteration $i$, rather than taking our vertical candidates to be from a 
$\theta$-aligned grid, we can afford a grid that is $\epsilon_i/8$-aligned which is coarser for $\epsilon_i \gg \theta$. Also, the local parameter $d$ in \DSR{} and \SSES{} is set to $d/\epsilon_i$ during iteration $i$.

There is one counterintuitive quirk in \SSES{}: each certified box 
is replicated $O(\log n)$ times with higher
distance bounds.  This is permissible (increasing the distance bound cannot decertify a box),
but seems silly (why add the same box
with a higher distance bound?).  This is just a convenient technical device
to ensure that
the second phase min-cost path
algorithm gives a good approximation (using adjacent boxes that overlap vertically).

\begin{algorithm}
 
   \Input{Strings $x,y$ of length $n$, $w_1,w_2,d\in [n]$, $w_1<w_2 < \theta n/4$, and $\theta \in [0,1]$. $n,w_1,w_2,\theta$ are powers of 2.}
   
   \Output{A set $\mathcal{R}$ of certified boxes in $G$.}
   
   \vspace{1mm}
   \hrule\vspace{1mm}
 
   Initialization: $G=G_{x,y}$, $\mathcal{R}_D=\mathcal{R}_E=\emptyset$.
   
   Let $m=\frac{\theta n}{4}$
   
   \For{$k=0,\dots,\frac{4}{\theta}-8$}{

      Let $I=J=\{km,km+1,\dots,(k+8)m\}$. 
      
      \For{$i= \log 1/\theta,\dots,0$}{
         
          Set $\epsilon_i=2^{-i}$.
         
          Invoke \DSR{}$(G(I \times J),n,w_1,\frac{d}{\epsilon_i}$, $\frac{\epsilon_i}{8},\epsilon_i)$ to get $\mathcal{S}$ and $\mathcal{R}_1$.
         
\If{$\mathcal{S} \neq \emptyset$}{

          Invoke \SSES{}$(G(I\times J),\mathcal{S},$ $n,w_1,w_2,\frac{d}{\epsilon_i}$, $\frac{\epsilon_i}{8},\epsilon_i, \theta)$ to get $\mathcal{R}_2$.
}
\Else{

	 $\mathcal{R}_2=\emptyset$.

}
         
          Add items from $\mathcal{R}_1$ to $\mathcal{R}_D$ and from $\mathcal{R}_2$ to $\mathcal{R}_E$.
         
      }
      
   }
   
    Output $\mathcal{R}=\mathcal{R}_D \cup \mathcal{R}_E$.
   \caption{\CA{}$(x,y,n,w_1,w_2,d,\theta)$ }
    \label{alg-pI}
\end{algorithm}

\begin{algorithm}

   \Input{$G=G_{x,y}(I_G \times J_G)$ for some $I_G,J_G\subseteq \{0,1,\dots,n\}$, $w,d\in [n]$, the
endpoints of $I_G$ and $J_G$ are multiples of $w$ and $\delta,\epsilon \in [0,1]$.}
   
   \Output{Set $\mathcal{S}$ which is a subset of the $w$-decomposition of $I_G$ and a set $\mathcal{R}$ of $\delta$-aligned certified $w$-boxes
all with distance bound $5\epsilon_i$.}

   \vspace{1mm}
   \hrule\vspace{1mm}
    \
    Initialization: $\mathcal{S}=\mathcal{R}=\emptyset$. $\mathcal{T}=\mathcal{I}_{w}(I_G)$.
   
    $\mathcal{B}$, the set of $y$-candidates, is the set of width $w$ $\delta$-aligned subintervals of $J_G$ (having endpoints a multiple of $\delta w$.)

   \While{$\mathcal{T}$ is non-empty}{
       Pick $I\in \mathcal{T}$
      
       Sample $c_0 |\mathcal{B}| \frac{1}{d} \log n$ intervals $J\in \mathcal{B}$ uniformly at random and for each test  if $\ed(x_I,y_J) \leq \epsilon$.
      
      
      \If{for at most $\frac{c_0}{2} \log n$ sampled $J$'s, $\TE(x_I,y_J,\epsilon)<\infty$ }{

           $\mathcal{S} = \mathcal{S} \cup \{I\}$; $\mathcal{T}=\mathcal{T}- \{I\}$. ({\em $I$ is declared sparse}) 
         }
      \Else{
 
           ({\em $I$ is declared dense and used as a pivot})

          Compute: 
         
          $\mathcal{Y} = \{J \in \mathcal{B};\; \TE(x_I,y_J,3\epsilon) < \infty\}$.
         
          $\mathcal{X} = \{I' \in \mathcal{T};\; \TE(x_I,x_{I'},2\epsilon) < \infty\}$.
         
          Add $(I'\times J',5\epsilon)$ to $\mathcal{R}$ for all pairs $(I',J')\in \mathcal{X} \times \mathcal{Y}$.


           $\mathcal{T} = \mathcal{T} - \mathcal{X}$.
         
      }
      
   }
   
    Output $\mathcal{S}$ and $\mathcal{R}$.

   \caption{\DSR{}$(G,n,w,d,\delta,\epsilon)$}
    \label{alg-dsr}
\end{algorithm}

\begin{algorithm}

   \Input{$G=G_{x,y}(I_G,J_G)$ with $I_G,J_G\subseteq \{0,1,\dots,n\}$, $w_1,w_2,d,n$ are powers of 2, with $w_1,w_2,d <n$ and $w_1<w_2$. 
Endpoints of $I_G$ and $J_G$ are multiples of $w_2$,
$\mathcal{S}$ is a subset of the $w_1$-decomposition of $I_G$ and $\delta,\epsilon,\theta$ are non-positive integral powers of 2.}
   
  \Output{A set $\mathcal{R}$ of certified $w_2$-boxes in $G$.}
   
   \vspace{1mm}
   \hrule\vspace{1mm}
   
    Initialization: $\mathcal{R}=\emptyset$.
   
   
  $\mathcal{B}$, the set of $y$-candidates, is the set of width $w$ $\delta$-aligned subintervals of $J_G$ (endpoints are multiples of $\delta w$.)

\For{$I' \in \mathcal{I}_{w_2}(I_G)$}{
			\If{$\mathcal{S}$ includes a subset of $I'$}{

       Select $c_1 \log^2 n$ intervals $I$ independently and uniformly at random from 
$\mathcal{I}_{w_1}(I')\cap \mathcal{S}$, to obtain $\mathcal{H}$. 
         
      \For{each  $I \in \mathcal{H}$ and each $J\in \mathcal{B}$}{
         
         \If{$\TE(x_I,y_J,\epsilon)<\infty$}{
             Let $J'$ be such that $I' \times J'$ is the adjusted diagonal extension of $I \times J$ in $I'\times J_G$.
            
            Let $p=\TE(x_{I'},y_{J'},3\epsilon)$
            
            \If{$p < \infty$}{
                For $k=0,\dots,\log n $, add $(I'\times J', p + \theta + 2^{-k})$ to $\mathcal{R}$.
               
            }
           

         }
      }
}
      
   }
   
    Output $\mathcal{R}$.
   \caption{\SSES{}$(G,\mathcal{S},n,w_1,$ $w_2,d,\delta,\epsilon,\theta)$
}
   \label{alg-sses}
\end{algorithm}

For the analysis we must 
prove that $\mathcal{R}$ contains an "adequate approximation" of some min-cost alignment path $\tau$. To state this precisely,
we start with definitions and observations that
formalize intuitive notions from the overview.

\noindent
{\bf Cost and normalized cost.}
The {\em cost} of a path $\tau$, $\cost(\tau)$,  
from $(u_1,u_2)$ to $(v_1,v_2)$ in a grid-graph (see Section~\ref{sec:intro}), is the sum of the edge costs,
and the {\em normalized cost} is $\ncost(\tau)=\frac{\cost(\tau)}{v_1-u_1}$.  $\cost(G(I \times J))$ (or simply $\cost(I\times J)$), the {\em cost of 
subgraph $G(I \times J)$}, is the min-cost of a path from the
lower left to the upper right corner. The {\em normalized cost} is $\ncost(I\times J)=\frac{1}{\mu(I)}\cdot \cost(I \times J)$.

We note the following simple fact without proof:
\begin{proposition} 
\label{prop:sym diff}
For $I,J,J' \subseteq \{0,\ldots,n\}$,
$|\cost(I \times J)-\cost(I \times {J'})| \leq |J \Delta J'|$, where $\Delta$ denotes symmetric difference.
\end{proposition}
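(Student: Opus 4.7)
The plan is to combine the metric (triangle) inequality for edit distance with a direct bound $\editd(y_J,y_{J'})\le |J\triangle J'|$. Since edit distance is a metric on strings, we have
\[
\bigl|\editd(x_I,y_J)-\editd(x_I,y_{J'})\bigr|\;\le\;\editd(y_J,y_{J'}),
\]
so it suffices to show $\editd(y_J,y_{J'})\le|J\triangle J'|$.

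To prove the latter, I would exploit the substring convention $y_J=y_{\{\min J+1,\ldots,\max J\}}$: the characters of $y_J$ are precisely those $y_k$ with $k\in J\setminus\{\min J\}$, and similarly for $y_{J'}$. Write $J=\{a,a+1,\ldots,b\}$ and $J'=\{a',a'+1,\ldots,b'\}$ (assume WLOG $a\le a'$; the disjoint case is handled separately and is even easier). I would construct an explicit alignment from $y_J$ to $y_{J'}$ as follows: if the intervals overlap, delete the prefix $y_{a+1}\cdots y_{a'}$ of $y_J$ (cost $a'-a$), keep the common middle $y_{a'+1}\cdots y_{\min(b,b')}$ aligned identity-to-identity, and either delete the remaining suffix of $y_J$ or insert the missing suffix of $y_{J'}$ (cost $|b-b'|$). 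The total cost is $(a'-a)+|b-b'|$, which matches $|J\triangle J'|$ since the symmetric difference of two overlapping integer intervals decomposes into exactly those two end segments. In the non-overlapping case, one can delete all of $y_J$ and insert all of $y_{J'}$ for cost $\mu(J)+\mu(J')\le|J\triangle J'|$.

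Combining these two steps yields the claimed inequality. There is no real obstacle here; the only mild care needed is to avoid an off-by-one error arising from the convention that $\min J\notin J\setminus\{\min J\}$, so the index $a$ (resp.\ $a'$) counted in $J\triangle J'$ does not contribute a character to $y_J$ (resp.\ $y_{J'}$). This can only make the bound easier, never tighter, so the inequality goes through unchanged.
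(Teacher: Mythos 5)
Your proof is correct; the paper simply states this proposition without proof, and your argument (the metric triangle inequality $|\editd(x_I,y_J)-\editd(x_I,y_{J'})|\le\editd(y_J,y_{J'})$ combined with the explicit delete-prefix/match-middle/adjust-suffix alignment showing $\editd(y_J,y_{J'})\le|J\Delta J'|$ for intervals) is exactly the standard justification the authors intend. Your handling of the $\min$-exclusion convention is also right: the prefix of $y_J$ to delete has the same cardinality as the corresponding end segment of $J\Delta J'$, so no off-by-one issue arises.
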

\noindent
{\bf Projections and subpaths.} The {\em horizontal projection} of a path $\tau=(i_1,j_1),\ldots,(i_{\ell},j_{\ell})$ is the set of $\{i_1,\ldots,i_{\ell}\}$.
We say that {\em $\tau$ crosses box $I \times J$ } if 
the vertices of $\tau$ belong to $I \times J$ and its horizontal projection 
is $I$. If the horizontal projection of $\tau$ contains $I'$, $\tau_{I'}$
denotes the (unique) minimal subpath of $\tau$ whose projection is $I'$.

\begin{proposition}
\label{prop:avg-cost}
Let $\tau$ be a path with horizontal projection $I$, and let
$I_1,\ldots,I_\ell$ be a decomposition of $I$.
Then the $\tau_{I_j}$ are edge-disjoint and so:

\begin{eqnarray*}
\cost(\tau) &\geq &\sum_{i=1}^\ell \cost(\tau_{I_i})\\
\ncost(\tau) &\geq &\sum_{i=1}^\ell \frac{\mu(I_i)}{\mu(I)}\ncost(\tau_{I_i}).
\end{eqnarray*}
\end{proposition}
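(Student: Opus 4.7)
The plan is to first establish edge-disjointness of the $\tau_{I_j}$ and then deduce the two displayed inequalities as direct consequences.

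The starting point is that paths in the grid graph $G_{x,y}$ are monotone in the $x$-coordinate: along any such $\tau$ the horizontal coordinate is non-decreasing (H-steps and D-steps increase it by 1, V-steps leave it fixed). Hence the set of vertices of $\tau$ whose $x$-coordinate lies in a given interval $I_j$ appears as a contiguous block along $\tau$, and the minimal subpath whose projection is $I_j$ is exactly the contiguous sub-sequence of $\tau$ from the first vertex with $x$-coordinate $\min(I_j)$ to the last vertex with $x$-coordinate $\max(I_j)$ (or, if the path visits $\min(I_j)$ or $\max(I_j)$ multiple times due to V-steps, a canonical choice among these endpoints). I would first record this explicit characterization of $\tau_{I_j}$.

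Given this description, the edge-disjointness is immediate from $x$-monotonicity. If some edge $e$ of $\tau$ belonged to both $\tau_{I_j}$ and $\tau_{I_{j'}}$ with $j<j'$, then since $e$ is one of an H-, V- or D-step, its two endpoints would have $x$-coordinates in both $I_j$ and $I_{j'}$; but consecutive parts of the decomposition share only the single endpoint $\max(I_j)=\min(I_{j+1})$ and non-consecutive parts are disjoint, so this is impossible. Thus $\tau_{I_1},\ldots,\tau_{I_\ell}$ partition the edges they collectively cover (they may share vertices at the shared endpoints, but not edges).

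Since all edge costs in $G_{x,y}$ are non-negative, summing the costs of a collection of edge-disjoint subpaths of $\tau$ is bounded above by $\cost(\tau)$. This yields the first inequality $\cost(\tau) \ge \sum_{i=1}^{\ell} \cost(\tau_{I_i})$. For the second inequality, note that since $\tau$ has horizontal projection $I$, its normalized cost is $\ncost(\tau)=\cost(\tau)/\mu(I)$, and likewise $\cost(\tau_{I_i})=\mu(I_i)\,\ncost(\tau_{I_i})$. Dividing the first inequality by $\mu(I)$ and substituting gives
\[
\ncost(\tau) \;\ge\; \sum_{i=1}^{\ell}\frac{\cost(\tau_{I_i})}{\mu(I)} \;=\; \sum_{i=1}^{\ell}\frac{\mu(I_i)}{\mu(I)}\,\ncost(\tau_{I_i}),
\]
as desired. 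The only substantive point is the careful handling of the shared endpoints $\max(I_j)=\min(I_{j+1})$ when defining $\tau_{I_j}$; once monotonicity pins down a canonical choice, everything reduces to the trivial fact that non-negative edge weights on a disjoint family of subpaths sum to at most the total cost.
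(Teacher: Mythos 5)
Your overall route is the paper's intended one (the paper treats the proposition as immediate from edge-disjointness, the only substantive point being that the vertical edges at shared boundaries belong to neither subpath), but your edge-disjointness argument has a gap at precisely the one place where something must be checked. Your explicit characterization of $\tau_{I_j}$ --- ``from the first vertex with $x$-coordinate $\min(I_j)$ to the last vertex with $x$-coordinate $\max(I_j)$'' --- describes the \emph{maximal} subpath with projection $I_j$, not the minimal one; with that choice, any V-edges of $\tau$ at the shared abscissa $\max(I_j)=\min(I_{j+1})$ would lie in both $\tau_{I_j}$ and $\tau_{I_{j+1}}$, and edge-disjointness would actually fail. The minimal subpath runs from the \emph{last} vertex of $\tau$ with $x$-coordinate $\min(I_j)$ to the \emph{first} vertex with $x$-coordinate $\max(I_j)$, and the hedge ``a canonical choice among these endpoints'' leaves exactly this point unresolved.

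Correspondingly, your disjointness step does not quite go through as written: for an edge shared by $\tau_{I_j}$ and $\tau_{I_{j+1}}$, having both endpoints with $x$-coordinate in $I_j\cap I_{j+1}=\{\max(I_j)\}$ is not impossible --- it just means the edge is a V-step at that abscissa, and such edges can certainly occur on $\tau$. What rules them out is minimality: $\tau_{I_j}$ ends at the first vertex at $x$-coordinate $\max(I_j)$ and $\tau_{I_{j+1}}$ begins at the last vertex at $x$-coordinate $\min(I_{j+1})$, so the boundary V-edges belong to neither subpath (this is exactly the observation the authors rely on). Once this is repaired, the rest of your argument --- non-negativity of edge costs gives $\cost(\tau)\ge\sum_i\cost(\tau_{I_i})$, and dividing by $\mu(I)$ together with $\cost(\tau_{I_i})=\mu(I_i)\,\ncost(\tau_{I_i})$ gives the normalized inequality --- is correct and matches the paper's reasoning.
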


\begin{definition}{\bf $(1-\delta)$-cover.}
\label{def:cover}
Let $\tau$ be a path with horizontal projection $I$ and let $I' \times J'$ be a (not
necessarily square) box
with $I' \subseteq I$. 
For $\delta\ge 0$
the box $I' \times J'$ {\em $(1-\delta)$-covers  $\tau$}
if the initial, resp. final, vertex of the subpath $\tau_{I'}$
is within $\delta \mu(I')$ vertical units of $(\min(I'), \min(J'))$, resp.  $(\max(I'),\max(J'))$.
\end{definition}

We will be interested in $(1-\delta)$-cover when $\delta \le 1/2$ but for simplicity we allow $\delta > 1/2$.

\begin{proposition}
\label{prop:cover}
Let $I' \times J'$ be a (not necessarily square) box that  $(1-\delta)$-covers path $\tau$. 
\begin{enumerate}
\item $\ncost(I' \times J') \leq \ncost(\tau_{I'})+2\delta$.
\item If $J''$ is any vertical interval such that $J' \cap J'' \ne \emptyset$, then $I' \times J''$
$(1-\delta-|J' \Delta J''|/\mu(I'))$-covers $\tau$.
\end{enumerate}
\end{proposition}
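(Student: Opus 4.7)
My plan is to deduce (1) from Propositions~\ref{prop:def} and~\ref{prop:sym diff}, and to deduce (2) directly from the triangle inequality on coordinates. Throughout let $j_0$ and $j_1$ denote the $y$-coordinates of the initial and final vertices of $\tau_{I'}$, and set $J_\tau=\{j_0,j_0+1,\dots,j_1\}$. The $(1-\delta)$-cover hypothesis on $I'\times J'$ translates to $|j_0-\min(J')|\le \delta\mu(I')$ and $|j_1-\max(J')|\le \delta\mu(I')$.

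For part (1), the subpath $\tau_{I'}$ is a path in $G_{x,y}$ from $(\min(I'),j_0)$ to $(\max(I'),j_1)$ and thus corresponds to an alignment of $x_{I'}$ with $y_{J_\tau}$. Proposition~\ref{prop:def}, applied to the induced subgraph $G_{x,y}(I'\times J_\tau)$, gives $\cost(\tau_{I'})\ge \editd(x_{I'},y_{J_\tau})$, and Proposition~\ref{prop:sym diff} then yields
\[
\editd(x_{I'},y_{J'}) \le \editd(x_{I'},y_{J_\tau}) + |J_\tau\Delta J'| \le \cost(\tau_{I'}) + |J_\tau\Delta J'|.
\]
A short check on integer intervals shows $|J_\tau\Delta J'|\le |\min(J_\tau)-\min(J')|+|\max(J_\tau)-\max(J')|$ (equality in the overlap case; in the disjoint case the correction is non-positive), so the cover hypothesis gives $|J_\tau\Delta J'|\le 2\delta\mu(I')$. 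Dividing by $\mu(I')$ produces the bound $\ncost(I'\times J')\le\ncost(\tau_{I'})+2\delta$.

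For part (2) the argument is a pure distance calculation. By the triangle inequality,
\[
|j_0-\min(J'')| \le |j_0-\min(J')| + |\min(J')-\min(J'')| \le \delta\mu(I') + |\min(J')-\min(J'')|,
\]
and analogously for $j_1$ and $\max(J'')$. Combining with the elementary fact that $|\min(J')-\min(J'')|\le|J'\Delta J''|$ and $|\max(J')-\max(J'')|\le|J'\Delta J''|$ (for integer intervals in the overlap regime, which is the only regime in which the claimed cover parameter $1-\delta-|J'\Delta J''|/\mu(I')$ is non-negative and hence the conclusion non-vacuous), each endpoint offset is bounded by $\delta\mu(I')+|J'\Delta J''|=(\delta+|J'\Delta J''|/\mu(I'))\mu(I')$, which is precisely the definition of $(1-\delta-|J'\Delta J''|/\mu(I'))$-cover. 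The main ``obstacle'' is merely the set-theoretic bookkeeping for integer intervals---overlap versus disjoint cases and the $|J|=\mu(J)+1$ off-by-one---but none of this causes any real difficulty once one fixes conventions.
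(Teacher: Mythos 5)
Your proof is correct and takes essentially the same route as the paper's: part (1) compares $J'$ against the vertical projection of $\tau_{I'}$ and invokes Proposition~\ref{prop:sym diff} with the $2\delta\mu(I')$ endpoint bound, and part (2) is the same corner-shift/triangle-inequality observation bounding the shift by $|J'\Delta J''|$. Your explicit restriction in (2) to the regime where $J'$ and $J''$ overlap is the same (implicit) assumption underlying the paper's one-line argument, and it is harmless in all of the paper's uses of the proposition, where $J''$ is a small vertical shift of $J'$.
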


\begin{proof}
For the first part, let $J^0$ be the vertical projection of $\tau_{I'}$.
Then $\ncost(I' \times J^0) \leq \ncost(\tau_{I'})$ since $\tau_{I'}$ joins the lower left corner
of $I' \times J^0$ to the upper right corner.  Since $I' \times J'$
$(1-\delta)$-covers $\tau$, $|J' \Delta J^0| \leq 2\delta \mu(I')$, 
and by Proposition~\ref{prop:sym diff}, $\ncost(I' \times J') \leq \ncost(\tau_{I'})+2\delta$.

For the second part, observe that the vertical distance between the lower (resp. upper) corners
of $I' \times J'$ and $I' \times J''$ is at most $|J' \Delta J''|$.
\end{proof}

\noindent{\bf $\delta$-aligned boxes.}
A $y$-interval $J$ of width $w$ 
is {\em $\delta$-aligned} for $\delta \in (0,1]$ if its endpoints are multiples of $\delta w$ (which we require to be an integer).

\begin{proposition}\label{prop:align-box}
Let $\tau$ be a path that crosses  $I \times J$.
Suppose that  $I' \subseteq I$ has width $w$, and $\mu(J) \geq w$.
\begin{enumerate}
\item There is an interval $J^1$ with $\mu(J^1)=\mu(I')$ so that
$\ncost(I'\times J^1) \leq 2\ncost(\tau_{I'})$ and $I' \times J^1$ $(1-\ncost(\tau_{I'}))$-covers $\tau$.
\item
There is a $\delta$-aligned interval $J'$ of width $w$ so that 
$\ncost(I' \times J') \leq 2\ncost(\tau_{I'})+\delta$
and $I' \times J'$  $(1-\ncost(\tau_{I'})-\delta)$-covers $\tau.$
Moreover, if $J$ is $\delta$-aligned and $w|\mu(J)$ then $J' \subseteq J$.
\end{enumerate}
($J^1$, $J'$ are ``$\tau$-matches'' for $I'$, in the sense of the overview.)
\end{proposition}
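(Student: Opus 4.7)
Let $(\min(I'), a)$ and $(\max(I'), b)$ be the endpoints of the subpath $\tau_{I'}$. Because every edge in a grid graph is monotone in both coordinates, $a \leq b$, and by counting H- and V-steps against the horizontal displacement $w$ and vertical displacement $b-a$ one immediately obtains $\cost(\tau_{I'}) \geq |w - (b-a)|$. For Part~1 I would set $J^1 = [a, a+w]$, so that the lower-left corner of $I' \times J^1$ coincides exactly with the start of $\tau_{I'}$, while the upper-right corner is at vertical distance $|b-a-w| \leq \cost(\tau_{I'}) = \ncost(\tau_{I'})\mu(I')$ from the end. Both corner distances are thus at most $\ncost(\tau_{I'})\mu(I')$, giving the $(1-\ncost(\tau_{I'}))$-cover. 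For the cost bound, let $J^0 = [a,b]$ be the vertical projection of $\tau_{I'}$; since $\tau_{I'}$ itself is a path from the lower-left to upper-right corner of $I' \times J^0$, we have $\ncost(I' \times J^0) \leq \ncost(\tau_{I'})$, and since $|J^0 \Delta J^1| = |b - a - w| \leq \cost(\tau_{I'})$, Proposition~\ref{prop:sym diff} gives $\editd(x_{I'}, y_{J^1}) \leq \editd(x_{I'}, y_{J^0}) + \cost(\tau_{I'}) \leq 2\cost(\tau_{I'})$, i.e., $\ncost(I' \times J^1) \leq 2\ncost(\tau_{I'})$.

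For Part~2 I would round the lower endpoint of $J^1$ to a multiple $a^*$ of $\delta w$ for which $J' = [a^*, a^* + w]$ lies inside $J$; since $\mu(J) \geq w$ such an $a^*$ exists with $|a - a^*| \leq \delta w /2$ (with a minor adjustment at the two ends of $J$). Then $|J^1 \Delta J'| \leq \delta w$, so a second application of Proposition~\ref{prop:sym diff} yields
\[
\ncost(I' \times J') \;\leq\; \ncost(I' \times J^1) + \delta \;\leq\; 2\ncost(\tau_{I'}) + \delta.
\]
Moreover each corner of $I' \times J'$ is within $\delta w / 2 \leq \delta\mu(I')$ of the corresponding corner of $I' \times J^1$, hence within $(\ncost(\tau_{I'}) + \delta)\mu(I')$ of the corresponding endpoint of $\tau_{I'}$ by the triangle inequality, producing the $(1 - \ncost(\tau_{I'}) - \delta)$-cover.

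The conceptual point is that aligning the lower-left corner of $J^1$ with the start of $\tau_{I'}$ makes both $|J^0 \Delta J^1|$ and the upper-corner gap collapse to the single quantity $|b-a-w|$, which is exactly what yields the clean factor $2$ in the cost bound (as opposed to the weaker factor one would obtain by first establishing the cover property and then invoking Proposition~\ref{prop:cover} as a black box). The only real technicality is the bookkeeping for the rounding near the two ends of $J$ in Part~2, which is routine because $\mu(J) \geq w$ guarantees that a valid $\delta$-aligned width-$w$ subinterval always exists close to $J^1$.
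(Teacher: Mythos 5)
Your argument is essentially the paper's own proof: take the vertical projection $J^0$ of $\tau_{I'}$, choose a width-$w$ interval $J^1$ nested with $J^0$ (your bottom-anchored $J^1=[a,a+w]$ is one instance of the paper's ``arbitrary'' choice), bound $\ncost(I'\times J^1)$ directly via Proposition~\ref{prop:sym diff}, and then pass to the nearest $\delta$-aligned $J'$, paying $\delta$ in both the cost and the cover parameter. The only caveat is the containment $J'\subseteq J$ near the top of $J$, where rounding alone is not enough with your anchoring (one should instead re-anchor $J^1$ at $[\max(J)-w,\max(J)]$, which preserves all the stated constants); the paper's proof glosses over this point in exactly the same way, so the approaches coincide.
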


\begin{proof}
Let $\tau'=\tau_{I'}$ be the min-cost subpath of $\tau$ that projects to $I'$.
Let $J^0$ be the vertical projection of $\tau'$. Note that $|\mu(J^0)-\mu(I')| \leq \cost(\tau')$.
Arbitrarily choose an interval $J^1$ of width
$\mu(I')$ that either contains or is contained in $J^0$.  Then 
$|J^0 \Delta J^1|=|\mu(J^0)-\mu(I')| \leq \cost(\tau')$, so  by Proposition
~\ref{prop:sym diff} $\ncost(I' \times J^1) \leq 2\ncost(\tau')$.  Furthermore $I' \times J^1$ $(1-\ncost(\tau'))$-covers $\tau'$.  
Let $J'$ be the closest $\delta$-aligned interval to $J^1$, so
$|J' \Delta J^1| \leq \delta \mu(I')$ and so $\ncost(I' \times J') \leq \ncost(I' \times J^1) + \delta \leq 
2 \ncost(\tau')+\delta$.  Finally since $I' \times J'$ 
is a vertical shift of $I' \times J^1$
of  normalized length at most $\delta$, we have $I' \times J'$ $(1-\ncost(\tau')-\delta)$-covers $\tau'$. 
\end{proof}

\begin{definition}
\label{def:DE}
\begin{enumerate}
\item The {\em main diagonal} of a box is the segment joining the lower
left and upper right corners.  
\item For a square box $I' \times J'$, and $I' \subseteq I$, 
the {\em true diagonal extension} of $I' \times J'$ to $I$ is the square
box $I \times \hat{J}$ 
whose main diagonal contains the main diagonal of $I' \times J'$ (in the infinite grid graph on $\Z \times \Z$).
\item
For a $w$-box $I' \times J'$ contained in strip $I \times J$, 
the {\em adjusted diagonal extension of $I' \times J'$ within $I \times J$} 
is the box $I \times J''$ obtained from the true diagonal extension of $I' \times J'$ to $I$
by the minimal vertical shift so that it is a subset of $I \times J$.
(The adjusted diagonal extension is the true diagonal extension if
the true diagonal extension is contained in $I \times J$; otherwise it's lower edge
is $\min(J)$ or its upper edge is $\max(J)$.)
\end{enumerate}
\end{definition}

\begin{center}
	\begin{figure}[ht]
\centering
\begin{tikzpicture}[scale=0.3,shorten >=1mm,>=latex]
 \tikzstyle gridlines=[color=black!20,very thin]
 


\draw[color=white,thin,fill=gray!30] (-20,18) rectangle (-6,30);

\draw[-,dashed] (-20,10) -- (-6,10);
\draw[-,dashed] (-20,28) -- (-6,28);
\draw[-,dashed] (-20,6) -- (-20,32);
\draw[-,dashed] (-6,6) -- (-6,32);
\draw[color=black,thick,fill=white] (-15,22.25) rectangle (-13,24);
\draw[-,dotted,thick] (-20,18) -- (-6,30);


\node at (-14,21.5)[scale=.8] {$I'$};
\node at (-12.25,23.25)[scale=.8] {$J'$};
\node at (-21,19) {$J$};
\node at (-13,8.5) {$I$};
\node at (4.5,23.5) {$\hat{J}$};
\node at (-4.5,23.5) {$\hat{J}$};

\draw [black,<->,thin] (-20.5,10) -- (-20.5,28.25);
\draw [black,<->,thin] (-5.5,18) -- (-5.5,30);
\draw [black,<->,thin] (-20,9.5) -- (-6,9.5);




\draw[color=white,thin,fill=gray!30] (0,16) rectangle (14,28);

\draw[-,dashed] (0,10) -- (14,10);
\draw[-,dashed] (0,28) -- (14,28);
\draw[-,dashed] (0,6) -- (0,32);
\draw[-,dashed] (14,6) -- (14,32);
\draw[color=black,thick,fill=white] (5,22.25) rectangle (7,24);
\draw[-,dotted,thick] (0,18) -- (14,30);


\node at (6,21.5)[scale=.8] {$I'$};
\node at (7.75,23.25)[scale=.8] {$J'$};
\node at (-1,19) {$J$};
\node at (7,8.5) {$I$};
\node at (15.5,21.5) {${J''}$};

\draw [black,<->,thin] (-.5,10) -- (-.5,28.25);
\draw [black,<->,thin] (14.5,16) -- (14.5,28);
\draw [black,<->,thin] (0,9.5) -- (14,9.5);

 \node at (-13,5)[scale=1] {(a)};
\node at (7,5)[scale=1] {(b)};

\end{tikzpicture}
\caption{Illustration of diagonal extension: Given a square box $I'\times J'$, the true diagonal extension is the grey box $I\times \hat{J}$ in $(a)$ and the adjusted diagonal extension is the grey box $I\times {J''}$ in $(b)$.}
   \label{fig:diagonal-extension}
\end{figure}

\end{center}

\begin{proposition}\label{prop:extension}
Suppose path $\tau$ crosses $I \times J$ and $\ncost(\tau_I) \leq \epsilon$. Let $w=\mu(I)$. Let $I' \times J'$ be a $w'$-box  that $(1-\delta)$-covers 
$\tau_{I'}$ and $I'\subseteq I$.  
Then the adjusted diagonal extension $I \times J''$ of $I' \times J'$  within $I \times J$  $(1-(\epsilon + \delta  \frac{w'}{w}))$-covers $\tau$ and  satisfies $\ncost(I \times J'') \leq 3\epsilon + 2\delta \frac{w'}{w}$. 
	\end{proposition}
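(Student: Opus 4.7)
The plan is to first control the vertical distances from the endpoints of $\tau$ to the corners of the \emph{true} diagonal extension, then argue that the adjustment step only moves those corners in a controlled way, and finally convert the covering estimate into the cost bound via Proposition~\ref{prop:cover}.

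First I would fix notation: let $(\min I, a)$ and $(\max I, b)$ be the endpoints of $\tau$, and let $(\min I', a')$, $(\max I', b')$ be the endpoints of $\tau_{I'}$. The hypothesis that $I' \times J'$ $(1-\delta)$-covers $\tau_{I'}$ gives $|a' - \min J'| \leq \delta w'$ and $|b' - \max J'| \leq \delta w'$. The true diagonal extension $I \times \tilde{J}$ has lower-left height $a_0 := \min J' - (\min I' - \min I)$ and upper-right height $b_0 := \max J' + (\max I - \max I')$. Since $\cost(\tau_I) \leq \epsilon w$, the subpath of $\tau$ from $(\min I, a)$ to $(\min I', a')$, with horizontal extent $\min I' - \min I$ and vertical extent $a' - a$, must satisfy $|(a' - a) - (\min I' - \min I)| \leq \epsilon w$ (any path of horizontal extent $h$ and vertical extent $v$ has cost at least $|h - v|$). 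A triangle inequality then gives $|a - a_0| \leq \epsilon w + \delta w'$, and symmetrically $|b - b_0| \leq \epsilon w + \delta w'$. Hence the true extension $(1-(\epsilon + \delta w'/w))$-covers $\tau$.

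Second, if the true extension already fits inside $I \times J$, then the adjusted extension equals the true extension and the covering statement is immediate. Otherwise a minimal shift of some size $s > 0$ is required; WLOG $a_0 < \min J$ and $s = \min J - a_0$. Since $a \in J$ forces $a \geq \min J$, the shift satisfies $s \leq a - a_0 \leq \epsilon w + \delta w'$. After shifting, the new lower-left $(\min I, \min J)$ sits at vertical distance $a - \min J = (a - a_0) - s \leq \epsilon w + \delta w'$ from the start of $\tau$. I still need to bound $|(b - b_0) - s|$, the distance from $(\max I, b)$ to the new upper-right $(\max I, b_0 + s)$; this is the delicate step, and I would close it by combining the bound $|b - b_0| \leq \epsilon w + \delta w'$ with the relation $|(b - a) - w| \leq \epsilon w$ (forced by $\cost(\tau_I) \leq \epsilon w$) to relate $b$ and the shift $s$ directly, so that the sign effects work out in our favor.

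Finally, once $I \times J''$ is shown to $(1 - (\epsilon + \delta w'/w))$-cover $\tau$, the first part of Proposition~\ref{prop:cover} applied with path $\tau$ and cover parameter $\epsilon + \delta w'/w$ yields
\[
\ncost(I \times J'') \;\leq\; \ncost(\tau) + 2\bigl(\epsilon + \tfrac{\delta w'}{w}\bigr) \;\leq\; 3\epsilon + 2\tfrac{\delta w'}{w},
\]
exactly the stated cost bound. The main obstacle is the adjustment step: each of $|a - a_0|$, $|b - b_0|$, and $s$ is individually at most $\epsilon w + \delta w'$, but a naive triangle-inequality bound on $|(b - b_0) - s|$ gives $2(\epsilon w + \delta w')$, so one has to argue more carefully (by case analysis on the sign of $b - b_0$ or by the direct bookkeeping sketched above) that the shift cannot worsen the far corner beyond $\epsilon w + \delta w'$.
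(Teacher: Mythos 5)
Your proposal is correct, and the step you flag as delicate does close along exactly the lines you sketch: with $s\le a-a_0$ (which you already derived from $a\ge \min J$) and $b_0-a_0=w$, split on where $b$ sits relative to the shifted top corner $b_0+s=\max J''$. If $b\ge b_0+s$, the distance is at most $b-b_0\le \epsilon w+\delta w'$; if $b<b_0+s$, then $(b_0+s)-b\le (b_0-a_0)-(b-a)=w-(b-a)\le \cost(\tau_I)\le \epsilon w$, so the shift never makes the far corner worse than the true extension did. This is a genuinely different presentation from the paper's: there, the adjusted extension is handled by imagining $y$ padded at both ends so that the true-extension bound (its Case~1, which matches your first paragraph) applies to an auxiliary box, and then arguing that the minimal shift only helps, via a dichotomy on whether the final vertex of $\tau_I$ lies above $\max(J'')$ --- if not, $J''$ contains the vertical projection of $\tau_I$ and one gets a $(1-\epsilon)$-cover outright; if so, the shift moves both corners toward the endpoints of $\tau_I$. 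Your coordinate bookkeeping is that same dichotomy made explicit and is arguably more self-contained, while the paper's padding device makes the ``WLOG a one-sided shift'' reduction cleaner. Two cosmetic repairs: Definition~\ref{def:cover} measures the cover from the initial and final vertices of $\tau_I$, not of $\tau$, so take $a,b$ to be those heights (all your estimates hold verbatim, since the subpaths you use are subpaths of $\tau_I$); and in the final application of Proposition~\ref{prop:cover} the bound is $\ncost(I\times J'')\le \ncost(\tau_I)+2(\epsilon+\delta w'/w)$ with $\ncost(\tau_I)\le\epsilon$ supplied by the hypothesis --- the hypothesis does not directly bound $\ncost(\tau)$.
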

	
\begin{proof}
It suffices to show that $I \times J''$ $(1-(\epsilon+\delta w'/w))$-covers
$\tau$, since then Proposition~\ref{prop:cover} gives us the needed upper bound
on $\ncost(I \times J'')$.

\noindent
{\bf Case 1.} $I \times J''$ is equal to the
true diagonal extension.
Let ${\tau}_{I}$, ${\tau}_{I'}$ be the min-cost subpath of $\tau$ that projects on $I$ and $I'$ respectively.   

We will give an upper bound  on the vertical distance from the final vertex of $\tau_I$
to the upper right corner of $I \times J''$. Let ${\tau}_{u}$ be the
subpath of $\tau$ that starts at the final vertex of ${\tau}_{I'}$ and ends at the final vertex of ${\tau}_{I}$.  Let $I_u$ and $J_u$ be the horizontal and vertical projections of ${\tau}_{u}$.
The start vertex of $\tau_u$ has vertical distance at most $\delta w'$ from the
main diagonal of $I \times J''$.  The final vertex of $\tau_u$ therefore
has vertical distance at most $\delta w'+ |\mu(I_u)-\mu(J_u)|$ from the
upper corner of $I \times J''$, and this is at most $\delta w' + \epsilon w$, since $\cost(\tau) \geq |\mu(I_u)-\mu(J_u)|$.  A similar argument gives the same upper bound on the vertical distance between the start vertex of $\tau_I$ and the lower left corner of $I \times J''$, so $G''(I \times J'')$  $(1-(\epsilon + \delta  w'/w))$-covers $\tau$.

\noindent
{\bf Case 2.}
$I \times J''$ is not the
true diagonal extension.   Extend the set $J$ to $\bar{J}$ by adding $\mu(I)$
elements before and after.  (It is possible that $\bar{J}$ is not a subset
of $\{0,\ldots,n\}$; in this case we imagine that $y$ is extended to a sequence
$y^*$ by adding $\mu(I)$ new symbols to the beginning and end of $y$ and that
we are in the grid graph $G_{x,y^*}$.) Let $I \times J'''$ be the adjusted diagonal extension of $I' \times J'$
to $I \times \bar{J}$.  This is equal to the true diagonal extension, and so by Case 1, 
$I \times J'''$ $(1-(\epsilon + \delta  w'/w))$-covers $\tau$.   We claim that
$I \times J''$ does also.  Assume $J'''$ falls
below $\min(J)$ (the case that $J'''$ is above $\max(J)$ is similar). 
Then $I \times J''$ is obtained by shifting $I \times J'''$
up until the lower edge coincides with $\min(J)$.
The lower vertex of $\tau_I$ has $y$-coordinate at least $\min(J)$.  

If the $y$-coordinate of the upper vertex of $\tau_I$ is at most $\max(J'')$, then
$J''$ contains vertical projection of $\tau_I$, and $I \times J''$
$(1-\epsilon)$-covers $\tau$.
If the $y$-coordinate of the upper vertex of $\tau_I$ is greater than  $\max(J'')$,
shifting $I \times J'''$ up to $I \times J''$ can only
decrease the vertical distance from the the lower left corner to the start of $\tau_I$
and from the upper corner to the end of $\tau_I$, so $I \times J''$ $(1-(\epsilon + \delta  w'/w))$-covers $\tau$. 
\end{proof}

	

\smallskip\noindent
{\bf $(k,\zeta)$-approximation of a path. }
This formalizes the
notion of adequate approximation of a path  by a certified box sequence.
\begin{definition}
\label{def:approx}
Let $G$ be the grid graph on $I\times J$. Let $\zeta \in [0,1]$. Let $\tau$ be a path that crosses $G$. A sequence of certified boxes $\sigma=\{(I_1\times J_1,\epsilon_1),(I_2 \times J_2,\epsilon_2),\dots, (I_\ell \times J_\ell,\epsilon_\ell)\}$  
 {\em $(k,\zeta)$-approximates} $\tau$ provided that:
\begin{enumerate}
\item $I_1,\ldots,I_{\ell}$ is a decomposition of $I$. 
\item For each $i\in [\ell]$, $I_i \times J_i$ $(1-\epsilon_i)$-covers $\tau$.
\item $\sum_{i\in [\ell]}  \epsilon_i \mu(I_i) \le (k \cdot \ncost(\tau) + \zeta) \mu(I)$.
\end{enumerate}
\end{definition}
 
\begin{proposition}
\label{prop:approx decomp}
Suppose path $\tau$ crosses  $I \times J$ and $I_1,\ldots, I_m$ is a decomposition of $I$, and 
for $i\in [m]$, $\sigma_i$ is a certified box sequence that $(k,\zeta)$-approximates
$\tau_{I_i}$. Then $\sigma_1,\ldots,\sigma_m$ $(k,\zeta)$-approximates
$\tau$.
\end{proposition}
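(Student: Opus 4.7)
The plan is a straightforward verification of the three conditions in Definition~\ref{def:approx} for the concatenated sequence $\sigma = \sigma_1, \ldots, \sigma_m$, using the corresponding conditions for each $\sigma_i$ together with Proposition~\ref{prop:avg-cost}.

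First I would handle the decomposition property (condition~1). By hypothesis each $\sigma_i$ yields a decomposition of $I_i$, and $I_1, \ldots, I_m$ is a decomposition of $I$, so the ordered concatenation of these sub-decompositions is itself a decomposition of $I$. Next the covering property (condition~2) transfers trivially: for any box $(I', J', \epsilon')$ in $\sigma_i$ we have $I' \subseteq I_i \subseteq I$, and since $(1-\epsilon')$-covering of a path depends (by Definition~\ref{def:cover}) only on the initial and final vertices of the subpath indexed by $I'$, and $\tau_{I'} = (\tau_{I_i})_{I'}$, the box that $(1-\epsilon')$-covers $\tau_{I_i}$ automatically $(1-\epsilon')$-covers $\tau$.

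The one quantitative step is condition~3. Applying the hypothesis to each $\sigma_i$ gives
\[
\sum_{(I',J',\epsilon') \in \sigma_i} \epsilon' \mu(I') \;\le\; \bigl(k \cdot \ncost(\tau_{I_i}) + \zeta\bigr)\mu(I_i) \;=\; k\cdot \cost(\tau_{I_i}) + \zeta\,\mu(I_i).
\]
Summing over $i \in [m]$ and invoking Proposition~\ref{prop:avg-cost} (which asserts that the subpaths $\tau_{I_i}$ are edge-disjoint, hence $\sum_i \cost(\tau_{I_i}) \le \cost(\tau) = \ncost(\tau)\,\mu(I)$) together with $\sum_i \mu(I_i) = \mu(I)$, I would conclude
\[
\sum_{(I',J',\epsilon') \in \sigma} \epsilon' \mu(I') \;\le\; k\cdot \cost(\tau) + \zeta\,\mu(I) \;=\; \bigl(k\cdot\ncost(\tau) + \zeta\bigr)\mu(I),
\]
which is exactly condition~3 for $\sigma$.

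There is no genuine obstacle here; the content is purely bookkeeping. The only place where the proof uses more than the definitions is the inequality $\sum_i \cost(\tau_{I_i}) \le \cost(\tau)$, which is precisely the edge-disjointness content of Proposition~\ref{prop:avg-cost}. Everything else follows by linearity of the sum and the fact that $(1-\epsilon)$-covering is a local property of the endpoints of the indexed subpath.
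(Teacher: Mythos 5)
Your proposal is correct and follows essentially the same route as the paper: conditions 1 and 2 are immediate (the paper calls them obvious, while you spell out why covering is local to $\tau_{I'}$), and condition 3 is obtained exactly as in the paper by splitting the sum according to the $\sigma_i$, applying the hypothesis to each, and invoking the edge-disjointness inequality of Proposition~\ref{prop:avg-cost}. No gaps; nothing further needed.
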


\begin{proof}
It is obvious that $\sigma_1,\ldots,\sigma_m$ is a sequence of certified boxes, that the horizontal projections of all the boxes form a
decomposition of $I$ and that each box $(I'_i,J'_i,\epsilon_i)$ $(1-\epsilon_i)$-covers $\tau$.
The final condition is verified by splitting the sum on the left into $m$ sums where
the $j$th sum includes terms for $I'_i \subseteq I_j$, and is bounded above by
 $(k \cdot \ncost(\tau_{I_j})+\zeta)\mu(I_j)$. Summing the latter sum over $j$ and using Proposition
~\ref{prop:avg-cost} we get that $\sigma_1,\ldots,\sigma_m$ $(k,\zeta)$-approximates the path $\tau$.
\end{proof}

\smallskip\noindent
{\bf $(d,\delta,\epsilon)$-dense and -sparse.} 
Fix a box $I \times J$. An interval $I' \subseteq I$ of width $w$ is {\em $(d,\delta,\epsilon)$-sparse} (wrt $I \times J$) for integer $d$ and $\epsilon,\delta \in (0,1]$
if there are at most $d$ $\delta$-aligned $w$-boxes in $I' \times J$ of $\ncost$ at most $\epsilon$, and is 
{\em $(d,\delta,\epsilon)$-dense} otherwise.

\smallskip\noindent
{\bf  The sets $\mathcal{S}_i$ and $\mathcal{S}_i(I')$.}  For fixed $k$ in the outer loop of \CA{},
the set $\mathcal{S}$ created in iteration $i$ of \CA{} is denoted by $\mathcal{S}_i$. For any interval $I'$, $\mathcal{S}_i(I')$ is the set
of subintervals of $I'$ belonging to $\mathcal{S}_i$.

\smallskip\noindent 
{\bf Successful Sampling.}
The algorithm uses random sampling in two places: in \DSR{} to test density of width $w_1$ intervals $I$, and in \SSES{} to sample sparse intervals $I \in \mathcal{I}_{w_1}(I')$.

We now specify what we need from the random sampling. 

\begin{definition}
\label{def:sampling} We define two events where the latter depends on the former:
\begin{itemize}
\item A run of the algorithm has {\em successful classification sampling} provided that
for all $k \in \{0,\ldots,$ $4/\theta-8\}$ and $i \in \{0,\ldots,\log \frac{1}{\theta}\}$ in the nested \CA{} loops, in each call to \DSR(),
for every width $w_1$ interval $I$ with endpoints a multiple of $w_1$,
if $I$ is $(\frac{d}{\epsilon_i},\frac{\epsilon_i}{8},\epsilon_i)$-dense interval (in terms of global \CA{}'s parameters with respect to \CA{}'s $I \times J$), \DSR{} does not
assign $I$ to $\mathcal{S}$ and
if $I$ is $(\frac{d}{4\epsilon_i},\frac{\epsilon_i}{8},\epsilon_i)$-sparse and \DSR() picks it from $\mathcal{T}$, \DSR{} places $I$  in $\mathcal{S}$.

\item Let $\tau$ be a source-sink path in $G_{x,y}$. 
For each $I' \in \mathcal{I}_{w_2}$, for $I=J=\{km,km+1,\dots,(k+8)m\}$, where $k \in \{0,\dots,\frac{4}{\theta}-8\}$ 
is the largest integer such that $km \le \min (I') - \frac{\theta n}{2}$ or 0 if $\min (I') < \frac{\theta n}{2}$, define set $\mathcal{S}_i(I')$ to be $\mathcal{S} \cap \mathcal{I}_{w_1}(I')$ where $\mathcal{S}$ is returned by the invocation of \DSR{}$(G(I \times J),n,w_1,\frac{d}{\epsilon_i}$, $\frac{\epsilon_i}{8},\epsilon_i)$. Furthermore, define the set of {\em winners}
$\mathcal{W}_i(I') = \{ I'' \in \mathcal{S}_i(I'),\, \epsilon_i \ge 3\ncost(\tau_{I''}) + \ncost(\tau_{I'}) + \theta\}$.

We say a run of the algorithm has {\em successful extension sampling for $\tau$} provided that: for each $i \in \{0,\ldots,\log \frac{1}{\theta}\}$ and $I' \in \mathcal{I}_{w_2}$, if $|\mathcal{W}_i(I')| \ge \frac{1}{64} |\mathcal{S}_i(I')|$ and both are non-empty then at least one interval from $\mathcal{W}_i(I')$ will be included in the set $\mathcal{H}$ sampled in iteration $I'$ of the outer loop during invocation of \SSES{}$(G(I\times J),\mathcal{S},$ $n,w_1,w_2,\frac{d}{\epsilon_i}$, $\frac{\epsilon_i}{8},\epsilon_i, \theta)$.
\end{itemize}
We say a run of the algorithm has {\em successful sampling for $\tau$} if it has successful classification sampling and successful extension sampling for $\tau$.
\end{definition}

We will need the following variant of  the Chernoff bound.

\begin{proposition}[Chernoff bound]\label{lem:chernoff}
There is a constant $c_0$ such that the following is true. Let $1\le d \le n$ be integers, $B$ be a set and $E\subseteq B$.
Let us sample $c_0 \frac{|B|}{d} \log n$ samples from $B$ independently at random with replacement.
\begin{enumerate}
\item If $|E| \ge d$ then the probability that less than $\frac{c_0}{2} \log n$ samples are from $E$ is at most $1/n^{10}$.
\item If $|E| \le d/4$ then the probability that at least $\frac{c_0}{2} \log n$ samples are from $E$ is at most $1/n^{10}$.
\end{enumerate}
\end{proposition}

\begin{proposition}
\label{prop:sampling}
For large enough $n$ and any source-sink path in $G_{x,y}$,
a run of  \CA{} has {\em successful sampling for $\tau$} with probability at least $1-n^{-7}$.
\end{proposition}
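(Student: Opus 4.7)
The plan is to reduce the claim to a union bound over all random sampling events performed by CA, bounding each event's failure probability separately using Proposition \ref{lem:chernoff} (for DSR) and a direct exponential tail estimate (for SSES). Both types of events are of polynomial cardinality in $n$ and each fails with probability at most $n^{-10}$ for large enough $n$, so the union bound yields the desired $n^{-7}$.

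First I would bound the total number of sampling events. CA's outer loops run over $k\in\{0,\ldots,4/\theta\}$ and $i\in\{0,\ldots,\log(1/\theta)\}$, so at most $O((\log n)/\theta)\le O(n\log n)$ iteration pairs. Within a single call to DSR the \texttt{while} loop samples at most once per interval of $\mathcal{T}=\mathcal{I}_{w_1}(I_G)$, i.e.\ at most $n/w_1$ samplings. Within a single call to SSES there is at most one sample $\mathcal{H}$ per $w_2$-interval $I'\in\mathcal{I}_{w_2}(I_G)$, at most $n/w_2$ of them. Therefore the total number of sampling events over the whole run of CA is $\mathrm{poly}(n)$.

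Next I would analyze each type. For a single DSR sampling on a candidate pivot $I$, let $B=\mathcal{B}$ and $E=\{J\in\mathcal{B}:\ed(x_I,y_J)\le\epsilon_i\}$. DSR draws $c_0|\mathcal{B}|\frac{\epsilon_i}{d}\log n$ samples and declares $I$ sparse iff fewer than $\tfrac{c_0}{2}\log n$ hit $E$. Since the DSR threshold parameter is $d/\epsilon_i$, Proposition \ref{lem:chernoff} (applied with $d$ replaced by $d/\epsilon_i$) gives: if $I$ is $(d/\epsilon_i,\epsilon_i/8,\epsilon_i)$-dense then $|E|\ge d/\epsilon_i$ and the probability of wrongly declaring $I$ sparse is at most $n^{-10}$; and if $I$ is $(d/(4\epsilon_i),\epsilon_i/8,\epsilon_i)$-sparse then $|E|\le d/(4\epsilon_i)$ and the probability of wrongly declaring $I$ dense is at most $n^{-10}$. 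Note that although pivots are chosen adaptively, each DSR sampling is independent of prior rounds conditioned on the history, so the bound applies to every such event.

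For a single SSES sampling, if $|\mathcal{W}_i(I')|\ge|\mathcal{S}_i(I')-\mathcal{W}_i(I')|/32$ then winners form at least a $1/33$ fraction of $\mathcal{S}_i(I')\cap\mathcal{I}_{w_1}(I')$, so a single uniform draw misses $\mathcal{W}_i(I')$ with probability at most $32/33$. With $c_1\log^2 n$ independent draws and $c_1\ge 120$, the miss probability is at most
\[
(32/33)^{c_1\log^2 n}\;\le\;\exp\bigl(-c_1(\log n)^2\ln(33/32)\bigr),
\]
which for large enough $n$ is at most $n^{-10}$. Finally, I would union-bound over the $\mathrm{poly}(n)$ sampling events: the total failure probability is at most $\mathrm{poly}(n)\cdot n^{-10}\le n^{-7}$ for $n$ large enough, as required.

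The only mildly delicate point is the adaptivity of DSR's pivot selection, but it is harmless because the randomness of each new Chernoff test is independent of the previously seen samples; the rest is routine counting and choice of constants ($c_0$ as in Proposition \ref{lem:chernoff} and $c_1\ge 120$).
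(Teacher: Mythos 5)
Your proof is correct and follows essentially the same route as the paper: apply Proposition~\ref{lem:chernoff} (with the local parameter $d/\epsilon_i$) to each DSR test, a direct $(1-\Omega(1))^{c_1\log^2 n}\le n^{-10}$ tail bound to each SSES sample, and a union bound over the polynomially many $(k,i,I)$ events. Your remark on the adaptivity of pivot selection and the slightly sharper $1/33$ winner fraction are fine refinements, but the argument is the same as the paper's.
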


\begin{proof}
 By Proposition \ref{lem:chernoff}, the probability that the condition of successful classification sampling fails for a particular $k,i,I$
is at most $n^{-10}$. The number of choices for $k,i,I$ is at most $\frac{4}{\theta} \cdot (1 + \log \frac{1}{\theta}) \frac{n}{w_1}
\leq n^2$ (for large enough $n$)  so the overall probability that  (1) fails 
is at most  $n^{-8}$.
 
The probability that the condition of successful extension sampling for $\tau$ fails for a particular $i,I'$ is $(1-\frac{1}{64})^{c_1 \log^2 n} \leq n^{-10}$.
The number of $i,I'$ is less than $n^2$ (for large enough $n$), so the overall failure probability is at most $n^{-8}$ .
\end{proof}

We assume that coins are fixed in a way that gives successful sampling for some shortest source-sink path in $G_{x,y}$.

\subsection{Properties of the covering algorithm}

The main property of \CA{} to be proved is:

\begin{theorem}\label{thm:pla}
Let $x,y$ be strings of length $n$, $1/n \le \theta \le 1 $ be a real. Let $w_1,w_2,d$ satisfy
$w_1\le \theta w_2$, $w_2 \le  \frac{\theta n}{4}$ and $1 \leq d \leq \frac{\theta n}{w_1}$. Assume $n,w_1,w_2,d,\theta$ are  powers of 2.
Let $\mathcal{R}$ be the set of weighted boxes obtained by running \CA{}$(x,y,n,w_1,w_2,d,\theta)$ with $c_1 > 640$.
Then 
\begin{enumerate}
\item Every $(I\times J,\epsilon)\in \mathcal{R}$ is correctly certified, i.e., $\ed(x_{I},y_{J})\le \epsilon$, and 
\item For every source-sink path $\tau$ in $G=G_{x,y}$ of normalized cost at most $\theta$, in a run that satisfies successful sampling for $\tau$, there is a subset of $\mathcal{R}$
that $(45,15\theta)$-approximates $\tau$. 
\end{enumerate}
\end{theorem}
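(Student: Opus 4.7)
\textbf{Plan for proving Theorem~\ref{thm:pla}.} Part (1) is essentially bookkeeping. The only boxes added to $\mathcal{R}_D$ arise in DSR when a pivot $I$ is declared dense: then $\mathcal{X}$ consists of $I'$ with $\ed(x_I,x_{I'})\le 2\epsilon$ (verified by $\TE$) and $\mathcal{Y}$ consists of $J'$ with $\ed(x_I,y_{J'})\le 3\epsilon$, so the triangle inequality gives $\ed(x_{I'},y_{J'})\le 5\epsilon$, matching the claimed bound. The only boxes added to $\mathcal{R}_E$ arise in SSES after $p=\TE(x_{I'},y_{J'},3\epsilon)<\infty$ is computed exactly, and every replicated copy has distance bound $\ge p$, so they are all correctly certified. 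Thus every box in $\mathcal{R}$ is valid.

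For Part (2), fix a min-cost path $\tau$ with $\cost(\tau)\le \theta n$; then $\tau$ lies within $\theta n/2=2m$ of the main diagonal. The plan is to produce, for each $I'\in \mathcal{I}_{w_2}(\{0,\ldots,n\})$, a $(k,\zeta)$-approximating subsequence of $\mathcal{R}$ for the subpath $\tau_{I'}$, and then glue them via Proposition~\ref{prop:approx decomp}. For each such $I'$ I choose the outer-loop index $k(I')=\lfloor \min(I')/m\rfloor$; since $w_2\le m$ and $\theta n=4m$, the strip $I'\times(\text{vertical band around diagonal})$ sits inside the box $I_G\times J_G$ processed in iteration $k(I')$, so all relevant samples and edit-distance computations made by CA are visible to $\tau_{I'}$.

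The heart of the proof is a formal version of Claim~\ref{claim:main overview} applied to each such $I'$: either the dense iterations give a $w_1$-box sequence adequately approximating $\tau_{I'}$, or the sparse iteration emits a single $w_2$-box doing the job. To set it up, for each $I''\in\mathcal{I}_{w_1}(I')$, invoke Proposition~\ref{prop:align-box}(2) with $\delta=\epsilon_i/8$ to obtain a $\tau$-match $J^\tau(I'')$ with $u(I''):=2\ncost(\tau_{I''})+\epsilon_i/8$ as an upper bound on $\ncost(I''\times J^\tau(I''))$. Let $t(I'')$ be the largest $i$ with $\epsilon_i\ge u(I'')$ and $I''\notin\mathcal{S}_i$ (well-defined because $\mathcal{S}_0=\emptyset$), set $b(I'')=\epsilon_{t(I'')}$, and let $\mathcal{W}_i(I')=\{I''\in \mathcal{S}_i(I'):\ed(I'',J^\tau(I''))\le \epsilon_i\}$. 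When the $b$-boxes give an adequate sum, i.e.\ $\sum_{I''}5b(I'')\le c\sum_{I''}u(I'')$, the family $\{(I''\times J^\tau(I''),5b(I''))\}\subseteq\mathcal{R}_D$ directly verifies the three clauses of Definition~\ref{def:approx} for $\tau_{I'}$. When this fails, the three inequalities in the overview (a non-beacon interval satisfying $2\Delta(I'')>\epsilon_{m(I'')}$, the corresponding $\Gamma$ bound, and the beacon iteration bound) force $|\mathcal{W}_i(I')|\ge|\mathcal{S}_i(I')-\mathcal{W}_i(I')|/32$ for some $i$. By successful sampling, SSES's sample $\mathcal{H}$ then contains a winner $I''\in \mathcal{W}_i(I')$, and the adjusted diagonal extension of $I''\times J^\tau(I'')$ to $I'$ is certified in $\mathcal{R}_E$; Proposition~\ref{prop:extension} (with $\delta=\epsilon_i/8$, $w'=w_1$, $w=w_2$) shows this single $w_2$-box $(1-\epsilon_{i})$-covers $\tau_{I'}$ with normalized cost bounded by a constant multiple of $\ncost(\tau_{I'})+\theta$, and the $k=\lceil\log 1/\epsilon_i\rceil$ replica in SSES supplies a weight in the range required by Definition~\ref{def:approx}.

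The main obstacle I anticipate is the careful constant-tracking across three intertwined bookkeeping devices: the per-iteration resolution $\epsilon_i/8$ that replaces $\theta$ in the overview, the additive $\theta+2^{-k}$ inflation in SSES's replicated outputs (needed to certify distance bounds that simultaneously exceed the path's actual cost and meet the lower-bound demand of the min-cost path phase), and the $w_1/w_2$ attenuation from Proposition~\ref{prop:extension}. A secondary subtlety is verifying that for the chosen $k(I')$ the $\delta$-aligned $y$-candidates in $J_G$ cover all the $\tau$-matches supplied by Proposition~\ref{prop:align-box}; this reduces to the parameter inequalities $w_1\le \theta w_2$ and $w_2\le m$ assumed in the hypothesis. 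Once these are in place, summing the three constants obtained in the two cases (each bounded by $3$ for the cover factor, $5$ for dense certification, and $3+2w_1/w_2$ for diagonal extension) and combining with Proposition~\ref{prop:approx decomp} will yield the claimed $(45,15\theta)$-approximation of $\tau$.
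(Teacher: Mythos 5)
Your overall architecture matches the paper's proof: part (1) by the triangle inequality for $\mathcal{R}_D$ and exactness of \TE{} for $\mathcal{R}_E$, and part (2) by reducing to each $I'\in\mathcal{I}_{w_2}$ via Proposition~\ref{prop:approx decomp} and then running the dense/sparse dichotomy of the overview. However, there is a genuine gap in your sparse branch, caused by taking the winner set to be $\mathcal{W}_i(I')=\{I''\in\mathcal{S}_i(I'):\ed(I'',J^{\tau}(I''))\le\epsilon_i\}$, as in the informal overview. With this weak definition, sampling a winner does \emph{not} guarantee that SSES certifies anything useful: the adjusted diagonal extension $I'\times\tilde J$ has normalized cost up to $3\kappa+2\epsilon_i w_1/w_2$ where $\kappa=\ncost(\tau_{I'})$, and if $\kappa>\epsilon_i$ (the subpath is cheap inside $I''$ but expensive elsewhere in $I'$) the call $\TE(x_{I'},y_{\tilde J},3\epsilon_i)$ may return $\infty$, so no box enters $\mathcal{R}_E$; likewise your claim that the extension $(1-\epsilon_i)$-covers $\tau_{I'}$ and has weight within a constant multiple of $\ncost(\tau_{I'})+\theta$ does not follow. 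The paper's proof of Claim~\ref{claim:main} repairs exactly this by defining winners through the iteration-independent threshold $a(I'')=\epsilon_{s(I'')}$ with $\epsilon_{s(I'')}\ge 3\kappa_{I''}+\kappa+\theta$ (so $\epsilon_i\ge\kappa+\theta$ for any winner), which is what makes the $3\epsilon_i$ test pass, makes Proposition~\ref{prop:extension} give the needed cover, and yields the $(5,4\theta)$ bound; the same strengthened threshold then has to be carried through the counting argument (via $a(I'')\le 2(3\kappa_{I''}+\kappa+\theta)$) to obtain inequality $\sum_{I''}5b(I'')\le\frac{45}{8}\sum_{I''}a(I'')$ and hence the $(45,15\theta)$ constants. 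Your plan does not contain this strengthening, and without it the dichotomy breaks.

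Three smaller slips would also need fixing. First, your choice of outer-loop index $k(I')=\lfloor\min(I')/m\rfloor$ does not ensure $\tau_{I'}$ stays inside $J_G$: the path may descend up to $\theta n/2=2m$ below $\min(I')$, so one must take the largest $k$ with $km\le\min(I')-\theta n/2$ as in Claim~\ref{prop:pIa-diag-cover}. Second, your quantity $u(I'')=2\ncost(\tau_{I''})+\epsilon_i/8$ depends on the iteration $i$, so "the largest $i$ with $\epsilon_i\ge u(I'')$" is circular as stated; the paper avoids this by basing $a(I'')$ on $3\kappa_{I''}+\kappa+\theta$ and letting the $\tau$-match $J_i^{\tau}(I'')$ vary with $i$ (Proposition~\ref{prop:align-box} with $\delta=\epsilon_i/8$). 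Third, in the sparse case the replica to invoke from SSES should be the one with $2^{-k}\in[\kappa,2\kappa)$ (i.e.\ $k=\lfloor\log 1/\kappa\rfloor$), not $k=\lceil\log 1/\epsilon_i\rceil$: choosing the increment proportional to $\epsilon_i$ can inflate the weight beyond what condition (3) of Definition~\ref{def:approx} tolerates, since $\epsilon_i$ may be much larger than $\kappa$.
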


\begin{proof}
All boxes output are correctly certified:
Each box in $\mathcal{R}_E$ comes from \SSES{} which
only certifies boxes with at least their exact edit distance.
For $(I\times J,\epsilon)\in \mathcal{R}_D$, there must be an $I'$ such that
$\ed(x_{I'},y_{J})\le \frac{3}{5} \cdot \epsilon$ and $\ed(x_{I'},x_{I})\le \frac{2}{5} \cdot \epsilon$ and so by triangle inequality $\ed(x_{I},y_{J})\le \epsilon$.

It remains to establish (2). Fix a source-sink path $\tau$ of
normalized cost $\le \theta$.
By Proposition~\ref{prop:approx decomp} it is enough to show that for each $I' \in \mathcal{I}_{w_2}$, $\mathcal{R}$ contains a box sequence that $(45,15\theta)$-approximates $\tau_{I'}$.  So we
fix $I' \in \mathcal{I}_{w_2}$.

The main loop (on $k$) of \CA{} processes $G$ in overlapping boxes.  
Since $\ncost(\tau) \leq \theta$, one of these boxes, which we'll call
$I \times J$, must contain $\tau_{I'}$: 

\begin{claim}\label{prop:pIa-diag-cover}
Let $I'\in \mathcal{I}_{w_2}$. There exist intervals $I,J \subseteq \N$, $I=J$
that are enumerated in the main loop of \CA{} such that $I' \subseteq I$ and $\tau_{I'}$ crosses $G(I' \times J)$.
\end{claim}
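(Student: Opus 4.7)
The plan is to exhibit an explicit index $k$ in the main loop of CA so that the box $I \times J$ with $I = J = \{km, km+1, \dots, (k+8)m\}$ simultaneously contains $I'$ horizontally and the entire $y$-range of $\tau_{I'}$ vertically. By definition this then says $\tau_{I'}$ crosses $G(I' \times J)$.

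First I would invoke the deviation bound already recorded in the simplifying assumptions of Section~\ref{sec:intro}: since $\cost(\tau) \le \theta n$, every vertex $(i,j)$ on $\tau$ satisfies $|i-j| \le \theta n/2 = 2m$. Writing $I' = \{a, a+1, \dots, a+w_2\}$, the start vertex $(a,j_s)$ and the end vertex $(a+w_2,j_e)$ of $\tau_{I'}$ therefore satisfy $j_s \in [a-2m, a+2m]$ and $j_e \in [a+w_2-2m, a+w_2+2m]$, and by monotonicity of $\tau$ every intermediate vertex of $\tau_{I'}$ has $y$-coordinate in $[j_s, j_e]$. Using $w_2 \le \theta n/4 = m$ from the hypothesis of Theorem~\ref{thm:pla}, the union of the horizontal range $\{a,\dots,a+w_2\}$ and the vertical range $\{j_s,\dots,j_e\}$ sits inside an interval of length at most $w_2 + 4m \le 5m$.

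Next I would select $k$ so that $[km,(k+8)m]$ covers this length-$5m$ interval. The main loop enumerates $4/\theta + 1$ boxes of width $8m$ whose consecutive left endpoints differ by only $m$, so every subinterval of $[0,n]$ of length at most $7m$ lies in some $[km,(k+8)m]$ with $k \in \{0,\dots,4/\theta\}$. Concretely, I would take $k = \max(0, \lfloor a/m \rfloor - 2)$ (capped at $4/\theta$ near the right boundary) and verify the four inequalities $km \le a$, $km \le j_s$, $(k+8)m \ge a+w_2$, $(k+8)m \ge j_e$ directly from the bounds above together with the trivial constraints $j_s \ge 0$ and $j_e \le n$.

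The only genuine obstacle is the bookkeeping at the two ends of the diagonal: when $a < 2m$ or when $a$ is within $O(m)$ of $n$, the raw expression $\lfloor a/m\rfloor - 2$ may be outside $\{0,\dots,4/\theta\}$, so one must fall back on $k = 0$ or $k$ near $4/\theta$ and invoke the hard envelope $\tau \subseteq [0,n]^2$ to absorb the missing slack. Once a valid $k$ is fixed, $I' \subseteq I$ is immediate and the discussion above places every vertex of $\tau_{I'}$ inside $I' \times J$ with horizontal projection equal to $I'$, which is exactly the condition for $\tau_{I'}$ to cross $G(I' \times J)$.
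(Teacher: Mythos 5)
Your proposal is correct and takes essentially the same route as the paper: both use the observation that $\cost(\tau)\le \theta n$ confines $\tau$ to within $\theta n/2=2m$ of the main diagonal, and then choose $k\approx\lfloor \min(I')/m\rfloor-2$ so that the width-$8m$ box $\{km,\dots,(k+8)m\}$ contains $I'$ horizontally and the vertical extent of $\tau_{I'}$ (the paper phrases this as the largest $k$ with $km\le \min(I')-\theta n/2$). Your explicit treatment of the boundary cases and of the bound $w_2\le m$ is just a more careful spelling-out of the same argument.
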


\begin{proof}
Since $\tau$ is of cost at most $\theta$, it cannot use more than $\theta n/2$ horizontal edges as for each horizontal edge
of cost 1, it must use one vertical edge of cost 1. Similarly for vertical edges. So $\tau$ is confined to diagonals $\{-\theta n/2,\dots,0,\dots,\theta n/2\}$
of $G$. By the choice of $m$ in \CA{}, there will be $I$ and $J$ considered in the main loop of the algorithm
such that $I' \subseteq I$ and $\tau_{I'}$ crosses $G(I \times J)$. In particular, $I=J=\{km,km+1,\dots,(k+8)m\}$, where 
$k \in \{0,\dots,\frac{4}{\theta}-8\}$ is the largest integer such that $km \le \min (I') - \frac{\theta n}{2}$ or 0 if $\min (I') < \frac{\theta n}{2}$,
has the desired property. 
\end{proof}

Let $I,J$ be as provided by the claim. Let $\mathcal{I}'$ be the $w_1$-decomposition of $I'$.  We will show one of the following must hold:
(1) $\mathcal{R}_D$ contains a sequence of certified $w_1$-boxes that $(45,15\theta)$-approximates $\tau_{I'}$, or (2)
There is a single certified $w_2$-box in $\mathcal{R}_E$ that $(45,15\theta)$-approximates $\tau_{I'}$.

Let $t=\log\frac{1}{\theta}$. For $i=t,\dots,0$, let $\epsilon_i = 2^{-i}$ and let
$\mathcal{S}_i$ be the set $\mathcal{S}$ obtained at the iteration $i$
of  \CA{}$(x,y,n,w_1,w_2,d,\theta)$.

We note:

\begin{claim}\label{claim:pIa-dense-box}
Let $i \in \{0,\dots,\log 1/\theta\}$. Suppose $I'' \in \mathcal{I}_{w_1}(I)$ and $J'' \subseteq J$ is $\epsilon_i/8$-aligned.
If $I'' \not\in \mathcal{S}_i$ and $\ncost(I'' \times J'') \leq \epsilon_i$ then $(I'' \times J'',5 \epsilon_i) \in \mathcal{R}_D$.
\end{claim}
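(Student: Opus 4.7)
The plan is to trace the fate of $I''$ inside the specific call $\mathrm{DSR}(G(I\times J),n,w_1,d/\epsilon_i,\epsilon_i/8,\epsilon_i)$ that CA invokes in iteration $i$ of the main loop for the pair $(I,J)$ produced by Claim~\ref{prop:pIa-diag-cover}. Since $I''\in\mathcal{I}_{w_1}(I)$, the interval $I''$ is a member of the initial set $\mathcal{T}$ in this call. DSR empties $\mathcal{T}$ before returning, and the hypothesis $I''\notin\mathcal{S}_i$ rules out removal via the ``if'' branch, so $I''$ must leave $\mathcal{T}$ inside the ``else'' branch, placed in some set $\mathcal{X}$ computed when some pivot $\tilde I$ was declared dense. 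By the definition of $\mathcal{X}$, $\TE(x_{\tilde I},x_{I''},2\epsilon_i)<\infty$, hence $\ed(x_{\tilde I},x_{I''})\le 2\epsilon_i$. The sub-case $\tilde I=I''$ (i.e., $I''$ itself selected as a pivot and declared dense) is allowed: then $\ed(x_{\tilde I},x_{I''})=0$ and $I''$ still enters $\mathcal{X}$ via the condition $\TE(x_I,x_I,2\epsilon)=0<\infty$.

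Next I would verify that $J''$ lies in the set $\mathcal{B}$ maintained by DSR. By definition $\mathcal{B}$ consists of all width-$w_1$ subintervals of $J$ whose endpoints are multiples of $(\epsilon_i/8)w_1$; the claim hypothesizes precisely that $J''\subseteq J$ is an $\epsilon_i/8$-aligned interval of width $w_1$, so $J''\in\mathcal{B}$. Combining the hypothesis $\ed(x_{I''},y_{J''})\le\epsilon_i$ (which is what $\cost(I''\times J'')\le\epsilon_i$ gives after normalization) with the triangle inequality yields
\[
\ed(x_{\tilde I},y_{J''})\;\le\;\ed(x_{\tilde I},x_{I''})+\ed(x_{I''},y_{J''})\;\le\;2\epsilon_i+\epsilon_i\;=\;3\epsilon_i,
\]
so $\TE(x_{\tilde I},y_{J''},3\epsilon_i)<\infty$ and therefore $J''\in\mathcal{Y}$ in the round of DSR driven by the pivot $\tilde I$. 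Since $I''\in\mathcal{X}$ in that same round, the pair $(I''\times J'',5\epsilon_i)$ is added to the set $\mathcal{R}$ returned by this invocation of DSR, which CA subsequently merges into $\mathcal{R}_D$, establishing the claim.

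I do not anticipate any real obstacle: the whole argument amounts to matching the definitions inside DSR against the hypotheses of the claim and applying a single triangle inequality. The only points requiring care are (a) handling the degenerate sub-case $\tilde I=I''$ uniformly, and (b) confirming that $J''$ satisfies the precise alignment condition imposed on $\mathcal{B}$, both of which fall out directly from the algorithm's parameter setting $\delta=\epsilon_i/8$ and the choice $w=w_1$ in the DSR call from CA.
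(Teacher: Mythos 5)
Your proposal is correct and follows essentially the same route as the paper's own proof: since $I''\notin\mathcal{S}_i$ it must be swept into the set $\mathcal{X}$ of some dense pivot $\tilde I$ with $\ed(x_{\tilde I},x_{I''})\le 2\epsilon_i$, and the triangle inequality puts $J''\in\mathcal{Y}$, so $(I''\times J'',5\epsilon_i)$ is certified. Your extra checks (the degenerate pivot case $\tilde I=I''$ and the membership $J''\in\mathcal{B}$ via $\epsilon_i/8$-alignment) are details the paper leaves implicit but do not change the argument.
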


\begin{proof}
If $I'' \not\in \mathcal{S}_i$ then in the call to
\DSR{}$(G(I \times J),n,w_1,d/\epsilon_i,\epsilon_i/8,\epsilon_i)$ 
there is an iteration of the main
loop, where  the selected interval $\tilde{I}$ from $\mathcal{T}$ is declared
dense and $\ed(x_{\tilde{I}},x_{I''}) \le 2 \epsilon_i$.
Since $\ed(x_{I''},y_{J''}) \le \epsilon_i$, $\ed(x_{\tilde{I}},y_{J''}) \le 3 \epsilon_i$ and so
$I'' \in \mathcal{X}$ and $J'' \in \mathcal{Y}$.
Thus, \DSR{} certifies $(I'' \times J'',5 \epsilon_i)$, which is added to $\mathcal{R}_D$.
\end{proof}

The theorem follows from:
\begin{claim}\label{claim:main}
For the interval $I'\in \mathcal{I}_{w_2}$, assuming successful sampling for $\tau$,
$\mathcal{R}_E$ or $\mathcal{R}_D$ contains a $(45,15\theta)$-approximation of $\tau_{I'}$.
\end{claim}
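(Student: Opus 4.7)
The plan is to follow the case analysis sketched in Section~\ref{sec:overview}. Fix $I'\in\mathcal{I}_{w_2}$, let $\mathcal{I}'=\mathcal{I}_{w_1}(I')$, and let $I\times J$ be the enveloping box produced by Claim~\ref{prop:pIa-diag-cover}. For every $I''\in\mathcal{I}'$ and every iteration index $i\in\{0,\ldots,\lceil\log 1/\theta\rceil\}$ I would apply Proposition~\ref{prop:align-box}(2) with $\delta=\epsilon_i/8$ to designate a canonical $\epsilon_i/8$-aligned $y$-interval $J_i^\tau(I'')\subseteq J$ of width $w_1$ that $(1-\ncost(\tau_{I''})-\epsilon_i/8)$-covers $\tau_{I''}$ and satisfies $\ed(x_{I''},y_{J_i^\tau(I'')})\le 2\ncost(\tau_{I''})+\epsilon_i/8$. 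Set the uniform upper bound $u(I'')=3\ncost(\tau_{I''})+\theta$, chosen so that $\epsilon_i\ge u(I'')$ forces $\ed(x_{I''},y_{J_i^\tau(I'')})\le\epsilon_i$, and let $\mathcal{W}_i(I')=\{I''\in\mathcal{S}_i(I'):\ed(x_{I''},y_{J_i^\tau(I'')})\le\epsilon_i\}$ be the set of winners at iteration $i$.

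The analysis then splits on whether some iteration $i$ satisfies the sparse-case sampling condition $|\mathcal{W}_i(I')|\ge\tfrac{1}{32}|\mathcal{S}_i(I')-\mathcal{W}_i(I')|$ together with $\mathcal{W}_i(I')\neq\emptyset$. In Case~A (such an $i$ exists), successful sampling (Definition~\ref{def:sampling}) guarantees that the sample $\mathcal{H}$ drawn inside SSES at iteration $i$ contains some winner $I^*$. SSES then computes the adjusted diagonal extension $I'\times J'$ of $I^*\times J_i^\tau(I^*)$ within $I'\times J_G$. Proposition~\ref{prop:extension}, applied with $\epsilon=\ncost(\tau_{I'})$ and $\delta=\ncost(\tau_{I^*})+\epsilon_i/8$, yields $\ncost(I'\times J')\le 3\ncost(\tau_{I'})+2(\ncost(\tau_{I^*})+\epsilon_i/8)w_1/w_2$ together with the $(1-\alpha)$-covering property for $\alpha=\ncost(\tau_{I'})+(\ncost(\tau_{I^*})+\epsilon_i/8)w_1/w_2$. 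Using $w_1\le\theta w_2$ and $\ncost(\tau_{I^*})\le 1$, both quantities are $O(\ncost(\tau_{I'})+\theta)$. Choosing $k\in\{0,\ldots,\log n\}$ with $2^{-k}\in[\alpha,2\alpha]$, the recorded weight $p+\theta+2^{-k}$ simultaneously majorizes the covering parameter (condition~(2) of Definition~\ref{def:approx}) and is bounded by $45\ncost(\tau_{I'})+15\theta$ (condition~(3)), giving the required single-box approximation.

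In Case~B (the sparse-case condition fails at every $i$, so $32|\mathcal{W}_i(I')|<|\mathcal{S}_i(I')-\mathcal{W}_i(I')|$), I claim $\mathcal{R}_D$ already contains an approximating sequence. For each $I''\in\mathcal{I}'$ define $t(I'')$ as the largest index $i$ with $\epsilon_i\ge u(I'')$ and $I''\notin\mathcal{S}_i$; this is well defined because the assumption $d\le\theta n/w_1$ forces every interval to be declared dense at $\epsilon_0=1$, so $\mathcal{S}_0=\emptyset$. Let $b(I'')=\epsilon_{t(I'')}$. By Claim~\ref{claim:pIa-dense-box} the weighted box $(I''\times J_{t(I'')}^\tau(I''),5b(I''))$ lies in $\mathcal{R}_D$, and since $\ncost(\tau_{I''})+\epsilon_{t(I'')}/8\le u(I'')/2+b(I'')/8\le(5/8)b(I'')$, it $(1-5b(I''))$-covers $\tau_{I''}$. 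It remains to verify
\[
\sum_{I''\in\mathcal{I}'}5b(I'')\mu(I'')\le(45\ncost(\tau_{I'})+15\theta)\mu(I').
\]
I would multiply the Case~B failure inequality by $\epsilon_i$, sum over $i$, and swap the order of summation. For each $I''$, the inner right-hand sum ranges over $i$ with $I''\in\mathcal{S}_i(I')-\mathcal{W}_i(I')$, which forces $\epsilon_i<2\ncost(\tau_{I''})+\epsilon_i/8$, so $\epsilon_i=O(\ncost(\tau_{I''}))$ and a geometric-series bound gives $O(\ncost(\tau_{I''}))$ for this inner sum. The inner left-hand sum contributes at least $b(I'')/2$ whenever $u(I'')<b(I'')/2$: at index $t(I'')+1$ one has $\epsilon_{t(I'')+1}\ge u(I'')$ and $I''\in\mathcal{S}_{t(I'')+1}$, hence $I''\in\mathcal{W}_{t(I'')+1}(I')$ by the very choice of $u$. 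In the complementary regime $b(I'')\le 2u(I'')$ follows directly. Combining these yields $\sum_{I''}b(I'')\le c\sum_{I''}u(I'')$ for an absolute constant $c$, and Proposition~\ref{prop:avg-cost} converts $\sum_{I''}\ncost(\tau_{I''})\mu(I'')$ into at most $\ncost(\tau_{I'})\mu(I')$, completing the bound with the stated constants.

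The main obstacle I anticipate is the tight bookkeeping of constants so that the final approximation factor is at most $(45,15\theta)$ rather than an amorphous $O(1)$: the factor~$5$ in the weight $5\epsilon_i$ from DSR, the $\tfrac{1}{32}$ threshold in the successful-sampling guarantee, the $3\epsilon$ threshold passed to $\TE$ inside SSES, and the doubling in the geometric sums all compound. A secondary subtlety is that the canonical match $J_i^\tau(I'')$ depends on $i$, so the dense-case certification invokes the $t(I'')$-th match while the winner test in Case~A uses the $i$-th; this is benign because the alignment grids nest appropriately, but it needs explicit verification. A final delicate point is the replication of each SSES box at weights $p+\theta+2^{-k}$, which is precisely what enables Case~A to satisfy conditions~(2) and~(3) of Definition~\ref{def:approx} simultaneously without losing a super-constant factor.
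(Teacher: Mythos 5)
Your proof follows the same overall route as the paper's (the winner/loser counting dichotomy between a dense-case box sequence in $\mathcal{R}_D$ and a single sampled diagonal extension in $\mathcal{R}_E$, the quantities $t(I'')$, $b(I'')$, Claim~\ref{claim:pIa-dense-box}, the $\tfrac{1}{32}$ threshold and the geometric-series bounds), but there is a genuine gap in your Case~A, caused by dropping the term $\ncost(\tau_{I'})$ from the threshold defining $u(I'')$ and from the notion of winner. In the paper's proof, $a(I'')$ is defined through the threshold $3\kappa_{I''}+\kappa+\theta$ with $\kappa=\ncost(\tau_{I'})$, and an element of $\mathcal{S}_i(I')$ is a winner only when $\epsilon_i\ge a(I'')$, hence in particular $\epsilon_i\ge\kappa+\theta$. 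That is exactly what makes a sampled winner useful: Proposition~\ref{prop:extension} then bounds the normalized cost of the adjusted diagonal extension by $3\kappa+2\epsilon_i w_1/w_2\le 3\kappa+2\theta\le 3\epsilon_i$, so the call $\TE(x_{I'},y_{J'},3\epsilon_i)$ inside SSES returns a finite value and a box is actually placed in $\mathcal{R}_E$; it also gives $\kappa_{I^*}\le\epsilon_i/3$, which you invoke implicitly when asserting $\ncost(\tau_{I^*})\le 1$ (that bound does not follow from your winner condition: a box $I^*\times J_i^\tau(I^*)$ can have tiny edit distance while the subpath $\tau_{I^*}$, and a fortiori $\tau_{I'}$, is expensive).

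With your definitions (winner iff $\ed(x_{I''},y_{J_i^\tau(I'')})\le\epsilon_i$, and $u(I'')=3\kappa_{I''}+\theta$), Case~A can have a true hypothesis and a false conclusion. Take an $I'$ on which $\tau$ is very cheap on a few $w_1$-subintervals that are sparse at some small scale $\epsilon_i$, but expensive on the rest of $I'$, so that $\kappa\gg\epsilon_i$; at that scale those subintervals can constitute all of $\mathcal{S}_i(I')$ and all be winners in your sense, so your ratio condition holds and the sample contains such an $I^*$. But the extension $I'\times J'$ has cost governed by $\kappa$, the threshold-$3\epsilon_i$ test fails ($p=\infty$), SSES adds nothing for $I'$, and your Case~B hypothesis also fails for this $i$, so your dichotomy no longer covers this configuration (your ``$O(\ncost(\tau_{I'})+\theta)$'' estimate conceals precisely the needed comparison with $3\epsilon_i$). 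The repair is the paper's bookkeeping: fold $\kappa+\theta$ into the threshold, so that winners exist only at scales $\epsilon_i\ge\kappa+\theta$ and the extension is certifiably cheap relative to $\epsilon_i$, and carry the extra $\kappa+\theta$ through Case~B, where $a(I'')\le 2(3\kappa_{I''}+\kappa+\theta)$ summed over $\mathcal{I}'$ together with Proposition~\ref{prop:avg-cost} still yields the stated $(45,15\theta)$ constants.
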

The proof is similar to that of Claim~\ref{claim:main overview},
with adjustments  for some technicalities. 

\begin{proof}
Let $\tau'=\tau_{I'}$ and $\kappa=\ncost(\tau')$.
Let $\mathcal{I}'=\mathcal{I}_{w_1}(I')$. For $I'' \in \mathcal{I}'$, let $\kappa_{I''}=\ncost(\tau_{I''})$. By Proposition \ref{prop:align-box}, for all $I''\in \mathcal{I}'$ and $\epsilon_i \ge \kappa_{I''}$ there is an $\epsilon_i/8$-aligned vertical interval $J_i^{\tau}(I'') \subseteq J$, such that $\ncost(I''\times J_i^{\tau}(I'')) \leq 2\kappa_{I''}+\epsilon_i/8$ and $I'' \times J_i^{\tau}(I'')$  $(1-\kappa_{I''}-\epsilon_i/8)$-covers $\tau_{I'}$.

Let $s(I'')$ be the largest integer such that 
$\epsilon_{s(I'')} \ge 3 \kappa_{I''} + \kappa + \theta $. Let $t(I'') \le s(I'')$ be the largest integer such that 
$I'' \not\in \mathcal{S}_{t(I'')}$. (Since $\theta n/w_1 \ge d$, $\mathcal{S}_0 = \emptyset$, so $t(I'')$ is well-defined.) Let $a(I'')=\epsilon_{s(I'')}$ (this
plays a similar role to $u(I'')$ in Section~\ref{sec:overview}) and $b(I'')=\epsilon_{t(I'')}$.

For all $\epsilon_i \in [a(I''),b(I'')]$,  $\ncost(I''\times J_i^{\tau}(I'')) \leq \epsilon_i$ and $I'' \times J_i^{\tau}(I'')$ $(1-\epsilon_i)$-covers $\tau'$. By the definition of $b(I'')$ and Claim \ref{claim:pIa-dense-box}, $\mathcal{R}_D$ contains the certified box $(I'' \times J_{t(I'')}^{\tau}(I''),5b_{I''})$. So $\mathcal{R}_D$ contains a $(45,15\theta)$-approximation of $\tau'$ if it happens that:
\begin{equation}
\label{cond dense cover}
\sum_{I''\in \mathcal{I}'}5 b(I'') \le \frac{45}{8} \sum_{I''\in \mathcal{I}'}a(I'').
\end{equation}
Indeed, since $a(I'')=\epsilon_{s(I'')} \le 2(3\kappa_{I''}+\kappa+\theta)$, $\sum_{I''\in \mathcal{I}'} \mu(I'') = \mu(I')$, $\sum_{I''\in \mathcal{I}'} \kappa_{I''} \mu(I'') \le \kappa \mu(I')$ (by Proposition~\ref{prop:avg-cost}), under Equation~\ref{cond dense cover} being true:
\begin{equation*}
\sum_{I''\in \mathcal{I}'}5 b(I'') \mu(I'') \le \frac{45}{8} \sum_{I''\in \mathcal{I}'} (6\kappa_{I''}+2\kappa+2\theta) \mu(I'') \le (45 \kappa + 15\theta) \mu(I').
\end{equation*}

Next we determine a sufficient condition that $\mathcal{R}_E$ contains a box sequence (consisting of a single box)
 that $(5,4\theta)$-approximates $\tau'$.  We will show that if this condition does not hold then Equation~\ref{cond dense cover} is true.
Let $\mathcal{S}_i(I')=\mathcal{S}_i \cap \mathcal{I}'$. Interval $I'' \in \mathcal{S}_i(I')$  is a {\em winner for iteration $i$} if $\epsilon_i \ge a(I'')$.  
This set of winners is denoted by $\mathcal{W}_i(I')$.  It suffices that during iteration $i$, the
set of $c_1\log^2n$ samples taken in 
\SSES{} includes a winner $I''$; then since
$\ed(I'',J_i^{\tau}(I'')) \leq \epsilon_i$,  the (adjusted) diagonal extension $I' \times \tilde{J}$ of $I'' \times J_i^{\tau}(I'')$ will be certified.
By Proposition \ref{prop:extension}, $I' \times \tilde{J}$ has normalized cost at most $3\kappa + 2 \epsilon_i w_1 / w_2 \le 3 \kappa + 2\theta \le 3 \epsilon_i $
and it $(1-(\kappa + \theta))$-covers $\tau'$. If $\kappa=0$ then  $(I' \times \tilde{J}, \ncost(I'\times \tilde{J}) + \theta + 2^{- \log n } )$ is in  $\mathcal{R}_E$ by the behavior of \SSES{} and it $(5,4\theta)$-approximates $\tau'$. 
Otherwise $\kappa \ge 1/n$; so set $k=\lfloor \log 1/\kappa \rfloor$. Thus, $k \le \log n$ and $2^{-k} \in [\kappa,2\kappa)$. Then $(I'\times \tilde{J}, \ncost(I'\times \tilde{J}) + \theta + 2^{-k} )$ is in $\mathcal{R}_E$ and it $(5,4\theta)$-approximates $\tau'$.

Under successful sampling of extensions for $\tau$ if
$|\mathcal{W}_i(I')| \ge \frac{1}{64} |\mathcal{S}_i(I')|$, at least one interval from $\mathcal{W}_i(I')$ will be included in our $c_1 \log^2 n$ samples during \SSES{} and 
$\mathcal{R}_E$ will contain a $(5,4\theta)$-approximation of $\tau'$ as above. 
So suppose this condition is not met, i.e., for all $i$, $|\mathcal{W}_i(I')| < \frac{1}{64} |\mathcal{S}_i(I')|$. This implies:

\begin{equation}
\label{eqtn dense}
\text{For all $i$, } |\mathcal{W}_i(I')| < \frac{1}{32} |\mathcal{S}_i(I')-\mathcal{W}_i(I')|.
\end{equation}
 We show that  this in turn implies (\ref{cond dense cover}). Multiplying (\ref{eqtn dense}) by $\epsilon_i$ 
and summing on $i$ yields:
 \begin{equation}
 \label{eqtn winner loser}
 \sum_{I'' \in \mathcal{I}'} \sum_{i:I''\in \mathcal{W}_i(I')} \epsilon_i < \frac{1}{32} \sum_{I'' \in \mathcal{I}'} \sum_{i:I''\in \mathcal{S}_i(I') - \mathcal{W}_i(I')} \epsilon_i .
 \end{equation}
$I'' \in \mathcal{S}_i(I')-\mathcal{W}_i(I')$ implies $\epsilon_i < a(I'')$. Summing the geometric series: 
\begin{equation}
\label{eqtn series sum}
\sum_{i:I''\in \mathcal{S}_i(I') - \mathcal{W}_i(I')} \epsilon_i \le 2 a(I'').
\end{equation}

Either $a(I'')=b(I'')$ or $a(I'') < b(I'')$. If the latter, then 
$I'' \in \mathcal{W}_i(I')$ for $\epsilon_i = b(I'')/2$. So:
\begin{align*}
\sum_{I'' \in \mathcal{I}'}b(I'') &\le \sum_{I''} \Big( a(I'') + \sum_{i : I'' \in \mathcal{W}_i(I')}2 \epsilon_i \Big)\\
& < \sum_{I''} \Big( a(I'') +  \frac{1}{16}  \sum_{i : I'' \in \mathcal{S}_i(I') - \mathcal{W}_i(I')} \epsilon_i \Big)\\
& \le \frac{9}{8} \sum_{I''\in \mathcal{I}'}a(I'')
\end{align*}
which implies Equation~\ref{cond dense cover}. (The second inequality follows from (\ref{eqtn winner loser}) and the last inequality from (\ref{eqtn series sum}).)
\end{proof}

\end{proof}

\subsection{Time complexity of \CA{}}
We write $t(w,\epsilon)$ for the time of $\TE(z_1,z_2,\epsilon)$ on
strings of length $w$. We assume $t(w,\epsilon) \ge w$, and that for $k \geq 1$, 
there is a constant $c(k)$ such that for all $\epsilon \in [0,1]$ and all $w>1$,
$t(w, k\epsilon) \le c(k) \cdot t(w,\epsilon) + c(k)$. As mentioned earlier, by~\cite{UKK85}, we can use $t(w,\epsilon)= O(w^2\epsilon)$. 
\begin{theorem}\label{thm:time-pIa}
Let $n$ be a sufficiently large power of 2 and $\theta \in [1/n,1]$
be a power of 2. Let $x,y$ be strings of length $n$.
Let $\log n\le w_1\le w_2 \le \theta n/4$, $1\le d\le n$ be powers of 2, where $w_1 | w_2$ and $w_2 |n$, and $w_1/w_2 \le \theta$. The size of the set $\mathcal{R}$ output by \CA{} is
$O((\frac{n}{w_1})^2 \log^2 n)$ and in 
any run that satisfies successful classification sampling, \CA{} runs
in time:
\begin{align*}
   O\Bigg( 
     &|\mathcal{R}|+\sum_{\substack{i= \log 1/\theta ,\dots,0 \\ \epsilon = 2^{-i}}} \Big(\frac{\theta n^2 \log n}{d \epsilon w_1^2} \cdot t(w_1,\epsilon) +
      \frac{\theta n^2 \log^2 n}{w_1 w_2 \epsilon}  \cdot t(w_1,\epsilon)
         + \frac{n d \log^2 n}{w_2 \epsilon} \cdot t(w_2,\epsilon)\Big)
     \Bigg).
\end{align*}
\end{theorem}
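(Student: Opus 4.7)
The plan is to prove both the cardinality bound $|\mathcal{R}| = O((n/w_1)^2 \log^2 n)$ and the runtime bound by a per-iteration analysis, decomposing $\mathcal{R} = \mathcal{R}_D \cup \mathcal{R}_E$ where $\mathcal{R}_D$ collects boxes emitted during dense-pivot rounds of DSR and $\mathcal{R}_E$ those emitted during SSES. I fix an outer index $k$ (over $O(1/\theta)$ overlapping diagonal strips) and an inner index $i$ with $\epsilon_i = 2^{-i}$ (over $O(\log 1/\theta)$ thresholds), bound the contribution of a single invocation of DSR and SSES at these parameters, and combine using $\sum_{i=0}^{\log 1/\theta} 1/\epsilon_i = O(1/\theta)$, the hypotheses $d \le \theta n/w_1$, $w_1 \le \theta w_2$, and the scaling $t(w, k\epsilon) = O(t(w, \epsilon))$ of the edit-distance subroutine.

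For $|\mathcal{R}_D|$ I exploit that $\mathcal{T}$ shrinks by $\mathcal{X}_j$ whenever $I_j$ is used as a pivot, so the $\mathcal{X}_j$ are pairwise disjoint with $\sum_j |\mathcal{X}_j| \le |\mathcal{I}_{w_1}(I_G)| = O(\theta n/w_1)$; hence one call to DSR produces at most $\sum_j |\mathcal{X}_j| |\mathcal{Y}_j| \le |\mathcal{B}| \cdot O(\theta n/w_1) = O(\theta^2 n^2 / (\epsilon_i w_1^2))$ boxes. For $|\mathcal{R}_E|$ successful sampling forces every $I \in \mathcal{H} \subseteq \mathcal{S}$ to be sparse, so at most $d/\epsilon_i$ candidates $J \in \mathcal{B}$ pass the width-$w_1$ $\TE$-test, each such success producing $O(\log n)$ certified boxes. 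Multiplying by $|\mathcal{H}| = O(\log^2 n)$ and by $|\mathcal{I}_{w_2}(I_G)| = O(\theta n/w_2)$, then summing over outer $k$ and inner $i$ and substituting $d \le \theta n/w_1$, $w_2 \ge w_1$ yields the claimed cardinality bound.

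For the runtime I identify the three summands explicitly. The second and third come from SSES: the second counts the $|\mathcal{H}| \cdot |\mathcal{B}|$ width-$w_1$ $\TE$-calls per $w_2$-interval $I'$, and the third counts the at-most $|\mathcal{H}| \cdot (d/\epsilon_i)$ width-$w_2$ $\TE$-calls for the diagonal extensions (one per successful sparsity-bounded test). The first summand absorbs all DSR work: the Chernoff sampling at each candidate pivot, which uses $|\mathcal{T}| \cdot O(|\mathcal{B}| \epsilon_i \log n/d)$ width-$w_1$ tests, plus the exhaustive evaluations $\TE(x_I, y_J, 3\epsilon_i)$ for $J \in \mathcal{B}$ and $\TE(x_I, x_{I'}, 2\epsilon_i)$ for $I' \in \mathcal{T}$ performed at each actual pivot. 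The crucial calculation is to bound the number of pivots $P$ per DSR invocation by $O(\theta n/(d w_1))$: two pivots $I, I'$ satisfy $\ed(x_I, x_{I'}) > 2\epsilon_i$ (else the later one would lie in the former's $\mathcal{X}$), so by the triangle inequality their $\epsilon_i$-neighborhoods in $\mathcal{B}$ are disjoint, and each pivot—declared dense and hence not $(d/(4\epsilon_i), \epsilon_i/8, \epsilon_i)$-sparse—contains at least $d/(4\epsilon_i)$ such close neighbors, so $P \cdot d/(4\epsilon_i) \le |\mathcal{B}| = O(\theta n/(w_1 \epsilon_i))$.

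The main obstacle is precisely this pivot-count cap; without it the "exhaustive evaluation" phase of DSR would cost $\Omega(|\mathcal{T}|(|\mathcal{T}| + |\mathcal{B}|)t(w_1, \epsilon_i))$ and destroy sub-quadratic runtime. Once $P = O(\theta n/(dw_1))$ is established, the remainder is mechanical bookkeeping: multiply all per-iteration costs by the $O(1/\theta)$ outer iterations, substitute the sizes $|\mathcal{T}| = O(\theta n/w_1)$, $|\mathcal{B}| = O(\theta n/(w_1 \epsilon_i))$, and $|\mathcal{I}_{w_2}(I_G)| = O(\theta n/w_2)$, apply the scaling property of $t(w, \cdot)$ to absorb the constant-factor relaxations $2\epsilon_i$ and $3\epsilon_i$, and check that each of the three stated summands upper-bounds its corresponding group of operations term by term.
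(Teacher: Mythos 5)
Your runtime analysis is correct and follows essentially the same route as the paper's proof: the same three-summand decomposition (DSR sampling plus pivot processing; SSES width-$w_1$ tests; SSES width-$w_2$ diagonal-extension tests), the same cap of $O(\theta n/(d w_1))$ on the number of pivots via pairwise edit distance $>2\epsilon_i$, disjointness of the $\epsilon_i$-neighborhoods in $\mathcal{B}$, and the successful-sampling guarantee that a declared-dense pivot has more than $d/(4\epsilon_i)$ close $y$-candidates, and the same final substitution of $|\mathcal{T}|$, $|\mathcal{B}|$, $|\mathcal{I}_{w_2}(I_G)|$ and the scaling of $t(w,\cdot)$.

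There is, however, a gap in your derivation of the cardinality bound $|\mathcal{R}|=O((n/w_1)^2\log^2 n)$. For $\mathcal{R}_E$ you count per sample, multiplying the per-interval contribution by $|\mathcal{H}|=\Theta(\log^2 n)$ and by the $O(\log n)$ replication, and then you substitute $d\le \theta n/w_1$. Two problems: first, $d\le\theta n/w_1$ is not among the hypotheses of this theorem (it appears only in the correctness theorem, Theorem~\ref{thm:pla}); second, even granting it, your count gives $O((n/w_1)^2\log^3 n)$, which is a $\log n$ factor weaker than the stated bound. The paper avoids both issues by counting distinct boxes rather than sampling events: within a fixed $(k,i)$, every box certified by SSES is (one of the $O(\log n)$ replicas of) the diagonal extension of some $w_1$-candidate box $I''\times J''$ with $I''\in\mathcal{I}_{w_1}(I_G)$ and $J''\in\mathcal{B}$, and each such candidate box determines at most one diagonal extension no matter how often $I''$ is sampled; since there are $O\big(\tfrac{\theta n}{w_1}\cdot\tfrac{\theta n}{\epsilon_i w_1}\big)$ candidate boxes, the $(k,i)$-contribution to $\mathcal{R}_D\cup\mathcal{R}_E$ is $O\big(\tfrac{\theta^2 n^2\log n}{w_1^2\epsilon_i}\big)$, and summing the geometric series over $i$ (down to $\epsilon_i=\theta$) and over the $O(1/\theta)$ values of $k$ gives $O((n/w_1)^2\log n)$, with no assumption on $d$ and no $\log^2 n$ sampling factor. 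With that replacement your argument establishes the theorem as stated.
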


\begin{proof}
To bound $|\mathcal{R}|$ note that for each choice of $k,i$
in the outer and inner loops of \CA{}, the set of candidate boxes of width $w_1$  has size 
 $O(\frac{\theta n}{w_1}\frac{\theta n}{w_1\epsilon_i})$.  This upper bounds
 the number of boxes certified by \DSR{}.  The call to \SSES{} constructs at most
one diagonal extension for each such candidate box,
and each diagonal extension gives rise to at most $O(\log n)$ certified boxes.
Thus, for each $(k,i)$ there are $O(\frac{\theta^2 n^2 \log n}{(w_1)^2\epsilon_i})$ certified boxes.  Summing the geometric series over $i$, noting that $\min(\epsilon_i)=\theta$,
and summing over $O(1/\theta)$ values of $k$ gives the required bound on 
$|\mathcal{R}|$.

The steps in the algorithm that actually construct certified boxes cost $O(1)$ per box giving the first
term in the time bound.  

We next bound the other contributions to the running time.
The outer loop of  \CA{} has $\frac{4}{\theta}  -7$ iterations on $k$'s.
The inner loop has $1+\log \frac{1}{\theta}$ iterations on $i$. Each
iteration invokes \DSR{} and \SSES{} on $I \times J$ with $I$ and $J$ of width at most $4 \theta n$.  

We bound the time of a call to \DSR{}. To distinguish between
local variables of \DSR{}
and  global variables of \CA{}, we denote local input variables as $\hat{G},\hat{n},\hat{w},\hat{d},\hat{\delta},\hat{\epsilon}$.  
For $\mathcal{B}$ and $\mathcal{T}$ as in \DSR{}, $|\mathcal{T}| \le |\mathcal{B}| \leq \frac{\mu(I_{\hat{G}})}{\hat{\delta}\hat{w}}$, since $\mu(I_{\hat{G}})=\mu(J_{\hat{G}})$. The main while
loop of \DSR{} repeatedly picks intervals $I \in \mathcal{T}$ and samples $c_0|\mathcal{B}|\frac{\log \hat{n}}{\hat{d}}\leq \frac{c_0 \mu(I_{\hat{G}})\log \hat{n}}{\hat{d}\hat{\delta}\hat{w}}$ vertical intervals $J$ and tests whether $\ed(x_I,y_J) \leq \hat{\epsilon}$.
Each such test takes time $t(\hat{w},\hat{\epsilon})$. This is done at most once for
each of the $\mu(I_{\hat{G}})/\hat{w}$ horizontal candidates for a total time of
$O(\frac{\mu(I_{\hat{G}})^2 \log \hat{n}}{\hat{w}^2\hat{\delta}\hat{d}})t(\hat{w},\hat{\epsilon})$.  We next bound the cost of processing a pivot $I$.
This requires testing $\ed(x_I,y_J) \leq 3\hat{\epsilon}$ for $J \in \mathcal{B}$ and $\ed(x_I,x_{I'}) \leq 2\hat{\epsilon}$
for $I' \in \mathcal{T}$.  Each test costs $O(t(\hat{w},\hat{\epsilon}))$ (by our assumption
on $t(\cdot,\cdot)$), and since $|\mathcal{T}| \leq |\mathcal{B}|=\frac{\mu(I_{\hat{G}})}{\hat{w}\hat{\delta}}$,
$I$ is processed in time  $O(\frac{\mu(I_{\hat{G}})}{\hat{w}\hat{\delta}}t(\hat{w},\hat{\epsilon}))$.
This is multiplied by the number of intervals declared dense, which we now upper bound.  
If $I$ is declared dense then at the end of processing $I$, $\mathcal{X}$ is removed from $\mathcal{T}$.
This ensures $\ed(I,I') > 2\epsilon$ for any two intervals $I,I'$ declared dense. 
By the triangle inequality the sets 
$\mathcal{B}(I) = \{J \in \mathcal{B};\; \ed(x_{I},y_J) \le \epsilon\}$
are disjoint for different pivots.  By successful classification sampling, for each pivot $I$,
$|\mathcal{B}(I)| \geq \frac{\hat{d}}{4}$, and thus at most 
$|\mathcal{B}|/(\hat{d}/4) = \frac{4\mu(I_{\hat{G}})}{\hat{d}\hat{\delta}\hat{w}}$ intervals are
declared dense, so all intervals declared dense are processed in time
$O(\frac{\mu(I_{\hat{G}})^2}{\hat{w}^2 \hat{d}\hat{\delta}^2})t(\hat{w},\hat{\epsilon})$.

The time for dense/sparse classification of intervals
and for processing intervals declared dense is at most
$O(\frac{\mu(I_{\hat{G}})^2\log \hat{n}}{\hat{w}^2 \hat{d}\hat{\delta}^2})t(\hat{w},\hat{\epsilon})$.
During iteration
$i$ of the inner loop of \CA{}, the local variables of \DSR{}
are set as $\hat{n}=n$, $\mu(I_{\hat{G}})\leq 4\theta n$, $\hat{w}=w_1$, $\hat{d}=d/\epsilon_i$,
$\hat{\delta}=\epsilon_i/8$. Substituting these parameters yields
time $O(\frac{\theta^2n^2\log n}{(w_1)^2 d \epsilon_i})t(w_1,\epsilon_i)$.  Multiplying 
by the $O(1/\theta)$ iterations on $k$ gives the first summand of the
theorem.

Next we turn to \SSES{}.  The local input variables 
$n,w_1,w_2,\mathcal{S},\theta$ are
set to their global values so we denote them without $\hat{\quad}$.  The other
local input variables are denoted as $\hat{G},\hat{d},\hat{\delta}, \hat{\epsilon}$.
The local variable $\mathcal{B}$ has size $\frac{\mu(I_{\hat{G}})}{\hat{\delta}w_1}$. By successful classification sampling,
we assume that on every call,
 every interval in $\mathcal{S}$ is $(\hat{d},\hat{\delta},\hat{\epsilon})$-
sparse.
The outer loop enumerates the  $\mu(I_{\hat{G}})/w_2$ intervals $I'$
of  $\mathcal{I}_{w_2}(I_{\hat{G}})$. 
We select $\mathcal{H}$ to be $c_1 \log^2 n$ random subsets from subsets of  $I'$
belonging to $\mathcal{S}$.  For each $I \in \mathcal{H}$
and $J \in \mathcal{B}$, we call $\TE(x_I,y_J,\hat{\epsilon})$, taking   
time $t(w_1,\hat{\epsilon})$.   The total time of all tests is $O(\frac{\mu(I_{\hat{G}})^2\log^2 n}{\hat{\delta}w_1w_2})t(w_1,\hat{\epsilon})$.
Using  $\hat{d}=d/\epsilon_i$, $\hat{\delta}=\epsilon_i/8$
and $\hat{\epsilon}=\epsilon_i$ from the $i$th call to \SSES{}
gives $O(\frac{\theta^2n^2\log^2 n}{\epsilon_i w_1w_2})t(w_1,\epsilon_i)$. Multiplying 
by the $O(1/\theta)$ iterations on $k$ gives the second summand in  the theorem.

Assuming successful classification sampling, all intervals in the set $\mathcal{S}$ passed 
from \DSR{} to \SSES{} are
$(\hat{d},\hat{\delta},\hat{\epsilon})$-sparse. Therefore, for each sampled $I$, at most $\hat{d}$
intervals $J$ are within $\hat{\epsilon}$ of $I$.  For each of these we do a diagonal
extension of $I \times J$ to a $w_2$-box $I' \times J'$, and call $\TE(x_{I'},y_{J'},3\hat{\epsilon})$
at cost $O(t(w_2,\hat{\epsilon}))$ for each call. The number of such calls is $O(\frac{\mu(I_{\hat{G}})\hat{d} \log^2 n}{w_2})$.
Using the parameter $\hat{d}=d/\epsilon_i$ in
the $i$th call of the inner iteration of \CA{}, we get
a cost of $O(\frac{\theta n d \log^2 n}{\epsilon_i w_2})t(w_2,\epsilon_i)$ and multiplying
by the $O(1/\theta)$ gives the third summand in the theorem.
\end{proof}

Choosing the parameters to minimize the maximum term in the time bound,
subject to the restrictions of the theorem and using $t(w,\epsilon)=O(\epsilon w^2)$
we have:

\begin{corollary}
\label{cor:run-time}
For all sufficient large $n$, and for $\theta \geq n^{-1/5}$ (both powers of 2)  
choosing $w_1$, $w_2$, and $d$ to be the largest powers of two satisfying:
$ w_1 \le \theta^{-2/7} n^{1/7},$
$
w_2 \le \theta^{1/7} n^{3/7}$,
and 
  $d\le \theta^{3/7} n^{2/7}$, with probability at least $1-n^{-1/7}$, \CA{} runs
in time $\tilde{O}(n^{12/7}\theta^{4/7})$, and outputs the set $\mathcal{R}$ of size at most $\tilde{O}(n^{12/7}\theta^{4/7})$.
\end{corollary}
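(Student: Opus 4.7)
The plan is to view the corollary as a routine specialization of Theorem~\ref{thm:time-pIa}: we plug the specified choices of $w_1,w_2,d$ into the three summands of the runtime bound and into the cardinality bound $O((n/w_1)^2\log^2 n)$ for $|\mathcal{R}|$, then verify that (i) all parameter constraints of Theorem~\ref{thm:time-pIa} hold under the hypothesis $\theta\ge n^{-1/5}$, (ii) each summand individually evaluates to $\tilde O(n^{12/7}\theta^{4/7})$ when $t(w,\epsilon)=O(\epsilon w^2)$, and (iii) the probability bound follows from Proposition~\ref{prop:sampling}.

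First I would check the hypotheses. The binding constraint is $w_1/w_2\le\theta$. With the prescribed largest powers of two, $w_1/w_2 = \Theta(\theta^{-2/7} n^{1/7}/(\theta^{1/7} n^{3/7}))=\Theta(\theta^{-3/7} n^{-2/7})$, so $w_1/w_2\le\theta$ reduces to $\theta^{10/7}\ge n^{-2/7}$, i.e.\ $\theta\ge n^{-1/5}$. The remaining constraints ($w_2\le\theta n/4$, $1\le d\le \theta n/w_1$, $w_1\ge\log n$, $n$ large) all follow easily from $\theta\ge n^{-1/5}$; for instance $d w_1=\Theta(\theta^{1/7} n^{3/7})$ and $\theta n=\theta n$ so $dw_1\le\theta n$ is again equivalent to $\theta^{6/7}n^{4/7}\ge 1$.

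Next I would substitute $t(w,\epsilon)=O(\epsilon w^2)$ into the three summands of Theorem~\ref{thm:time-pIa}. Summand 1 collapses to $\tilde O(\theta n^2/d) = \tilde O(\theta^{4/7} n^{12/7})$; summand 2 collapses to $\tilde O(\theta n^2 w_1/w_2) = \tilde O(\theta n^2 \cdot \theta^{-3/7} n^{-2/7}) = \tilde O(\theta^{4/7} n^{12/7})$; summand 3 collapses to $\tilde O(n d w_2) = \tilde O(n\cdot \theta^{3/7} n^{2/7} \cdot \theta^{1/7} n^{3/7}) = \tilde O(\theta^{4/7} n^{12/7})$. The three choices of $w_1,w_2,d$ are exactly the ones that balance these three terms simultaneously, which is the reason for the exponents $-2/7, 1/7, 3/7$; the sum over $k\in\{0,\dots,\log 1/\theta\}$ and the geometric series in $\epsilon$ only add a $\mathrm{polylog}(n)$ factor absorbed into $\tilde O$. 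The bound on $|\mathcal{R}|$ is immediate: $(n/w_1)^2\log^2 n = \tilde O(n^2\cdot \theta^{4/7} n^{-2/7}) = \tilde O(\theta^{4/7} n^{12/7})$.

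Finally, Proposition~\ref{prop:sampling} gives the probability of successful sampling as at least $1-n^{-7}$ (which dominates the $1-n^{-1/7}$ bound in the statement), and Theorem~\ref{thm:time-pIa} gives the runtime and $|\mathcal{R}|$ bounds conditional on successful sampling, so combining completes the proof. There is no real obstacle here; the only subtlety worth highlighting is the derivation of the threshold $\theta\ge n^{-1/5}$, which is forced precisely by the constraint $w_1/w_2\le\theta$ needed by Theorem~\ref{thm:time-pIa} and is exactly the regime in which the diagonal-extension algorithm can be applied at the chosen scales.
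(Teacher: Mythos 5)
Your proposal is correct and follows exactly the route the paper takes: the paper's proof simply declares it ``routine to check'' that the stated choices of $w_1,w_2,d$ satisfy the hypotheses of Theorem~\ref{thm:time-pIa} and that, with $t(w,\epsilon)=O(\epsilon w^2)$, all three summands and the $|\mathcal{R}|$ bound collapse to $\tilde{O}(n^{12/7}\theta^{4/7})$, which is precisely the computation you carry out. Your explicit verification of the constraint $w_1/w_2\le\theta$ (forcing $\theta\ge n^{-1/5}$) and the appeal to Proposition~\ref{prop:sampling} for the probability bound fill in the details the paper omits, with no deviation in approach.
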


\begin{proof}
Use the algorithm of \cite{UKK85} that gives $t(w,\epsilon)=O(\epsilon w^2)$.
It is routine to check that these choices satisfy the requirements of Theorem~\ref{thm:time-pIa},
and also that all three terms in the time analysis, and the number of boxes are all
bounded by the claimed bound.
\end{proof}

\section{Min-cost Paths in Shortcut Graphs}
\label{sec:short-path}

We now describe the second phase of our algorithm, which 
uses the set $\mathcal{R}$ output by \CA{} to upper bound $\editd(x,y)$. 
 A {\em shortcut graph} on vertex set $\{0,\ldots,n\} \times \{0,\ldots, n\}$ 
consists of the  H and V edges of cost 1,
together with an arbitrary collection of {\em shortcut} edges
$(i,j)\to (i',j')$ where $i < i'$ and $j < j'$, also denoted by $e_{I,J}$ where $I=\{i,\ldots,i'\}$ and $J=\{j,\ldots,j'\}$, along with their costs.
A  {\em certified graph} (for $x,y$)
is a shortcut graph where every shortcut edge $e_{I,J}$ has cost at least $\editd(x_I,y_J)$.
The min-cost path from $(0,0)$ to $(n,n)$ in a certified graph
upper bounds $\editd(x,y)$.
The second phase algorithm
uses $\mathcal{R}$ to construct a certified graph, and computes
the min-cost  path to upper bound on $\editd(x,y)$.

A certified box $(I \times J,\kappa)$ corresponds to the
 $e_{I,J}$ with cost $\kappa \mu(I)$.
(In the certified graph we use non-normalized costs.) However, the
certified graph built from $\mathcal{R}$ in this way may not have a  path of cost $O(\editd(x,y)+\theta n)$.  We need a modified
conversion of $(I \times J,\kappa)$. If $\kappa \geq 1/2$ we add no shortcut.
Otherwise $(I \times J,\kappa)$ converts to the edge $e_{I,J'}$ with cost $3\kappa\mu(I)$
where $J'$ is obtained by shrinking $J$: $\min(J')=\min(J)+\ell$ and $\max(J')=\max(J')-\ell$
where $\ell=\lfloor \kappa \mu(I) \rfloor$.  By Proposition~\ref{prop:sym diff}, this is a certified
edge.  Call the resulting graph $\Gt$. We claim:

\begin{lemma}
	\label{thm:reduction}
Let $\tau$ be a source-sink path in $G_{x,y}$.
If $\mathcal{R}$ contains a sequence $\sigma$ that $(k,\theta)$-approximates $\tau$ then there is a source-sink path
$\tau'$ in $\Gt$ that consists of the shortcuts corresponding to $\sigma$
together with some H and V edges with
$\cost_\Gt(\tau') \le 5  ( k \cdot \cost_{G_{x,y}}(\tau)  + \theta n).$
\end{lemma}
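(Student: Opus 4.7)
The plan is to build $\tau'$ by anchoring each sub-piece at the actual endpoints of $\tau$ on the boundary of $I_i$, so that all connecting V-edges go upward, and then to telescope away the "slack" using a signed-deviation accounting.

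First I would set up coordinates. For each $i$, let $v_i^s$ be the $y$-coordinate of the first (resp.\ $v_i^t$ the last) vertex of $\tau_{I_i}$, i.e.\ the start/end of the subpath that crosses $I_i$. Since $\tau$ is monotone and $I_1,\dots,I_\ell$ decompose $\{0,\dots,n\}$, we have $v_i^t\le v_{i+1}^s$, $v_1^s\ge 0$, and $v_\ell^t\le n$. Define signed deviations
\[
\alpha_i = v_i^s-\min J_i,\qquad \beta_i = v_i^t-\max J_i,
\]
each lying in $[-\epsilon_i\mu(I_i),\epsilon_i\mu(I_i)]$ by the $(1-\epsilon_i)$-cover condition. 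For each $i$ with $\epsilon_i\le 1/2$ write $s_i=\min J_i+\ell_i$, $t_i=\max J_i-\ell_i$ for the shortcut endpoints where $\ell_i=\lfloor\epsilon_i\mu(I_i)\rfloor$; the shortcut $e_{I_i,J'_i}$ has cost $3\epsilon_i\mu(I_i)$.

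Next I would describe $\tau'$: starting at $(0,0)$, V-edges up to $(0,v_1^s)$; then for each $i$, a segment piece inside $I_i$, followed by V-edges at $x=\max I_i$ from $(\max I_i,v_i^t)$ to $(\max I_i,v_{i+1}^s)$; finally V-edges to $(n,n)$. The segment piece is either (i) if $\epsilon_i\le 1/2$: V-edges from $(\min I_i,v_i^s)$ up to $(\min I_i,s_i)$, then the shortcut, then V-edges up to $(\max I_i,v_i^t)$; or (ii) if $\epsilon_i>1/2$: H-edges from $(\min I_i,v_i^s)$ across to $(\max I_i,v_i^s)$, then V-edges up to $(\max I_i,v_i^t)$. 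The monotonicity check is the main obstacle to get right; for case (i) it amounts to $v_i^s\le s_i$ and $t_i\le v_i^t$. Using that $v_i^s$ and $v_i^t$ are integers, $v_i^s\le \min J_i+\epsilon_i\mu(I_i)$ gives $v_i^s\le \min J_i+\lfloor\epsilon_i\mu(I_i)\rfloor=s_i$, and symmetrically $v_i^t\ge t_i$.

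Now I would compute the cost. In case (i) the segment contributes $(s_i-v_i^s)+3\epsilon_i\mu(I_i)+(v_i^t-t_i)=3\epsilon_i\mu(I_i)+2\ell_i+\beta_i-\alpha_i$, while in case (ii) it contributes $\mu(I_i)+(v_i^t-v_i^s)=2\mu(I_i)+\beta_i-\alpha_i$ (using $\mu(J_i)=\mu(I_i)$). The inter-segment V-cost sums to
\[
v_1^s+\sum_{i=1}^{\ell-1}(v_{i+1}^s-v_i^t)+(n-v_\ell^t)=n-\sum_i(v_i^t-v_i^s)=\sum_i(\alpha_i-\beta_i),
\]
since $\sum_i\mu(I_i)=n$. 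Summing segment and inter-segment costs, the $\pm(\beta_i-\alpha_i)$ terms cancel, leaving
\[
\cost_{\widetilde G}(\tau')=\sum_{\epsilon_i\le 1/2}\bigl(3\epsilon_i\mu(I_i)+2\ell_i\bigr)+\sum_{\epsilon_i>1/2}2\mu(I_i).
\]

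Finally, bound each summand: $2\ell_i\le 2\epsilon_i\mu(I_i)$ so the first sum is at most $5\sum_{\epsilon_i\le 1/2}\epsilon_i\mu(I_i)$; and $\epsilon_i>1/2$ gives $2\mu(I_i)<4\epsilon_i\mu(I_i)\le 5\epsilon_i\mu(I_i)$. Hence
\[
\cost_{\widetilde G}(\tau')\le 5\sum_{i=1}^\ell \epsilon_i\mu(I_i)\le 5(k\cdot\ncost(\tau)+\theta)\,n=5(k\cdot\cost(\tau)+\theta n),
\]
where the second inequality is the third defining condition of a $(k,\theta)$-approximation. The main obstacle I anticipated — ensuring V-edges can only go up — is handled by the integrality argument in the segment construction; the rest is a careful but essentially mechanical telescoping.
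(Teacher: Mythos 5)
Your construction is correct and is essentially the paper's own proof: route the path through each box's shortcut when $\epsilon_i\le 1/2$, traverse horizontally when $\epsilon_i>1/2$, connect with vertical edges, and bound the total by $5\sum_i\epsilon_i\mu(I_i)\le 5(k\cdot\cost(\tau)+\theta n)$. The only difference is bookkeeping (your signed-deviation telescoping and explicit integrality check for monotonicity versus the paper's global accounting of the vertical cost), which does not change the argument.
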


\begin{proof}
We will modify the path $\tau$ in $G_{x,y}$ to a path $\tau'$ in $\Gt$
of comparable cost.
Let $\{(I_1 \times J_1,\epsilon_1),(I_2 \times J_2,\epsilon_2),\dots, (I_m \times J_m,\epsilon_m)\}$ be the set of certified boxes that $(k,\theta)$-approximates $\tau$. Let $\ell_i= \lfloor \mu(I_i) \cdot \epsilon_i \rfloor$.
Let $L$ be the subset of $[m]$ for which $\epsilon_i \leq 1/2$.  For $i \in L$, let $e_i=e_{I_i,J_i'}$ be the shortcut edge with weight $3\epsilon_i$.  
We claim (1) there is a source-sink path in $\Gt$ that consists of $\{e_i:i \in L\}$ together
with a horizontal path $H_i$ whose projection to the $x$-axis is $I_i$ for each $i \in [m]-L$, and
a collection of (possibly empty) vertical paths $V_0,V_1,\ldots,V_m$ where the $x$-coordinate of $V_i$ for $i >0$ is $\max(I_i)$
and 0 for $V_0$,
and (2) its cost satisfies the bound of the lemma.

For the  first claim, define  for $h \in [m]$, $p_h=(i_h,j_h)$ to be the first point in $\tau_{I_h}$ and define $p_{m+1}=(n,n)$.  We will define
$\tau'$ to pass through all of the $p_h$. In preparation, observe that for $h \in L$, since
 $I_h \times J_h$ $(1-\epsilon_h)$-covers $\tau$, we have $\min(J'_h)=\min(J_h) +\ell_h \geq j_h $ 
and $\max(J'_h) = \max(J)-\ell_h \leq j_{h+1}$. Define the portion $\tau'_h$ between $p_h$ and $p_{h+1}$
by climbing vertically from $p_h$ to $(i_h,\min(J'_h))$ and if $h \in L$ traversing $e_{I_{h},J'_{h}}$ and climbing to $p_{h+1}$
and if $h \not\in L$ then move horizontally from  $(i_h,\min(J'_h))$ to $(i_{h+1},\min(J'_h))$ and then climb to $p_{h+1}$.

For the second claim, we upper bound $\cost(\tau')$.  For $h \in L$, $e_{I_{h},J_h}$ costs $3\ell_h$, and
for $h \not\in L$, the horizontal path that projects to $I_h$ costs $\mu(I_h) \leq 2\ell_h$; the total is at most $\sum_h 3\ell_h$.  
The cost of vertical edges
is $n-\sum_{h \in L} \mu(J_h')=\sum_{h \in L} (\mu(J_h) - \mu(J'_h)) + \sum_{h \not\in L}\mu(J_h) =
\sum_{h \in L} 2\ell_h+\sum_{h \not\in L}\mu(J_h) \leq \sum_h 2\ell_h$, since $\sum_{h}\mu(J_h)=\sum_h \mu(I_h)=n$.  So $\cost(\tau') \leq \sum_h 5\ell_h$. Since
$\sum_{i=1}^m \ell_i \le k \cdot \cost_{G_{x,y}}(\tau) + \theta \cdot n$ by definition of $(k,\theta)$-approximation,
the lemma follows.
\end{proof}

\noindent
{\bf Computing the min-cost.}
We present an $O(n+m\log(mn))$ algorithm to find a min-cost source-sink path in a shortcut graph $\Gt$ with $m$ shortcuts. It's easier to switch to the max-benefit problem: Let $\Ht$ be the same graph with
cost $c_e$ of $e=(i,j)\to (i',j')$ replaced by {\em benefit} $b_e=(i'-i)+(j'-j)-c_e$, (so H and V edges have
benefit 0).
The min-cost path of $\Gt$ is $2n$ minus the max-benefit path of $\Ht$.  
To compute the max-benefit path of $\Ht$,
we use a binary tree data structure with leaves
$\{1,\ldots,n\}$, where each node $v$ stores a number $b_v$, and a collection of lists $L_1$,\ldots,$L_n$, where $L_i$ stores pairs $(e,q(e))$ where the head of $e$ has $x$-coordinate $i$ and $q(e)$ is the max benefit of a path that ends with $e$.

We proceed in $n-1$ rounds. Let the set $A_i$ consist of all the shortcuts whose tail has $x$-coordinate $i$. The preconditions for round $i$ are: (1)  for each leaf $j$, the stored value $b_j$ is the max benefit  path to $(i,j)$ that includes a shortcut 
whose head has $y$-coordinate $j$ (or 0 if there is no such path), (2) for each internal node $v$, $b_v=\max\{b_j:j \text{ is a leaf in the subtree of $v$}\}$,
and (3) for every edge $e=(i',j')\to (i'',j'')$ with $i'<i$, the value $q(e)$
has been computed and $(e,q(e))$ is in list $L_{i''}$.
During round $i$,
for each shortcut $e=(i,j)\to(i',j')$ in $A_i$, $q(e)$ equals
the max of $b_{v}+b_e$ over tree leaves $v$ with  $v \leq j$.  This can be computed in $O(\log n)$ time as max $b_v+b_e$, over $\{j\}$ union
the set of left children of vertices on the root-to-$j$ path
that are not themselves on the path. 
Add $(e,q(e))$ to list $L_{i'}$. After processing $A_i$,
update the binary tree: for each $(e,q(e)) \in L_{i+1}$, let $j$ be the $y$-coordinate
of the head of $e$ and for all vertices $v$ on the root-to-$j$ path, replace $b_v$ by $\max(b_v,q(e))$.
The tree then satisfies the precondition for round $i+1$.
The output of the algorithm is $b_{n}$ at the end of round $n-1$.
It takes $O(n)$ time to set up the data structure, $O(m\log m)$ time to sort the shortcuts, and
$O(\log n)$ processing time per shortcut (computing $q(e)$ and later updating the data structure).

\section{Summing up and speeding up}
\label{sec:sum up}

To summarize,  the algorithm $\UB_{\theta}$ runs CoveringAlgorithm of Section~\ref{sec:covering}, converts the output into a shortcut graph,
and runs the min-cost path algorithm of Section~\ref{sec:short-path}.
By Corollary ~\ref{cor:run-time}, and the quasilinear running time (in the
number of shortcuts) of the min-cost path algorithm, the algorithm $\UB_{\theta}$ runs in time $\tilde{O}(n^{12/7}\theta^{4/7})$.
The construction of the main algorithm {\AED} from {\UB} is standard:

\begin{proof}[Proof of Theorem~\ref{thm:main} from Theorem~\ref{thm:UB}]
Given $\UB_{\theta}$, we construct \AED{}:
Run the aforementioned exact algorithm of \cite{LMS98} with running time $O(n+k^2)$ time on instances of edit distance $k$,
for $O(n+n^{2-2/5})$ time.  If it terminates then it outputs the exact edit distance.
Otherwise, the failure to terminate implies $\editd(x,y) \geq n^{4/5}$.  Now run $\UB_{\theta_j}(x,y)$ for 
$\theta_j=(1/2)^j$ for $j=\{0,\ldots,\frac{\log n}{5}\}$ and output the minimum 
of all upper bounds obtained.  Let $j$ be the largest index with $\theta_j n \geq \editd(x,y)$
(such an index exists since $j=0$ works). The output is at most $\approxfactorB \theta_j n \leq \approxfactorA \editd(x,y)$.
 We run at most $O(\log n)$ iterations, each with running time
$\tilde{O}(n^{2-2/7})$.
\end{proof}

\noindent
{\bf Speeding up the algorithm.}
The running time of {\AED} is dominated by the cost of $\TE($ $z_1,z_2,\epsilon)$
on pairs of strings of length $w \in \{w_1,w_2\}$. We use Ukkonen's algorithm~\cite{UKK85} with $t(w,\epsilon)=O(\epsilon w^2)$.
By replacing Ukkonen's algorithm with $\AED$ we can get a revised algorithm $\AED_1$.
This worsens the approximation factor (roughly multiplying it by the approximation factor of \AED)
but improves running time. 
The internal parameters $w_1,w_2,d$ have to be adjusted to maximize savings. One can iterate this process any constant number of 
times to get faster algorithms with worse (but still constant) approximation factors.
Because of the dependence of the analysis on $\theta$, one does not get a faster edit distance
algorithm for all $\theta \in [0,1]$ but only for $\theta$ close to 1. This approach combined with further ideas is used in \cite{BR20,KS20}.

\paragraph*{Acknowledgements}
We thank the FOCS 2018 committee and especially several anonymous referees
for helpful comments and suggestions.  Michael Saks thanks C. Seshadhri
for insightful discussions on the edit distance problem. The research
leading to these results is partially supported by the Grant Agency of the
Czech Republic under the grant agreement no. 19-27871X, by the
H2020-MSCA-RISE project CoSP-GA no. 823748, and by the European Research
Council under the European Union’s Seventh Framework Programme
(FP/2007-2013)/ERC Grant Agreement no. 616787.
The research is also supported in part by Simons Foundation under grant 332622.

\bibliography{editDistance}


\end{document}